	\newcommand{\GG}[1]{}
	\newtheorem{theorem}{Theorem}
	\newtheorem{lemma}{Lemma} 
	\newtheorem{proposition}{Proposition} 
	\newtheorem*{conjecture*}{Conjecture}
	\newtheorem{assumption}{Assumption}
	\newtheorem*{example*}{Running Example}
	\newtheorem*{algorithm*}{Algorithm}
\def\moverlay{\mathpalette\mov@rlay}
\def\mov@rlay#1#2{\leavevmode\vtop{%
   \baselineskip\z@skip \lineskiplimit-\maxdimen
   \ialign{\hfil$\m@th#1##$\hfil\cr#2\crcr}}}
\newcommand{\charfusion}[3][\mathord]{
    #1{\ifx#1\mathop\vphantom{#2}\fi
        \mathpalette\mov@rlay{#2\cr#3}
      }
    \ifx#1\mathop\expandafter\displaylimits\fi}
\newcommand\independent{\protect\mathpalette{\protect\independenT}{\perp}}
\def\independenT#1#2{\mathrel{\rlap{$#1#2$}\mkern2mu{#1#2}}}
    \DeclareMathOperator*{\argmax}{\arg\!\max}
    \newtheorem*{assumptions*}{\assumptionnumber}
\providecommand{\assumptionnumber}{}
\title{A path-sampling method to partially identify causal effects in instrumental variable models}
\author{Florian F Gunsilius}
\affil{MIT}
\date{\today}
\begin{document}
\maketitle
\begin{abstract}
Partial identification approaches are a flexible and robust alternative to standard point-identification approaches in general instrumental variable models. However, this flexibility comes at the cost of a ``curse of cardinality'':  the number of restrictions on the identified set grows exponentially with the number of points in the support of the endogenous treatment. This article proposes a novel path-sampling approach to this challenge. It is designed for partially identifying causal effects of interest in the most complex models with continuous endogenous treatments. A stochastic process representation allows to seamlessly incorporate assumptions on individual behavior into the model. Some potential applications include dose-response estimation in randomized trials with imperfect compliance, the evaluation of social programs, welfare estimation in demand models, and continuous choice models. As a demonstration, the method provides informative nonparametric bounds on household expenditures under the assumption that expenditure is continuous. The mathematical contribution is an approach to approximately solving infinite dimensional linear programs on path spaces via sampling.
\end{abstract}

\section{Introduction}
In recent years, a trend in the literature in econometrics and in particular causal inference has been to obtain bounds on quantities of interest in general instrumental variable models via linear programming approaches, often in connection with capacity- and random set theory (e.g.~\citet*{balke1994counterfactual}, \citet*{balke1997bounds}, \citet*{manski2007partial},  \citet*{kitamura2018nonparametric}, \citeauthor*{molchanov2005theory} \citeyear{molchanov2005theory}, \citeauthor*{beresteanu2011sharp} \citeyear{beresteanu2011sharp}, \citeauthor*{galichon2011set} \citeyear{galichon2011set}, \citeauthor*{beresteanu2012partial} \citeyear{beresteanu2012partial}, \citeauthor*{molchanov2014applications} \citeyear{molchanov2014applications}, \citeauthor*{chesher2017generalized} \citeyear{chesher2017generalized}, \citeauthor*{russell2019sharp} \citeyear{russell2019sharp}). The arguments put forward in favor of these partial identification approaches are higher flexibility and robustness compared to point-identification approaches (\citet*{chesher2017generalized}, \citet*{manski2003partial}). However, these general partial identification approaches suffer from a severe ``curse of cardinality'' which limits their broader use in general models, as noted in \citet*{beresteanu2012partial}. 

The issue is that the number of restrictions on the identified set grows at least exponentially with the number of points in the support of the endogenous variables. Existing methods are intractable in practical settings with high-cardinal endogenous variables\footnote{We say a random variable is ``of high cardinality'' if it has significantly more than $2$ points in its support, or is continuous. As we show below, even as few as $5$ support points satisfy this definition in the most general instrumental variable setting already.} and have therefore almost exclusively focused on the case of a binary endogenous variable or have made use of the given particular structure of the problem (e.g.~\citet*{aguiar2019prices}, \citet*{chesher2017generalized}, \citet*{chiburis2010semiparametric}, \citet*{cheng2006bounds}, \citet*{demuynck2015bounding}, \citet*{hansen1995econometric}, \citet*{honore2006bounds}, \citet*{honore2006bounds2},  \citet*{manski2007partial}, \citet*{manski2014identification}, \citet*{molinari2008partial}, \citet*{norets2013semiparametric}, \citet*{laffers2015bounding}, \citet*{kamat2017identification}, \citet*{torgovitsky2016nonparametric}, \citet*{kitamura2018nonparametric}, \citet*{mogstad2018using}, \citet*{russell2019sharp}, \citet*{tebaldi2019nonparametric}). A practical and generally applicable method that deals with this curse of cardinality has so far been unavailable. 

This paper introduces such a method. It extends the linear programming approach to partial identification of functionals of interest to high-cardinality settings, even allowing for continuous variables. The basic idea is to represent the instrumental variable model as a system of stochastic processes indexed by the unobservable heterogeneity, based on the potential outcome notation for causal models \citep*{rubin1974estimating}. This representation as stochastic processes is useful, as it opens the door for the application of all the tools from time-series analysis and stochastic calculus to this setting. A general, in the case of continuous endogenous variables infinite dimensional, linear program is then constructed on the paths of these processes, generalizing the linear programming approach for binary causal models in \citet*{balke1997bounds}. This approach does not require structural assumptions, but allows to incorporate them seamlessly into the model by ruling out paths of the stochastic processes: the idea is that each path denotes a hypothetical participant in the model, so that the stochastic framework approach is a simple way to explicitly model the unobserved heterogeneity in terms of human behavior. 

The main contribution of this article is a computational approach to approximately solve these potentially infinite-dimensional linear programs on path spaces. The idea is to sample paths of the stochastic processes and to solve the linear programs on this sample of paths. This introduced randomness is crucial, because it permits the derivation of probabilistic approximation guarantees of theoretical linear program by its sampled counterpart using standard concentration results (\citeauthor*{vapnik1998statistical} \citeyear{vapnik1998statistical}, \citeauthor*{wellner2013weak} \citeyear{wellner2013weak}). This approach could be helpful in many other settings with optimization problems on an infinite dimensional state space. The key is that one can straightforwardly define a measure on the paths of these processes and use the theory of stochastic processes to analyze the properties, especially in continuous models. In a nutshell, we use randomization to alleviate the ``curse of cardinality'' in linear programs on path spaces, which is similar in spirit to using randomization to ``break the curse of dimensionality'' in classical dynamic optimization problems \citep*{rust1997using}. In this article we argue that our approach is an attractive alternative to other sampling approaches such as \citet*{de2004constraint}, because it works on paths of processes instead of sets of inequalities and is therefore more flexible.

As a demonstration of its capabilities, the method estimates bounds on expenditure differences using the 1995/1996 UK family expenditure survey. This problem is well suited for demonstrating the method's capabilities in practice as it is nonlinear with continuous variables and allows to gauge if the program actually obtains reasonable results.\footnote{In particular, the method should produce results which show that food is a necessity- and leisure is a luxury good, as these are well-established economic facts. This application is actually more challenging than a Monte-Carlo approach, as the method needs to replicate known facts on real data under minimal assumptions (see \citeauthor*{advani2019mostly} \citeyear{advani2019mostly} for a recent discussion). A priori, it is not even clear that \emph{any} approach can deliver informative enough bounds to check its validity. The fact that this method does provide informative bounds is a testament to its potential usefulness.} Surprisingly, the method already seems to provide informative nonparametric bounds on household expenditures under the sole assumption that expenditure on goods is continuous with respect to the budget set, corroborating the nonparametric and semi-nonparametric approaches in \citet*{imbens2009identification}, \citet*{blundell2007semi}, \citet*{de2016nonparametric}, and \citet*{song2018nonseparable}, which assume monotonicity or additive separability in the unobservables in the first- or second stage. 

The outline of this article is as follows. Section \ref{intuitionsec} introduces the setting and the optimization problem at an intuitive level, highlighting three potential settings for applications: randomized controlled trials with imperfect compliance, time-varying treatment effects in program evaluation, and welfare estimation. Section \ref{sec:mainsection} is the main section of this article and introduces all theoretical results. It contains introduces the stochastic process representation of the causal model formally (Proposition \ref{myprop} in section \ref{stochasticpathsec}), states the main result of this article which quantifies the probabilistic approximation of the optimization problems by path sampling (Theorem \ref{maintheorem2} in section \ref{mainsubsec}), and derives the large sample asymptotics of the optimization programs (Proposition \ref{largesampleprop} in section \ref{inferencesec}). Section \ref{practicalimp} deals with the practical implementation and introduces the practical algorithm. Section \ref{empiricalsec} contains empirical results: section \ref{simulationsubsec} contains a brief simulation exercise and section \ref{applicationsubsec} contains the application to demand estimation. Section \ref{conclusion} concludes. The appendix contains all proofs.

\section{Intuitive setup of the linear programming approach and examples}\label{intuitionsec}
\subsection{The model considered}
The basic model in this article is the most general form of an instrumental variable model. This means it makes no a priori assumptions on the functional relations between the variables and leaves the unobserved heterogeneity unrestricted. It is
\begin{equation}\label{mainmodel}
\begin{aligned}
Y &= h(X,W)\\
X &=g(Z,W)\qquad Z\independent W,
\end{aligned}
\end{equation}
where, $Y$ is the outcome variable of interest, $X$ is the treatment variable of interest. $X$ is endogenous in the sense that it depends on the unobservable confounder $W$, which also has an influence on $Y$, in addition to the direct effect of $X$ on $Y$. This additional pathway induces a bias when trying to estimate a causal effect of $X$ on $Y$, the classical problem in causal inference and econometrics. A general solution for this is to use an instrumental variable $Z$, which affects the treatment $X$, but is itself exogenous, i.e.~independent of the model. $Z$ satisfies full independence of the unrestricted unobserved heterogeneity $W$, denoted by $Z\independent W$, as the production functions $g$ and $h$ are unrestricted.\footnote{$Z\independent W$ means that $Z$ is independent of $W$, i.e.~$P_{Z,W} = P_ZP_{W}$. This model is often written with two separate unobservable variables $V$ and $U$ in the second- and first stage (e.g.~\citeauthor*{imbens2009identification} \citeyear{imbens2009identification}). This is an equivalent model to \eqref{mainmodel} as one can define the two dependent variables $U$ and $V$ on the same probability space and combine them to one variable $W$.} 

In the causal inference literature, it is common to represent model \eqref{mainmodel} via a directed acyclic graph (\citeauthor*{balke1994counterfactual} \citeyear{balke1994counterfactual}, \citeauthor*{pearl1995causal} \citeyear{pearl1995causal}, \citeauthor*{balke1997bounds} \citeyear{balke1997bounds}) as in Figure \ref{dagmodel}.
\begin{figure}[h!t]
\centering
\begin{tikzpicture}
\draw node at (-3,0) {$\boxed{Z}$};
\draw node at (0,0) {$\boxed{X}$};
\draw node at (3,0) {$\boxed{Y}$};
\draw node[draw=black, circle, anchor=center] at (1.5,2) {$W$};
\draw[->,thick] (-2.7,0) -- (-0.3,0);
\draw[->,thick] (0.3,0) -- (2.7,0);
\draw[->,thick] (1.3,1.6) -- (0,0.3);
\draw[->,thick] (1.7,1.6) -- (3,0.3);
\end{tikzpicture}
\caption{DAG representation of model \eqref{mainmodel}}
\label{dagmodel}
\end{figure}
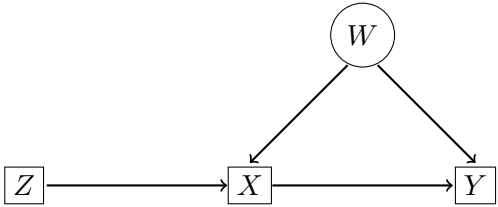

A third way to represent model \eqref{mainmodel} is via the counterfactual notation \citep*{rubin1974estimating}. In this case, the counterfactual (i.e.~unobservable) distribution $P_{Y(x)}$ denotes the distribution of the outcome given that $X$ is exogenous, i.e.~that $X\independent W$. Due to the endogeneity problem $X\not\independent W$ depicted in Figure \ref{dagmodel}, the observable distribution $P_{Y|X=x}$ does not coincide with $P_{Y(x)}$, so that one cannot simply use the observable distribution to obtain causal effects of interest. However, using the information of the instrument $Z$, a linear programming approach allows us to partially identify functionals of interest such as average treatment effects (ATE) or probabilities of counterfactual events without making structural assumptions on the relations between the variables in question. 

\subsection{Outline of the linear programming approach}
The linear programming approach proposed in this article is the natural generalization of \citet*{balke1994counterfactual} and \citet*{balke1997bounds}, which deal with obtaining sharp bounds on a causal effect (such as the average treatment effect) of $X$ on $Y$ when $Y,X,Z$ are binary, to the settings where all $Y$, $X$, and $Z$ are allowed to be continuous.\footnote{See \citet*{russell2019sharp} for an overview of linear programming approaches to identification in instrumental variable models.} It relies on yet another representation of model \eqref{mainmodel}, which we now introduce in an intuitive way and will address formally in Section \ref{sec:mainsection}. 

This representation is based on Rubin's counterfactual notation in the sense that it considers the counterfactual distributions $P_{Y(x)}$ and $P_{X(z)}$ to be the laws of corresponding stochastic processes $Y_x(w)$ and $X_z(w)$ of the first and second stage. Each element $w\in\mathcal{W}$ indexes one path $Y_x(w)$ and $X_z(w)$ of the processes, respectively. Formally, the process $Y_x$ corresponds to $h(x,W)$, and analogous for $X_z$ and $g(z,W)$.
This representation allows one to set up a linear program for obtaining bounds on functionals of interest in model \eqref{mainmodel}: 
\begin{equation}\label{minimizereq}
\begin{aligned}
&\underset{\substack{P_W\in\mathscr{P}^*(\mathcal{W})}}{\min/\max}\quad  E_{P_W}[f(Y_{x},x_0)]\\
\text{s.t.}\thickspace\thickspace &F_{Y,X|Z=z}(y,x) = P_W(Y_{X_z}\leq y,X_z\leq x)
\end{aligned}
\end{equation}
The programs \eqref{minimizereq} are phrased for the general case of potentially continuous $Y$, $X$, and $Z$, anticipating the formal results in Section \ref{sec:mainsection}, but are otherwise perfectly analogous to the optimization program in \citet*{balke1994counterfactual}. They proceed by finding the optimal distribution $P_W$ which maximizes (for an upper bound) or minimizes (for a lower bound) the functional of interest $E_{P_W}[f(Y_{x},x_0)]$ under the restriction that $P_W$ induces processes $Y_x$ and $X_z$ whose induced joint distribution (defined below)
\[F_{[Y,X]^*_z}\coloneqq P_W(Y_{X_z}\leq y,X_z\leq x)\] coincides with the joint observable distribution $F_{Y,X|Z=z}$. The optimization is over some set of probability distributions $\mathscr{P}^*(\mathcal{W})$ over the support $\mathcal{W}$ of $W$. The main contribution of this article is a method to probabilistically approximate the solution to optimization problems of the form \eqref{minimizereq}. We relegate all formal questions about $\mathscr{P}^*(\mathcal{W})$ and the optimization problem more generally to Section \ref{sec:mainsection}. 

The functionals in the objective function of \eqref{minimizereq} in general can be any map $\phi:\mathscr{P}^*(\mathcal{W})\to\mathbb{R}$ from probability distributions on $\mathcal{W}$ to the real line. For the sake of conciseness, and because they are ubiquitous in applied research, we focus on linear functionals such as the ATE, probabilities of counterfactual events, and the average derivative, which make \eqref{minimizereq} linear programs. In this case the objective functions all take the form $E_{P_W}[f(Y_x,x)]$, where $E_{P_W}[\cdot]$ denotes the expectation with respect to $W$ and where $f(Y_x, x)$ is a function that takes in the position $y$ of the stochastic process $Y_x$ at some value $x$ and provides a real number. Some examples are the following.
\begin{enumerate}
\item For the ATE of an exogenous change of $X$ from $x_0$ to $x_1$, $E[Y|X=x_1] - E[Y|X=x_0]$, the objective function takes the form $E_{P_W}[f(Y_x,x_0,x_1)] = E_{P_W}[Y_{x_1} - Y_{x_0}]$. 
\item For the probability of some counterfactual event $A_y\subset\mathcal{Y}$ happening given that $X$ exogenously takes the value $x_0$, $P_{Y(x_0)}(A_y)$, the objective function takes the form $E_{P_W}[f(Y_x,x_0, A_y)]=E_{P_W}[\mathds{1}_{A_y}(Y_{x_0})]$, where $\mathds{1}_A(x)$ denotes the indicator function which is $1$ if $x\in A$ and $0$ otherwise.
\item For the average derivative of the form $E\left[\frac{\partial}{\partial x} f(X,W)\right]$ \citep*{imbens2009identification}, the objective function takes the form $E_{P_W}[f(Y_x,x)] = E_{P_W}\left[\frac{\partial}{\partial x} Y_x\right]$, if the process $Y_x(w)$ is differentiable. 
\end{enumerate}
Note that in all cases, we can straightforwardly replace the conditioning on values $x_0$ or $x_1$ by more general sets $\mathcal{A}_{x_0}$ and $\mathcal{A}_{x_1}$ without changing anything in the results or proofs. For the sake of notation, we always just consider a point $x_0$ instead of a set. 

Let us turn to the constraint of \eqref{minimizereq}. Since all paths $Y_x(w)$ and $X_z(w)$ are indexed by one realization $w$ of the random variable $W$, $P_W$ must weight these paths in such a way that the joint law $F_{[Y,X]^*(z)}(y,x)$ of the counterfactual stochastic processes $Y_x(w)$ and $X_z(w)$ induced by $P_W(Y_x\leq y, X_z\leq x)$ coincides with the observable distribution $F_{Y,X|Z=z}$.\footnote{Formally, $F_{[Y,X]^*(z)}(y,x)$ is the pushforward measure of $P_W$ via the process $[Y,X]^*_z$, see e.g.~\citet*[Definition 3.2]{bauer1996probability} for the definition of a pushforward- or image measure.} The construction of the joint counterfactual process $[Y,X]^*_z$ by the marginal processes $Y_x$ and $X_z$ follows from the structure of model \eqref{mainmodel}. The idea is that for given paths $Y_x(w)$ and $X_z(w)$ indexed by some $w$, 
\[[Y,X]^*_z(w) \coloneqq (Y_{X_z(w)}(w), X_z(w)).\] This means that for each value $z$ in the support $\mathcal{Z}$ of $Z$, the path $Y_{X_z(w)}(w)$ is constructed by composing the path $Y_x(w)$ with $X_z(w)$. Specifically, $Y_x(w)$ depends only on the current position in $\mathcal{X}$ of $X_z(w)$ for the given $z$, but no other properties of the respective path $X_z(w)$. This follows directly from the form of model \eqref{mainmodel}, as $h$ does not depend on $z$, so that intuitively
\[Y_{X_z(w)}(w) = h(g(z,w),w).\] The property that $h$ does not depend on $Z$ is called the \emph{exclusion restriction}. A depiction of the construction of $[Y,X]^*_z$ for two continuous processes $Y_x$ and $X_z$ is provided in Figure \ref{pathfigure}.
\begin{figure}[htb!]
\centering
\includegraphics[width=16cm,height=10cm]{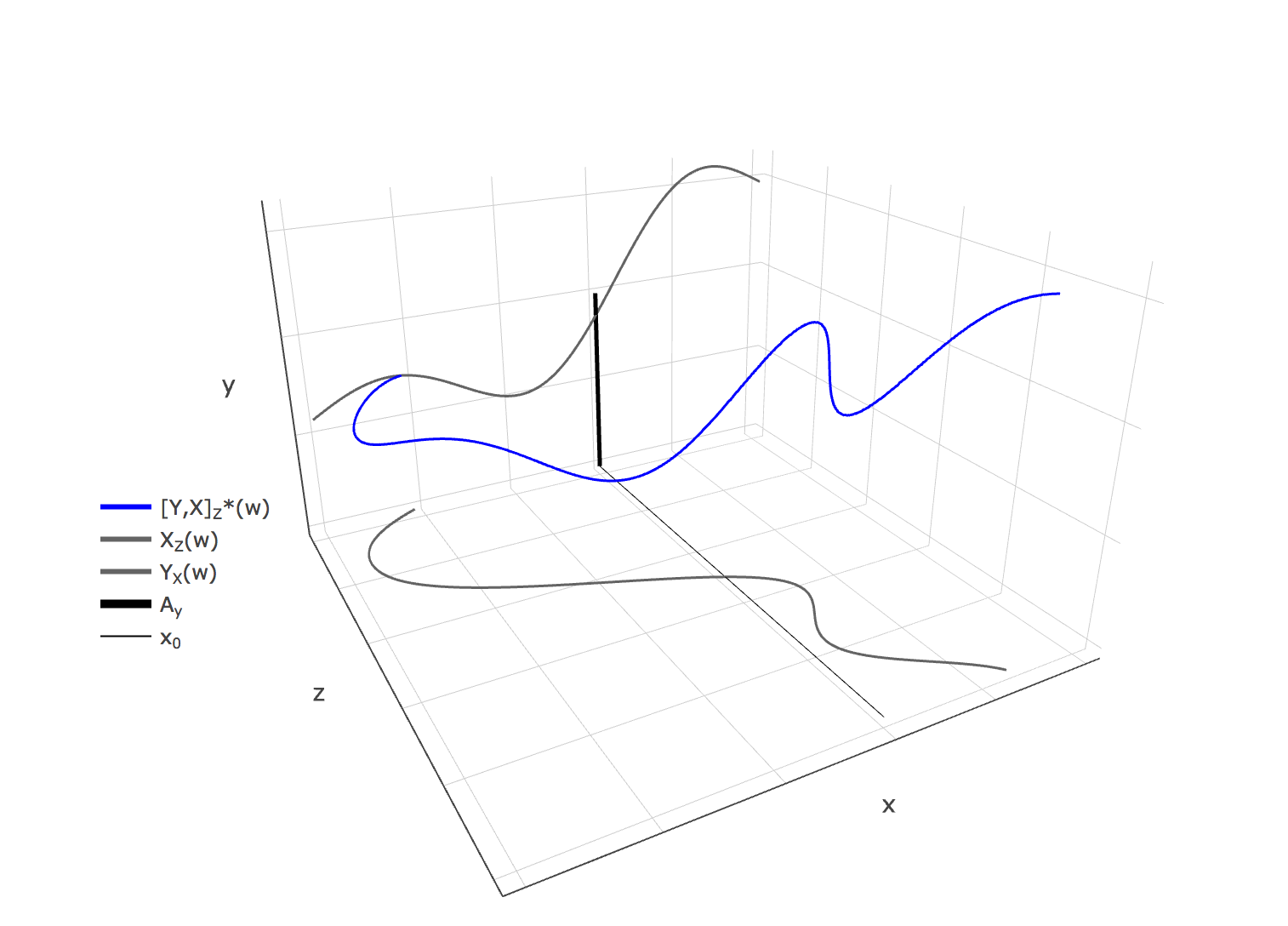}
\caption{Construction of one path $[Y,X]^*_z(w)$ of the joint counterfactual process via marginal paths $Y_x(w)$ and $X_z(w)$.}
\label{pathfigure}
\end{figure}

Figure \ref{pathfigure} also shows how an objective function of the form $E_{P_W}[\mathds{1}_{A_y}(Y_{x_0})]$ is included in this approach. Here, the marginal path $Y_x(w)$ ``goes through'' the event $A_y$ at the point $x_0$. So for the upper bound (i.e.~the maximization of \eqref{minimizereq}), the optimal $P_W$ should put a lot of weight on this path. For the lower bound (i.e.~the minimization), it should not put a lot of weight on the path. 

This representation via $Y_x$ and $X_z$ also allows to conveniently introduce any form of structural assumption into model \eqref{mainmodel} by ruling out certain paths. For instance, if the researcher is willing to assume that $h(\cdot,W)$ is continuous in $X$, then she will only require continuous paths $Y_x(w)$. This means that $P_W$ will only put mass on continuous paths and not ones with jumps. Similarly, one can introduce any other form of functional form restriction into the model. Note in general, that all functional form restrictions need to be made with respect to the observable random variables, never with respect to $W$. The unobservable $W$ is simply an index for the respective paths; this is a main difference to other canonical models in econometrics (e.g.~\citet*{chernozhukov2005iv}, \citet*{imbens2009identification}), which assume continuity and monotonicity of $h(x,\cdot)$ in $W$. 

In this respect, it is also interesting to consider point-identification results from the literature on nonseparable triangular models, as \citet*{imbens2009identification}, \citet*{d2015identification}, or \citet*{torgovitsky2015identification}, which require monotonicity of $h(x,\cdot)$ and $g(z,\cdot)$ in $W$. The idea is that monotonicity makes $h$ and $g$ injective in $v$ and $u$, respectively. In our setting, injectivity of $h$ in $v$ means that the paths of $Y_x(v)$ never intersect, so that for each $(y,x,z)$ there is a unique $w$---this would guarantee that we could point-identify (under some more regularity assumptions) the latent distributions $P_{Y(x)}$ and $P_{X(z)}$. Without monotonicity in the unobservable the joint paths $[Y,X]^*_z$ intersect almost everywhere, so that point-identification is impossible and one has to resort to the partial identification approach using programs \eqref{minimizereq}. This is another intuitive explanation for why monotonicity is such a staple in the literature on point-identification of nonparametric models. 

\paragraph{Intuition in terms of randomized controlled trials with imperfect compliance}\mbox{}\\
The setting of estimating dose-response functions of a treatment through a randomized controlled trial with imperfect compliance is a convenient setting to provide some more intuition of the approach. Let the treatment $X$ be continuous and also assume that the outcome of interest $Y$ is continuous. Then the initially randomly assigned dose $Z$ to each participant is the instrument to $X$. 
With a continuous treatment $X$, the amount of the treatment actually taken by the participants can deviate from the assigned treatment in an infinite number of ways. The experimenter can only work with the overall distribution $P_{Y,X|Z=z}$ since she knows that the observed distributions $P_{Y|X}$ and $P_{X|Z}$ are for an endogenous $X$ due to the imperfect compliance. 

In this setting, the stochastic processes $Y_x$ and $X_z$ have a natural interpretation as the response profiles of hypothetical participants, each being indexed by one element $w\in\mathcal{W}$. A response profile $X_z(w)$ tells the respective hypothetical participant indexed by $w$ which treatment dose $x$ she actually takes for each initially assigned treatment dose $z$. For instance, in analogy to the binary case \citep*{angrist1996identification}, the never taker is the path of the stochastic process which is always zero for each value of $Z$, i.e.~this participant's response profile is to take no treatment $x$ for every assigned dose $z$. The path of the complier would be the $45^\circ$ line, since this participant perfectly takes $x=z$ for all $z$.

Analogously a response profile $Y_x(v)$ ``tells'' a hypothetical participant how much she reacts ($y$) to a certain level of the treatment $x$. In this case the process $Y_x$ models the unobserved heterogeneity in treatment response of each participant. The hypothetical participant who does not respond to the treatment at all has the path $Y_x\equiv 0$ for any value of $x$ for instance. 
Very intuitively, the idea then is to find the relative optimal composition of hypothetical participants (i.e.~the weights $P_W$ on the respective paths of the stochastic processes) that maximizes (for an upper bound) or minimizes (for a lower bound) the relative number of hypothetical participants which on average have the strongest or weakest response to a change of the dose from $x_0$ to $x_1$, subject to the constraint that the composition $P_W$ of the hypothetical participants replicates the joint observable weight $P_{Y,X|Z=z}$. 
This framework is the natural generalization of the distinction into always taker, never taker, complier, and defier from \citet*{angrist1996identification} to the continuous setting, where there is a continuum of response profiles, not just four as in the binary case.

A way to introduce assumptions in this setting is via $P_W$. In particular, one could assume that all hypothetical participants have continuous responses, i.e.~if participant's response profile indexed by $w$ is to take $x_0$ units of the treatment when assigned $z_0$ units, then they will also take $x_0+\delta$ units when assigned $z_0+\delta$ units for some small $\delta>0$. An assumption like this would rule out drop-outs from the trial for instance. Response profiles that allow for drop outs are important in trials where the treatment can have severe side-effects under too high doses, like chemotherapy treatments for instance. One can model this by allowing for processes that jump to zero after a given $z_0$ and stay there for all $z>z_0$. This example should make clear that the stochastic process representation we introduce in this article is convenient for modeling latent behavior in the unobserved heterogeneity.

Let us now turn to examples of applications of model \eqref{mainmodel} and the corresponding programs \eqref{minimizereq} which will also highlight the curse of cardinality when trying to solve \eqref{minimizereq} in practice. We start with the binary analogue to our treatment setting.

\subsection{Examples of applications and the curse of cardinality}

\paragraph{Example 1: Curse of cardinality in treatment estimation}\mbox{}\\
Consider the treatment setting as above, but with a binary treatment $X\in\{0,1\}$, binary outcome $Y\in\{0,1\}$, and binary intent-to-treat $Z\in\{0,1\}$. This is the classical setting, see for instance \citet*{angrist1996identification}, \citet*{balke1997bounds}, \citet*{angrist1996identification}, \citet*{manski2003partial}, \citet*{russell2019sharp}. 

Suppose we are interested in estimating the ATE of a change from $X=0$ to $X=1$. In this simple case, \eqref{minimizereq} reduces to a linear program with $9$ equality constraints and $16$ non-negativity constraints
\[P_{Y,X|Z=z}(y,x) = P_W(Y_{X_z}=y, X_z=x)\] over distributions $P_W$ on the support $\mathcal{W}$ of $W$, which in this case consists of $(2^2)^2=16$ elements (\citet*{balke1994counterfactual} and \citet*{balke1997bounds}). 

The reason for this is as follows. 
In both the first- and second stage there are four possible functions mapping $Z$ to $X$ and $X$ to $Y$, respectively. In the first stage, these paths of processes have been given specific names \citep*{angrist1996identification}:
the never takers, i.e.~the path $X_z=0$ for all $z\in\{0,1\}$; the compliers, i.e.~the path $X_0=0$ and $X_1 = 1$; the defiers, i.e.~the path $X_0=1$ and $X_1=0$; the always takers, i.e.~the path $X_z =1$ for all $z\in\{0,1\}$.
The analogous set-up holds for the second stage with the same four types. $W$ therefore indexes every combination between one of the types in the first-and second stage, which is $16$. The number of equality constraints follows from the fact that the observable distribution $P_{Y,X|Z=z}$ can take $2^3=8$ values: each combination of $Y, X$, and $Z$. One more equality constraint is needed to guarantee that all probabilities sum to $1$, and $16$ inequality constraints are required to require that all proabilities are non-negative. 

Note that one can arrive at the same program via many different routes: random set theory using Artstein's inequality (\citet*{molchanov2005theory}, \citet*{beresteanu2012partial}, \citet*{chesher2017generalized}, \citet*{russell2019sharp}) or optimal transport theory \citep*{galichon2011set}. \citet*{russell2019sharp} gives a clear account of this setting.

As stated, this problem is a simple linear program and is trivial to compute. However, in settings where $Y$, $X$, and $Z$ are allowed to have more than two points in their support, the complexity of the linear program grows at least exponentially. Consider the case where $Y$ can take $q$ different values, $X$ can take $r$ different values, and $Z$ can take $s$ different values. In this case, the linear program will have $q\cdot r\cdot s +1$ equality constraints. Even more drastically, the number of elements in the support of $\mathcal{W}$ is $r^s\cdot q^r$ with the same number of inequality constraints.

This implies that already in a setting where $Y$, $X$, and $Z$ are allowed to take three variables each, $\mathcal{W}$ would consist of $27^2=729$ points, while the number of equality constraints would be $28$, with another $729$ nonnegativity constraints. Already setting up the linear programming problem, either through Arstein's inequality, optimal transport, or the stochastic process method depicted here, will be complicated to do. In the existing literature, \citet*{cheng2006bounds} circumvent this problem in the setting of three-armed randomized controlled trials with imperfect compliance by making monotonicity assumptions which rule out most of the paths, reducing the linear program to a simpler form. 

However, already in the case of where $Y$, $X$, and $Z$ are allowed to take $5$ elements each is it practically impossible to set up the linear program over $(5^5)^2=9,765,625$ elements in $\mathcal{W}$ with the same number of non-negativity constraints and $5^3+1=126$ equality constraints. This means that any existing approach requiring to set up an optimization problem of this form, such as \citet*{balke1994counterfactual}, \citet*{chesher2017generalized}, \citet*{beresteanu2012partial}, \citet*{galichon2011set}, \citet{russell2019sharp} and others cannot be applied in these practical settings, even though they show how to set-up these linear programs in theory. This constitutes what we call the ``curse of cardinality'', and is the main reason for why current practical application of this idea as in \citet*{demuynck2015bounding}, \citet*{laffers2015bounding}, \citet*{mogstad2018using}, or \citet*{chesher2017generalized} focus on the case where $X$ is binary. 

The example with randomized controlled trials and a continuous treatment we introduced above is the extreme case of this, where $Y$, $X$, and $Z$ are continuous. In this case, there currently does not exist an approach to solve the programs \eqref{minimizereq}. In the next section, we introduce a sampling approach that approximates the solutions to the problems \eqref{minimizereq} while giving probabilistic approximation guarantees, even if all $Y$, $X$, and $Z$ are continuous. Before we do this, let us give other examples for potential applications.

\paragraph{Example 2: Time-varying treatments in program evaluation}\mbox{}\\
In general, $Z$ in model \eqref{mainmodel} is a classical instrument. However, the model and the approach presented in this paper are even more broadly applicable. For instance, in models for the evaluation of a training program \citep*{ashenfelter1978estimating} one can consider $Z$ to be time. In this case, solving the programs \eqref{minimizereq} still provides bounds on the causal effect of interest, because the model assumes that the counterfactual distribution $P_{Y(x)}$ does not change over time, i.e.~that the actual causal effect of the training $x$ on the outcome of interest is fixed. This follows from the fact that $Y_x\coloneqq h(x,W)$ does not explicitly depend on $t$.

To make this more tangible, consider the hypothetical example of a rehabilitation program (such as a self-help group, a health program, etc.) that runs over several weeks. The treatment $X$ is the time (e.g.~days) spent in the program, which makes it a continuous treatment. $X$ being continuous introduces the exact curse of cardinality as described in Example 1. $Y$ could be a binary variable $Y\in \{0,1\}$, indicating if the individual has not relapsed $6$ months after the full program ended. Since participation in the program is voluntary, participants can drop out any time; this introduces an endogeneity bias as participants will choose to opt in or out depending on their expected outcome. One instance of this endogeneity problem is the classical Ashenfelter dip \citep*{ashenfelter1984using}, which shows that the outcome of interest changes to the negative right before they opt to partake in a program. 

In this setting, model \eqref{mainmodel} implies that the second stage process $Y_x\coloneqq h(x,W)$ is independent of time, while the participation in the treatment $X$ can vary with time. Time $Z=t$ is never influenced by the model; furthermore, the unobservable heterogeneity $W$ is fixed in time (it indexes the response profiles of the hypothetical participants), so that the independence restriction $Z\independent W$ is satisfied. Using the linear programming approach \eqref{minimizereq}, one can make assumptions on the participants in the first stage $X_t=g(t,W)$, by modeling their response profiles including drop-in and drop-out rates. Since $X$ is the time spent in the program, all paths $X_t(w)$ are by definition restricted to the $45$-degree line on a plot between $X$ and $Z$, the only difference being the drop-in, or drop-out rate. To adequately model these drop-ins and drop-outs, the stochastic processes $X_t$ in this case should allow for jumps. Suppose for the sake of exposition that the program allows participants to drop in or drop out on a continuous basis. In this case, we still have a continuum of paths $X_t(w)$, i.e.~a continuum of elements $w$. Two examples of paths are depicted in Figure \ref{dropfig}. This continuum of paths is the extreme case of the curse of cardinality, just as in Example 1; in particular, no current approach can deal with a model like this without making stronger assumptions.
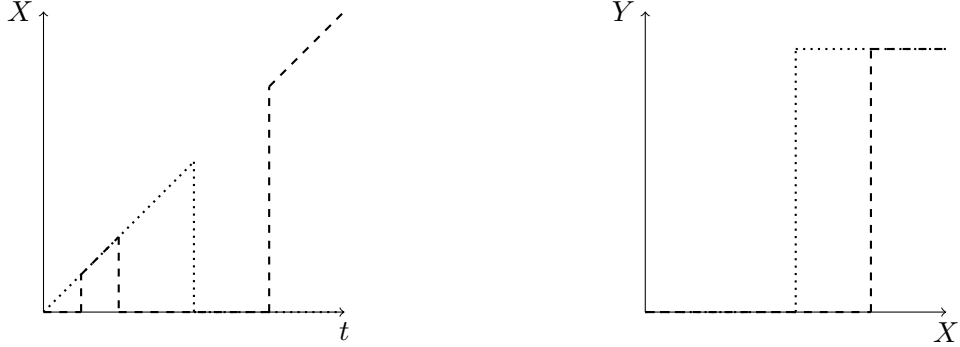
\begin{figure}[h!t]
\centering
\begin{tikzpicture}
\draw[->] (0,0) to (0,4);
\draw[->] (0,0) to (4,0);
\draw node[left] at (0,4) {$X$};
\draw node[below] at (4,0) {$t$};
\draw[-,dotted, thick] (0,0) to (2,2);
\draw[-,dotted, thick] (2,2) to (2,0);
\draw[-,dotted, thick] (2,0) to (4,0);
\draw[-,dashed,thick] (0,0) to (0.5,0);
\draw[-,dashed,thick] (0.5,0) to (0.5,0.5);
\draw[-,dashed, thick] (0.5,0.5) to (1,1);
\draw[-,dashed, thick] (1,1) to (1,0);
\draw[-,dashed, thick] (1,0) to (3,0);
\draw[-,dashed, thick] (3,0) to (3,3);
\draw[-,dashed, thick] (3,3) to (4,4);

\draw[->] (8,0) to (8,4);
\draw[->] (8,0) to (12,0);
\draw node[left] at (8,4) {$Y$};
\draw node[below] at (12,0) {$X$};
\draw[-,dotted, thick] (8,0) to (10,0);
\draw[-,dotted, thick] (10,0) to (10,3.5);
\draw[-,dotted, thick] (10,3.5) to (12,3.5);
\draw[-,dashed, thick] (8,0) to (11,0);
\draw[-,dashed, thick] (11,0) to (11,3.5);
\draw[-,dashed, thick] (11,3.5) to (12,3.5);
\end{tikzpicture}
\caption{Two examples of possible paths for the first stage $X_t(w)$ and the second stage $Y_x(w)$, respectively, in Example 2. First stage (left): The dotted line depicts a hypothetical participant starting the program from the beginning and dropping out half-way. The dashed line depicts a hypothetical participant dropping into the program, dropping out after a short while, then dropping in again at a later stage and staying with the program until the end. Second stage (right): The dotted and dashed lines depict tow response profiles of hypothetical participants that need different time spells in the program to receive the desired outcome. }
\label{dropfig}
\end{figure}

This example also shows how one can introduce assumptions into the model via restricting the paths. For instance, it could be the case that the program does not allow people to drop-in, and requires everyone to start at the same time. In this case paths of hypothetical participants that model drop-ins like the dashed path in Figure \ref{dropfig} should not be part of the model. Another assumption could be that for certain time-periods no drop-ins or drop-outs are allowed, in which case paths with drop-ins or drop-outs during this period would not be part of the model. When solving the programs \eqref{minimizereq}, such an assumption could be included by requiring the optimal $P_W$ to always put weight of zero on these kinds of paths. It is here where the stochastic process framework is so convenient, as the researcher can model individual behavior by restricting the unobserved heterogeneity in very specific ways. Introducing these assumptions is significantly harder to do in the standard representations of the instrumental variable mentioned above.

In order to model the second stage $Y_x$, one could assume stochastic processes $Y_x$ which are zero everywhere but jump to $1$ after a certain length $[x_0,x_1]$, where $x_0$ is the left-most value of the overall interval. This assumption encapsulates the unobserved heterogeneity as different responses to treatment length: participants might need different time lengths to obtain the required outcome. Figure \ref{dropfig} depicts two potential paths. Together those marginal paths introduce joint paths as depicted in Figure \ref{pathfigure}. 

For this model, an objective of interest might be the probability that the participant does not relapse $6$ months after treatment if the participant has continuously stayed in the program for a period $[x_0,x_1]$. The objective function in \eqref{minimizereq} then takes the form 
\[E_{P_W}[f(Y_x,x)]\coloneqq E_{P_W}\left[Y_{x\in [x_0,x_1]}\right],\] where $Y_{x\in [x_0,x_1]}$ means the part of the process $Y_x$ when $x\in[x_0,x_1]$. As we show below, the proposed method below can handle discontinuous objective functions if the discontinuity is of the form $f(P_W;x_0)=E_{P_W}[\mathds{1}_{A_y}(Y_{x_0})]$, where $A_y$ is a Borel subset of the support $\mathcal{Y}$ of $Y$. 
The method introduced in this article can deal with complex models like this, where the treatment is considered to be more intense over time. This is a case of the classical ``dose-response'' estimation. Here, the dose is the number of days in the program.

\paragraph{Example 3: Partial identification of average welfare}\mbox{}\\
Unobserved individual heterogeneity is a major factor contributing to the variability in empirical demand estimation, which is why models with general unobserved heterogeneity are particularly important in this setting, as argued in \citep*{hausman2016individual}. In that article, the authors consider the following model for the demand function
\[q(x,\eta) = \argmax_{q\geq0,a\geq0} U(q,a,\eta)\qquad\text{s.t}\qquad p'q+a\leq y.\] The demand function $q(x,\eta)$ is nonseparable in the unrestricted unobserved heterogeneity $\eta$ and $x\coloneqq(p',y)'$ consisting of the price vector $p$ relative to a numeraire good $a$, and the income level $y$ of the individual relative to the 
same numeraire. $U(q,a,\eta)$ is a (concave in $\eta$) utility function which the authors assume to be increasing in $q$ and $a$. $p$ and $y$, and therefore $x$, are usually continuous variables, which again introduces the extreme curse of cardinality from Example 1 into the problem.

The demand function corresponds to the second stage of model \eqref{mainmodel}. Concavity of $U$ guarantees that $q(x,\eta)$ is an actual function in $\eta$ and not a correspondence, so that each path of the process $q(x,\eta)$ is a function. This is the same setting we consider in this article.\footnote{Our approach can straightforwardly be adapted to allow for correspondence-valued processes without any changes in the overall approach.} 
Allowing for general $\eta$ precludes point-identification without stronger assumptions in general, which is why \citet*{hausman2016individual} provide an informal approach to obtain bounds on the average consumer surplus. 

The approach proposed in this article can be applied to this setting and hence provides the formal justification for the approach in \citet*{hausman2016individual}. Furthermore, we also provide the asymptotic properties of the upper- and lower bound, which can be used to perform inference in their setting. \citet*{hausman2016individual} also assume that $x\independent \eta$ in their application, a strong assumption in practice. Our approach is designed for handling more general models allowing for $X\not\independent\eta$ and instrumenting for $X$. 

The stochastic process approach formalized in this article has another important benefit in this setting. One can introduce any form of assumption into the model (the demand function $q(x,\eta)$) by ruling out paths $q(x,\eta)$ which do not conform with this assumption. This is the same idea as in the previous example of response profiles. In particular, each path is a response profile of an actor $\eta$ in the market. 

Note that in combination with the idea of \citet*{hausman2016individual}, our approach is a complementary approach to the finitary method proposed in \citet*{kitamura2018nonparametric} for partial identification of average welfare. \citet*{kitamura2018nonparametric} work directly with individual budget sets and define a finite linear programming problem which grows with the number of data points in practice. Our method is an attractive alternative to this in two main ways. First, we can allow for continuous functions and variables. In this setting the programs \eqref{minimizereq} become infinite dimensional optimization programs. The practical advantage of this is that we can put a grid on the data and need not evaluate the linear program at each data point, which can be impossible to do when the data-set consists of many oservations. Second, and more importantly, the continuous setting allows for the introduction of assumptions on the demand functions $q(x,\eta)$ like smoothness or differentiability. These assumptions are vacuous in a discrete setting such as \citet*{kitamura2018nonparametric}. Such regularity assumptions follow naturally from the economic models (e.g.~the properties of the Slutsky matrix) and should be included as assumptions in the empirical setting, which our continuous setting allows for.

\section{The formal method for approximately solving the programs \eqref{minimizereq}}\label{sec:mainsection}
\subsection{A rigorous definition of the stochastic process representation of model \eqref{mainmodel}}\label{stochasticpathsec}
Before providing the solution approach to solving the programs \eqref{minimizereq} in practice, we define the stochastic process representation of model \eqref{mainmodel} formally. From now on, we let $Y$, $X$, and $Z$ be random variables whose corresponding laws $P_Y$, $P_X$, and $P_Z$ have compact supports $\mathcal{Y}$, $\mathcal{X}$, $\mathcal{Z}\subset\mathbb{R}^d$ for some $d\geq1$. In the following, we set $d=1$ for notational convenience, but all results in higher-dimensional settings.\footnote{One can allow for Polish spaces, i.e.~complete, separable and metrizable topological spaces, under some formal complications. The compactness restriction can be dropped at the cost of more mathematical formalism.} Moreover, since we focus on compact supports on $\mathbb{R}$ it is also without loss of generality to assume $\mathcal{Y}=\mathcal{X}=\mathcal{Z}=[0,1]$, which we do for the sake of notation.

As stated previously, the main idea for the approach is to interpret the counterfactual laws $P_{Y(x)}$ and $P_{X(z)}$ of the instrumental variable model as laws of counterfactual stochastic processes $Y_x$ and $X_z$, where the randomness is induced by the unobservable $W$. Each element $w\in\mathcal{W}$ indexes one path $Y_x(w)$ of the process. This representation allows to optimize over the unobserved heterogeneity $W$ directly instead of the laws $P_{Y(x)}$ and $P_{X(z)}$ induced by it, which is the key for making the approach feasible. 
The first question to answer is in which space these paths $Y_x(w)$ and $X_z(w)$ live. Completely unrestricted functions $h$ and $g$ as allowed for in model \eqref{mainmodel} correspond to stochastic processes defined in $\mathbb{R}^{\mathbb{R}}$, i.e.~the space of all functions from $\mathbb{R}$ to $\mathbb{R}$. This space is large enough to allow for response profiles such as in Example 2, where candidates are in principle allowed to drop in and drop out uncountably many times.

From a mathematical perspective, $\mathbb{R}^{\mathbb{R}}$ is too big and the Borel $\sigma$-algebra is too coarse for many assumptions of interest, however. For example, many subspaces like $C(\mathbb{R})$, the space of continuous functions on $\mathbb{R}$, are not even measurable in the Borel $\sigma$-algebra on $\mathbb{R}^{\mathbb{R}}$ \citep*[Corollary 38.5]{bauer1996probability}. Also, the response profiles allowed by $\mathbb{R}^{\mathbb{R}}$ are too general for most practical purposes. For instance, allowing for uncountably many drop-ins and drop-outs as one response profile in Example 2 is of no practical interest, as one always observes finitely many drop-ins and drop-outs of a person. This level of generality is hence really not needed for modeling the hypothetical participants. On the other hand, restrictions like continuity which require all response profiles to be continuous, might be too restrictive in many settings. An example is again Example 2, where allowing for jumps is required by the problem.

It is here where the stochastic process representation is useful for the first time, as it allows to introduce very weak but easy to understand assumptions which restrict the unobservable heterogeneity $W$. A rather weak but restrictive enough assumption for all practical settings is to only consider stochastic processes that possess almost surely \emph{c\`adl\`ag} paths, i.e.~paths that are continuous on the right with limits on the left. These paths allow to model all possible practically relevant response profiles, as they allow for finitely many jumps larger than any value $\varepsilon>0$ in the paths. This means that one can realistically model the response profiles of Example 2; moreover, paths that require continuity of the responses are measurable in this space, too.

The standard space for these processes is the Skorokhod space $D(\mathcal{X})$, which we equip with the Skorokhod metric 
\[d_S(Y_x,Y'_x)\coloneqq \inf_{\lambda\in\Lambda}\left\{\max\{\|\lambda-I\|_\infty, \|Y_x - \lambda Y'_x\|_\infty\}\right\}\] to make it a separable and metrizable space. In particular, a metric similar to the Skorokhod metric induces the same topology on $D(\mathcal{X})$ and makes it a Polish space, so that from now on we consider $D(\mathcal{X})$ with the Skorokhod metric a Polish space.\footnote{The reason for assuming compactness of $\mathcal{Y}$, $\mathcal{X}$, and $\mathcal{Z}$ is to be able to work within the Skorokhod space over a compact set, which is nicer to handle. Furthermore, one can define the Skorokhod space also when $\mathcal{X}$ is higher-dimensional, see \citet*{neuhaus1971weak}, \citet*{straf1972weak}, and \citet*{bloznelis1994central}. The metric which is topologically equivalent to the Skorokhod metric is often called the Billingsley metric \citep*{billingsley1999convergence}.} Here, $\Lambda$ is the set of all strictly increasing and continuous functions $\lambda:\mathcal{X}\to\mathcal{X}$, $I:\mathcal{X}\to\mathcal{X}$ is the identity function on $\mathcal{X}$, and $\|\cdot\|_\infty$ is the supremum norm \citep*[Section 12]{billingsley1999convergence}. The Skorokhod space is perfect as the underlying space for the representation of model \eqref{mainmodel} via stochastic processes, as it allows for all practically relevant response profiles while being a Polish space, which provides good measurability properties.

One sufficient and necessary condition for a function $f:\mathcal{X}\to\mathbb{R}$ to lie in the Skorokhod space is that its extended modulus of continuity $\omega'_f(\delta)$ satisfies $\omega'_f(\delta)\to0$ for any $\delta\downarrow 0$ \citep*[p.~123]{billingsley1999convergence}.\footnote{$\omega'_{f}(\delta)\coloneqq \inf_{\{x_i\}}\max_{i\in\mathbb{N}}\sup_{x,x'\in[x_{i-1},x_i)}|f(x)-f(x')|$, where the infimum extends over all $\delta$-partitions $\{x_i\}_{i\in\mathbb{N}}$, i.e.~partitions where the values $x_i$ are at a distance of at least $\delta>0$ from each other, see \citet*[p.~122]{billingsley1999convergence}. This definition extends the modulus of continuity for continuous functions.}
We therefore require the following Assumption throughout.
\begin{assumption}[Skorokhod space]\label{skorokhodpath}
$Y$, $X$, and $Z$ take values in $\mathbb{R}$ with supports $\mathcal{Y}, \mathcal{X}, \mathcal{Z}=[0,1]$. Moreover, for every $\delta>0$ with $\delta\to0$ it holds that
\[\lim_{\delta\to0}\sup_{w\in\mathcal{W}}\omega_{Y_x(w)}'(\delta) = 0\qquad\text{and}\qquad \lim_{\delta\to0}\sup_{w\in\mathcal{W}}\omega_{X_z(w)}'(\delta) = 0.\footnote{$\mathcal{W}$ denotes the support of $W$, which we define more formally below. Also note the difference between the extended modulus of continuity $\omega$ and a realization $w$ of $W$.}\]
\end{assumption}
Assumption \ref{skorokhodpath} only allows paths $Y_x(w)$ and $X_z(w)$ that have finitely many discontinuities which exceed any given real number. This assumption is weak enough in all practical settings, as argued above.
Under Assumption \ref{skorokhodpath}, one can formally introduce the stochastic process representation of model \eqref{mainmodel}. Throughout the article, these stochastic processes are defined for ``deterministic time'', even though $X$ and $Z$ are random variables, as the supports $\mathcal{X}$ and $\mathcal{Z}$ are fixed. This is without loss of generality in this model; in fact, all of the results in this article go through almost verbatim under the assumption that $X$ and $Z$ have random supports $\mathcal{X}$ and $\mathcal{Z}$. The only difference will be the formal complication of working with stochastic processes in random time.\footnote{In this case, one would need to replace Kolmogorov's extension theorem by the random version provided in Theorem 1 of \citet*{hu1988generalization}. } It is, moreover, not even clear that the random support assumption generates more general models than the deterministic support assumption.

The following proposition provides the formal result for the proposed stochastic process representation of model \eqref{mainmodel}.
\begin{proposition}[Stochastic process representation]\label{myprop}
Model \eqref{mainmodel} is equivalent to a system of counterfactual stochastic processes $Y_x$ and $X_z$ on $\mathbb{R}^{\mathbb{R}}$ equipped with the Borel $\sigma$-algebra on the cylinder sets and with corresponding laws $P_{Y(x)}$ and $P_{X(z)}$. The randomness of these processes is induced by $W$ defined on the probability space $([0,1],\mathscr{B}_{[0,1]},P_W)$.

Under Assumption \ref{skorokhodpath}, the processes are measurable in $D(\mathcal{X})$ and $D(\mathcal{Z})$, respectively. The exclusion restriction implies that (i) $X_z(w)$ is a stopping time for the process $Y_{X_z(w)}(w)$ and (ii) the laws $P_{Y|X=x}$ and $P_{X|Z=z}$ induce a joint law 
\[P_{[Y,X]^*(z)}(A_y,A_x) = \int_{A_x}P_{Y(x)}(A_y)dP_{X(z)}(x),\qquad A_y\in\mathscr{B}_{\mathcal{Y}},A_x\in\mathscr{B}_{\mathcal{X}},\]
which corresponds to a joint counterfactual process $[Y,X]^*_z$ in $(\mathbb{R}^2)^{\mathbb{R}}$. Furthermore, $[Y,X]_z^*$ can be defined on $D(\mathcal{Z})$ with codomain $\mathbb{R}^2$ and is measurable with respect to the filtration $\mathscr{F}_z^{[Y,X]}$ of all events of $[Y,X]_z^*$ that have happened before $z\in\mathcal{Z}$. The independence restriction $Z\independent W$ allows to compare the counterfactual process $[Y,X]^*_z$ to the stochastic process $[Y,X]_z$ corresponding to the observable joint law $P_{Y,X|Z=z}$.
\end{proposition}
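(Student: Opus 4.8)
The plan is to establish the three constituent claims of Proposition \ref{myprop} in sequence, since each uses different machinery. First I would prove the \emph{equivalence} of model \eqref{mainmodel} with the stochastic process system. The key observation is that model \eqref{mainmodel} fixes deterministic functions $h(x,\cdot)$ and $g(z,\cdot)$ on the probability space carrying $W$; for each fixed $x$ the map $w\mapsto h(x,w)$ is a random variable, so the family $\{Y_x\}_{x\in\mathcal{X}} = \{h(x,W)\}_{x\in\mathcal{X}}$ is by definition a stochastic process indexed by $x$, with finite-dimensional distributions induced by $P_W$. To go in the reverse direction I would invoke Kolmogorov's extension theorem: given a consistent family of finite-dimensional laws $P_{Y(x)}$ (and $P_{X(z)}$), there exists a process on the product space $\mathbb{R}^{\mathbb{R}}$ equipped with the cylinder $\sigma$-algebra realizing them, and one can take $([0,1],\mathscr{B}_{[0,1]},P_W)$ as the driving space since any Polish-space-valued random element admits such a representation. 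The measurability in $D(\mathcal{X})$ and $D(\mathcal{Z})$ is then immediate from Assumption \ref{skorokhodpath}, whose modulus-of-continuity condition is exactly the sufficient-and-necessary criterion cited from \citet*{billingsley1999convergence} for the paths to lie in the Skorokhod space.

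Second, for part (i), I would verify that $X_z(w)$ is a stopping time for $Y_{X_z(w)}(w)$. The natural filtration here is $\mathscr{F}_z^{[Y,X]}$, the history of the joint process up to index $z$. Since $X_z(w)$ is itself the position of the second-stage path at ``time'' $z$, and the composition $Y_{X_z(w)}(w)$ evaluates the first-stage path precisely at that position, the event $\{X_z \leq x\}$ is determined by the path history up to $z$; this is where the c\`adl\`ag property matters, as it guarantees the requisite right-continuity and measurability of $z\mapsto X_z(w)$ so that the hitting-level structure is adapted. The exclusion restriction enters crucially: because $h$ does not depend on $z$, the composed path $Y_{X_z(w)}(w)=h(g(z,w),w)$ depends on $z$ only through the current value $X_z(w)$, which is exactly the feature that makes $X_z$ a legitimate stopping time rather than requiring anticipation of the future.

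Third, for part (ii), I would derive the joint law formula. The displayed integral $P_{[Y,X]^*(z)}(A_y,A_x)=\int_{A_x}P_{Y(x)}(A_y)\,dP_{X(z)}(x)$ is a disintegration statement: conditional on $X_z=x$, the outcome is distributed as $P_{Y(x)}$, and one integrates against the marginal law $P_{X(z)}$ of the second stage. I would establish this by a change-of-variables/pushforward argument via the map $[Y,X]^*_z(w)=(Y_{X_z(w)}(w),X_z(w))$, noting that the measurability of this composition (needed for the pushforward to be well-defined) follows from the stopping-time property just established together with joint measurability of $(x,w)\mapsto Y_x(w)$. Measurability of $[Y,X]^*_z$ with respect to $\mathscr{F}_z^{[Y,X]}$ is then inherited from part (i). Finally, the comparison to the observable process $[Y,X]_z$ with law $P_{Y,X|Z=z}$ rests on $Z\independent W$: independence ensures that conditioning on $Z=z$ does not reweight $W$, so the observable joint law coincides with the $W$-induced counterfactual joint law, which is precisely the content needed to make the constraint in \eqref{minimizereq} meaningful.

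I expect the \textbf{main obstacle} to be the rigorous justification of part (i)—that $X_z$ is genuinely a stopping time—because this requires carefully pinning down the filtration and confirming that the c\`adl\`ag regularity from Assumption \ref{skorokhodpath} is enough to guarantee the joint measurability of the composition $(z,w)\mapsto Y_{X_z(w)}(w)$ and the adaptedness of $\{X_z\leq x\}$. The subtlety is that $X_z$ plays a dual role as both a process value and an evaluation index for a \emph{different} process $Y$, so one must be precise about which $\sigma$-algebra the stopping-time claim is relative to and check that the exclusion restriction genuinely forecloses any dependence of $Y_{X_z}$ on the future of the $X$-path. The measurability of the composition in the Skorokhod setting, where evaluation maps are not continuous in general, is the technical crux.
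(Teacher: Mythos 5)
Your route coincides with the paper's proof in all of its main moves: existence of $Y_x$ and $X_z$ on $\mathbb{R}^{\mathbb{R}}$ via Kolmogorov's extension theorem, the identification $\mathcal{W}=[0,1]$ through a measure-preserving isomorphism with the Polish Skorokhod space, measurability of the paths in $D(\mathcal{X})$ and $D(\mathcal{Z})$ from the modulus-of-continuity condition in Assumption \ref{skorokhodpath}, the joint law as a disintegration forced by the exclusion restriction (the paper builds $[Y,X]^*_z=(Y_{X_z},X_z)$ as a Cartesian product and invokes Kolmogorov once more, which is your pushforward argument in different clothing), and $Z\independent W$ for comparability with the observable process $[Y,X]_z$. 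So the architecture is right and matches the paper.

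The genuine gap is precisely the step you yourself flag as the crux and then leave open, and your framing of it points at the wrong filtration. You place the stopping-time claim relative to $\mathscr{F}_z^{[Y,X]}$ and argue that $\{X_z\leq x\}$ is ``determined by the path history up to $z$''; that is mere adaptedness of $X$ in its own index $z$ and carries no content for the composition. The paper's formalization treats $X_z$ as a random time in the \emph{index set} $\mathcal{X}$ of the $Y$ process and asserts the stopping-time property against the filtration $\{\mathscr{F}_x^Y\}$, i.e.\ $\{X_z\leq x\}\in\mathscr{F}_x^Y$ for every $x$ --- and it is exactly this membership that the exclusion restriction delivers, since $Y_x$ depends only on the current position $x$ of $X_z$ and neither process anticipates the other. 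The measurability of the composed path, which you correctly call the technical crux in a space where evaluation maps are not continuous, is then closed by two standard tools you never name: right-continuity of the c\`adl\`ag paths plus adaptedness yields \emph{progressive} measurability of $X_z$ and $Y_x$ with respect to their natural filtrations (Proposition 1.1.13 of \citet*{karatzas1998brownian}), and measurability of $Y_{X_z}$ with respect to $\mathscr{F}_z^{[Y,X]}$ then follows from the result on a progressively measurable process sampled at a stopping time (Proposition 1.2.18 of \citet*{karatzas1998brownian}). Without these two ingredients --- and with the stopping-time property stated against $\{\mathscr{F}_x^Y\}$ rather than $\{\mathscr{F}_z^{[Y,X]}\}$ --- your part (i), and hence the measurability claims in part (ii) that you inherit from it, remain a plan rather than a proof.
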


Proposition \ref{myprop} states that the counterfactual probability measures $P_{Y(x)}$ and $P_{X(z)}$ are the laws of $Y_x$ and $X_z$, where the randomness stems from the unobservable confounder $W$. Since $W$ is unobserved, one can define it on $([0,1],\mathscr{B}_{[0,1]},P_W)$, i.e.~to identify the support $\mathcal{W}$ of $W$ with the unit interval. This follows from standard isomorphism results, see for instance \citet*[Theorem 9.2.2]{bogachev2007measure2}. It is in fact possible to construct these explicit isomorphisms between general Borel measures on Polish spaces and Lebesgue measure on the unit interval based on the approach laid out in \citet*{kuratowski1934generalisation}. One standard application of these isomorphisms is the construction of Wiener measure, see for instance \citet*{hess1982kuratowski}. 

This is the main difference to the other existing identification results in the literature: we do \emph{not} make topological or algebraic assumptions on the support $\mathcal{W}$. $W$ is always infinite dimensional in our setting and completely abstract: only the \emph{cardinality} of $\mathcal{W}$ is important, which needs to be such that each path $Y_x(w)$ and $X_z(w)$ in the Skorokhod space---a Polish space under the Skorokhod metric---can be indexed by one $w$. By the above isomorphism results, this is already possible if $\mathcal{W}\subset[0,1]$. This is the main reason why we can define $\mathcal{W}$ to be the unit interval while the existing identification approaches in econometrics make structural assumptions on $\mathcal{W}$.

The construction of $W$ on the unit interval allows us to define $\mathscr{P}^*(\mathcal{W})$ as the set of all measures $P_W$ on $([0,1],\mathscr{B}_{[0,1]})$ which induce paths on the Skorokhod space defined in Assumption \ref{skorokhodpath}. The idea is that by a construction like the one in \citet*{kuratowski1934generalisation}, a measure $P_W$ will induce a measure on path space, so that we can optimize over measures on the unit interval in the programs \eqref{minimizereq}. 
In this respect, all additional functional form assumptions on the model will be made by shrinking $\mathscr{P}^*(\mathcal{W})$. For instance, making continuity assumptions between $Y$ and $X$ or $X$ and $Z$ in \eqref{mainmodel} translates to continuity assumptions on the paths $Y_x$ or $X_z$. In this case, $\mathscr{P}^*(\mathcal{W})$ contains only measures $P_W$ on $\mathcal{W}$ which put positive probability on continuous paths. Since under Proposition \ref{myprop} all processes are progressively measurable with respect to their natural filtration, one can even introduce dynamic assumptions this way, like mixing properties. This can be interesting in settings similar to Example 2, where the treatment is time dependent.

\subsection{The formal method for approximating the solutions to the programs \eqref{minimizereq}}\label{mainsubsec}
This section introduces the probabilistic approach to solve the programs \eqref{minimizereq} approximately in practice with probabilistic finite sample guarantees for the validity of the approximation. The method proceeds by sampling bases of the path space and solving semi-infinite dimensional analogues \citep*{anderson1987linear} of \eqref{minimizereq} over this sample of basis functions. The randomness introduced by sampling the paths permits the use of concentration results \citep*[Chapter 2.14]{wellner2013weak} which provide probabilistic guarantees for the approximation of \eqref{minimizereq} by the random sample.

The idea is to approximate the infinite dimensional problems \eqref{minimizereq} by their semi-infinite analogues
\begin{equation}\label{minimizereqapprox}
\begin{aligned}
& \underset{\substack{\hat{P}_W\in\hat{\mathscr{P}}^*(\mathcal{W})}}{\min/\max} \quad \frac{1}{l} \sum_{i=1}^lf(\tilde{Y}^\kappa_{x}(i),x)\frac{d\hat{P}_W}{d\hat{P}_0}(i)\\
\text{s.t.}\thickspace\thickspace &\left\|F_{Y,X|Z=z} - \frac{1}{l}\sum_{i=1}^l \mathds{1}_{[0,\cdot]\times[0,\cdot]}\left(\tilde{Y}^\kappa_{\tilde{X}^\kappa_{z}(i)}(i),\tilde{X}^\kappa_{z}(i)\right)\frac{d\hat{P}_W}{d\hat{P}_0}(i)\right\|^2_{L^2([0,1]^2)}\leq \varepsilon
\end{aligned}
\end{equation}
for some small $\varepsilon>0$ and all $z\in[0,1]$, and where the $L^2$-norm is taken with respect to $y,x$.\footnote{Rewriting the programs \eqref{minimizereqapprox} in their penalized form later on will reveal that $\varepsilon$ fulfills the same purpose as a penalty term for the constraint. In this sense, $\varepsilon$ is a penalty parameter of the program which needs to be chosen appropriately. See the next section for a discussion. Overall, the relaxation of the linear constraints to an $L^2$-constraint is made exclusively based on finite sample considerations. In the population $\varepsilon=0$.} The indicator functions $\mathds{1}_{[0,\cdot]\times[0,\cdot]}\left(\tilde{Y}^\kappa_{\tilde{X}^\kappa_{z}(i)}(i),\tilde{X}^\kappa_{z}(i)\right)$ capture the requirement that $\left(\tilde{Y}^\kappa_{\tilde{X}^\kappa_z(i)}(i),\tilde{X}_z^\kappa(i)\right)\leq(y,x)$ at $z$, where the inequality is considered element-wise. An equivalent way to writing this is $\mathds{1}_{[\tilde{Y}_{\tilde{X}_z^\kappa(i)}^\kappa(i),1]\times [\tilde{X}^\kappa_z(i),1]}(y,x)$. 

Underlying the approximations \eqref{minimizereqapprox} is a random sample of size $l$ of a number of $\kappa$ basis functions which approximate the path space of the processes $Y_x$ and $X_z$. These approximations are denoted by 
\begin{equation}\label{sieveseq}
\tilde{Y}^\kappa_{x}(i)\coloneqq \sum_{j=1}^\kappa \beta_j(i)b_j(x)\qquad\text{and}\qquad \tilde{X}^\kappa_{z}\coloneqq \sum_{j=1}^\kappa \alpha_j(i)a_j(z)
\end{equation} for coefficients $\beta,\alpha$ and basis functions $b(x),a(z)$. Of particular convenience are basis functions used in Sieves estimation such as (trigonometric) polynomials, splines, wavelets, etc.~\citep*{chen2007large}. 

The dependence of $\tilde{Y}_{x}(i)$ and $\tilde{X}_{z}(i)$ on the index $i$ shows that the problems \eqref{minimizereqapprox} are indeed semi-infinite dimensional in the sense of \citet*{anderson1987linear}. In particular, the index $i$ now runs over finitely many elements $l$ and replaces the variable $w\in\mathcal{W}$ on the state space of $P_W$. 

Let us now go over the most important terms in \eqref{minimizereqapprox}, $\frac{d\hat{P}_W}{d\hat{P}_0}(i)$, $\tilde{Y}^\kappa_{x}(i)$ and the objective function $f(Y_x,x_0)$, one by one.

\paragraph{1. The representative law}\mbox{}\\
The term $\frac{d\hat{P}_W}{d\hat{P}_0}(i)$ is of fundamental importance. In particular, the \emph{empirical sampling law} $\hat{P}_0\in\hat{\mathscr{P}}^*(\mathcal{W})$ is one \emph{representative} law on the paths of stochastic processes, which is used for sampling the basis functions. The optimization is then over all $\hat{P}_W$ which are \emph{absolutely continuous} with respect to $\hat{P}_0$, so that $\frac{d\hat{P}_W}{d\hat{P}_0}(i)$ is the Radon-Nikodym derivative. This construction arises naturally, as the empirical sampling law $\hat{P}_0$ determines the sample of the $l$ paths, over which an optimal law $\hat{P}_W$ will be chosen to solve the programs \eqref{minimizereqapprox}. 

$\hat{P}_W$ must by construction be absolutely continuous with respect to $\hat{P}_0$ as it can only place positive mass on the $i$ sampled paths, which have been determined via $\hat{P}_0$. In other words, $\hat{\mathscr{P}}^*(\mathcal{W})$ is the set of all probability measures which do not put positive measure on paths other than the $l$ paths sampled via $\hat{P}_0$. This implies a natural assumption on the data-generating process in the population.

\begin{assumption}[Representative law $P_0$ of $\mathscr{P}^*(\mathcal{W})$]\label{radonnikodymass}
The sampling law $P_0$ is a representative law of $\mathscr{P}^*(\mathcal{W})$ in the sense that (i) $P_0\in\mathscr{P}^*(\mathcal{W})$ and (ii) every $P_W\in\mathscr{P}^*(\mathcal{W})$ is absolutely continuous with respect to $P_0$ with Radon-Nikodym derivative satisfying $\sup_{w\in\mathcal{W}}\frac{dP_W}{dP_0}(w)\leq C_{RN}<+\infty$ for a fixed constant $C_{RN}$.
\end{assumption}
Assumption \ref{radonnikodymass} is the theoretical analogue to the fact that all $\hat{P}_W$ are absolutely continuous with respect to $\hat{P}_0$ by construction. Since by the finite dimensional construction all measures $\hat{P}_W$ are automatically absolutely continuous with respect to $\hat{P}_0$, theoretical measures $P_W$ which are not absolutely continuous to $P_0$ can never be detected. This implies that Assumption \ref{radonnikodymass} is non-testable. 

In practice, it is through $P_0$ that the researcher introduces functional form restrictions into the model. For instance, if one wants to assume continuity in the relation between $Y$ and $X$, one will choose a $P_0$ which only puts positive measure on continuous paths. $\hat{P}_0$ will then only sample continuous paths in practice. This way, one can theoretically introduce any form of functional form restriction into the model. In Example 2 for instance, one would define $P_0$ to be a measure that only puts positive probability on paths $X_t$ on the $45$-degree line with finitely many jumps to zero. 

Theorem \ref{maintheorem2}, the main approximation result, requires some more regularity of the representer. In particular, it requires that $P_W$ does not have atoms. This implies that any constraint in \eqref{minimizereq} has zero influence on the overall constraint. Since \eqref{minimizereq} allow for a continuum of constraints, they still restrict the problem jointly. Below, right after the statement of Theorem \ref{maintheorem2} we argue intuitively what this requirement of absolutely continuous $P_W$ means in other settings.
\begin{assumption}[Smoothness of the representative law]\label{repsmoothnessass}
$P_0\in\mathscr{P}^*(\mathcal{W})$ is absolutely continuous with respect to Lebesgue measure. Furthermore, $\frac{dP_W}{dP_0}$ is $\beta$-H\"older continuous with $\beta>\frac{1}{2}$ for any $P_W\in\mathscr{P}^*(\mathcal{W})$.\footnote{A function $f:\mathcal{X}\to\mathbb{R}$ is $\beta $-H\"older continuous if $\sup_{x\neq x'\in\mathcal{X}}\frac{|f(x)-f(x')|}{\|x-x'\|^\beta}<+\infty$ \citep*[p.~138]{folland2013real}. This assumption is reasonable for instance when the sampling law $P_0$ generates a diffusion process and one uses Girsanov's theorem \citep*[p.~190ff.]{karatzas1998brownian} to change the diffusion to another absolutely continuous measure.}
\end{assumption}

\paragraph{2. The wavelet basis}\mbox{}\\This articles works with a shape-preserving wavelet basis, which is defined as 
\[\varphi_{\kappa j}(x) \coloneqq 2^{\frac{\kappa}{2}}\varphi(2^{\kappa}x-j)\qquad \kappa,j\in\mathbb{Z}\] with $\varphi:\mathbb{R}\to[0,1]$ of the form
\[\varphi(x) \coloneqq \begin{cases} x+1&\text{if $-1\leq x\leq 0$}\\ 1-x&\text{if $0< x\leq 1$}\\ 0&\text{otherwise}\end{cases}.\] 
Based on this the notation for the paths sampled via this wavelet basis for dilation $\kappa$ is 
\begin{equation}\label{waveleteq}
\tilde{Y}^\kappa_{x}(i) \coloneqq \sum_{j=-\infty}^{\infty} \alpha_j(i)\varphi_{\kappa j}(x)\quad\text{and}\quad \tilde{X}^\kappa_{z} (i) \coloneqq \sum_{j=-\infty}^{\infty} \gamma_j(i)\varphi_{\kappa j}(z),
\end{equation} where the sums in the definition are both finite since the unit interval is bounded.
This wavelet basis preserves shapes in the sense that an approximation of a monotone (convex) function via this basis will itself be monotone (convex), see \citet*{anastassiou1992monotone} and \citet*{anastassiou1992convex}; this is an important feature when introducing shape assumptions into the model in practice.

Other approaches work equally well. For instance, if one is willing to assume that all paths are smooth, one could simulate paths from a standard diffusion process via the Karhunen-Lo\`eve transform. In this case, one would have to extend Proposition \ref{myprop} to the case where $Y$, $X$, and $Z$ are not compact. This can be done based on results in \citet*[section 16]{billingsley1999convergence}.

\paragraph{3. Regularity assumptions on the objective function}\mbox{}\\
One also needs to impose some regularity on the objective function. 
\begin{assumption}[Regularity of objective function]\label{objectivefunctionstronger}
The kernel $f(Y_x,x)$ is either 
\begin{itemize}
\item[(i)] bounded and $\alpha$-H\"older continuous in its first argument with constant $K<+\infty$ or
\item[(ii)] takes the form of an indicator function, i.e.~$\mathds{1}_{[0,y]}(Y_{x_0}(w))$ for some events $A_y\subset\mathcal{Y}$ and $A_x\subset\mathcal{X}$.\footnote{Recall that we can allow for more general indicator functions of the form $\mathds{1}_{A_y}(Y_{A_x}(w))$ for some events $A_y\subset\mathcal{Y}$ and $A_x\subset\mathcal{X}$.}
\end{itemize}
\end{assumption}
The assumption allows for general objective functions, including the ones from Section \ref{intuitionsec}. Lipschitz continuity is a very weak requirement. Also, allowing for indicator functions is important in order to allow for counterfactual probabilities as argued above.

Before stating the main approximation result, we rewrite \eqref{minimizereq} and \eqref{minimizereqapprox} as penalized programs. The reason is that in this form, we can use the standard theory of $M$-estimators \citep*[chapter 3.4]{wellner2013weak} to obtain the approximation results.
\begin{lemma}[Penalized versions of \eqref{minimizereq} and \eqref{minimizereqapprox}]\label{penalizedlemma}
Under Assumption \ref{radonnikodymass} the solutions to the programs \eqref{minimizereq} coincide with the solutions of the following programs 
\begin{multline}\label{minimizereqpen}
\underset{\substack{\frac{dP_w}{dP_0}\\P_0,P_W\in\mathscr{P}^*(\mathcal{W})}}{\min/\max} \int f(Y_{x}(w),x)\frac{dP_W}{dP_0}(w)dP_0(w)\\
+\lambda\int\left\|F_{Y,X|Z=z} - \mathds{1}_{[0,\cdot]\times[0,\cdot]}\left(Y_{X_z(w)}(w),X_z(w)\right)\frac{dP_W}{dP_0}(w)\right\|^2_{L^2([0,1]^2)}dP_0(w),
\end{multline}
as $\lambda\to\infty$ for $P_Z$-almost all $z\in\mathcal{Z}$.
Analogously, for each $\varepsilon>0$ there exists a $\lambda(\varepsilon)<+\infty$ such that the solutions to the programs \eqref{minimizereqapprox} coincide with the solutions to the following programs
\begin{multline}\label{minimizereqapproxpen}
\underset{\substack{\frac{d\hat{P}_W}{d\hat{P}_0}\\\hat{P}_0,\hat{P}_W\in\hat{\mathscr{P}}^*(\mathcal{W})}}{\min/\max}\frac{1}{l}\sum_{i=1}^l\left[f(\tilde{Y}^\kappa_{x}(i),x)\frac{d\hat{P}_W}{d\hat{P}_0}(i) \vphantom{\lambda_0\left\|F_{Y,X|Z=z} - \mathds{1}_{[0,\cdot]\times[0,\cdot]}(\tilde{Y}_x^\kappa(i),\tilde{X}_z^\kappa(i))\frac{d\hat{P}_W}{d\hat{P}_0}(i)\right\|^2_{L^2([0,1]^2)}}\right.\\
\left.+\lambda(\varepsilon)\left\|F_{Y,X|Z=z} - \mathds{1}_{[0,\cdot]\times[0,\cdot]}\left(\tilde{Y}_{\tilde{X}^\kappa_{z}(i)}^\kappa(i),\tilde{X}_z^\kappa(i)\right)\frac{d\hat{P}_W}{d\hat{P}_0}(i)\right\|^2_{L^2([0,1]^2)}\right],
\end{multline}
for $P_Z$-almost all $z\in\mathcal{Z}$.
\end{lemma}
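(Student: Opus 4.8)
The plan is to treat the two equivalences separately, as they rest on different penalization principles: the population statement \eqref{minimizereq}$\leftrightarrow$\eqref{minimizereqpen} is an \emph{exterior} quadratic-penalty limit as $\lambda\to\infty$, while the finite-dimensional statement \eqref{minimizereqapprox}$\leftrightarrow$\eqref{minimizereqapproxpen} is an \emph{exact} penalty result for a fixed finite $\lambda(\varepsilon)$ furnished by convex duality.

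For the population claim I would first note that the equality constraint of \eqref{minimizereq}, $F_{Y,X|Z=z}(y,x)=P_W(Y_{X_z}\le y, X_z\le x)$ for all $(y,x)$ and $P_Z$-a.e.\ $z$, holds if and only if the squared $L^2([0,1]^2)$-residual $R_z(P_W)\ge 0$ appearing in \eqref{minimizereqpen} vanishes for a.e.\ $z$; the $L^2$-aggregation is precisely what collapses the continuum of pointwise restrictions into a single penalizable functional. Writing $J(P_W)=\int f\,dP_W$ for the objective, the penalized problem optimizes $J+\lambda\int R_z\,dP_0$, with the sign of the penalty oriented to enforce feasibility in both the min and the max case. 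The standard exterior-penalty argument then applies: every feasible $P_W$ has $R_z\equiv 0$, so the penalized and original objectives agree there and the penalized optimal value is squeezed against the constrained value; conversely, sending $\lambda\to\infty$ drives the residual of the penalized optimizers to zero, and a compactness-plus-subsequence argument (equivalently $\Gamma$-convergence) identifies every limit point as feasible and optimal. I would supply two ingredients: (i) compactness of $\mathscr{P}^*(\mathcal{W})$ in the weak topology, via Prokhorov tightness on the Skorokhod space supplied by the uniform modulus-of-continuity control in Assumption \ref{skorokhodpath}, or alternatively via the uniform Radon--Nikodym bound of Assumption \ref{radonnikodymass}; and (ii) the semicontinuity of $J$ and $R_z$ along weakly convergent sequences, where $J$ is weakly continuous because $f$ is bounded (and H\"older in case (i) of Assumption \ref{objectivefunctionstronger}), while $R_z$, being built from indicator pushforwards, is in general only lower semicontinuous.

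For the finite-dimensional claim I would exploit that, in the optimization variables $\{\frac{d\hat{P}_W}{d\hat{P}_0}(i)\}_{i=1}^l$, program \eqref{minimizereqapprox} is \emph{convex}: the objective is linear and the constraint $\|\cdot\|^2_{L^2}\le\varepsilon$ is a convex quadratic inequality. I would first check Slater's condition, i.e.\ the existence of a strictly feasible weight vector with residual $<\varepsilon$; since $\hat{P}_0$ is chosen so that the sampled paths can reproduce the observable law and since $\varepsilon>0$, a small perturbation of an (approximately) feasible point is strictly feasible. Strong duality then yields a finite Karush--Kuhn--Tucker multiplier $\mu^\star\ge 0$ for the quadratic constraint, and setting $\lambda(\varepsilon)=\mu^\star$ the Lagrangian $\frac{1}{l}\sum_i f(\tilde{Y}^\kappa_x(i),x)\frac{d\hat{P}_W}{d\hat{P}_0}(i)+\lambda(\varepsilon)R_z$ --- the additive constant $-\lambda(\varepsilon)\varepsilon$ being irrelevant to the argmin --- shares its optimizers with the constrained program, which is exactly \eqref{minimizereqapproxpen}.

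I expect the main obstacle to be the population step, and specifically the tension between the weak topology needed for compactness of $\mathscr{P}^*(\mathcal{W})$ and the regularity of the residual $R_z$. Because $R_z$ is assembled from the indicators $\mathds{1}_{[0,\cdot]\times[0,\cdot]}(Y_{X_z},X_z)$ it is not weakly continuous, so convergence of the penalized optimizers must be argued through lower semicontinuity together with the upper bound obtained by comparing to a fixed feasible law, rather than through outright continuity. The ``$P_Z$-almost all $z$'' qualifier adds the mild complication of the $z$-continuum, which I would handle by integrating $R_z$ against $P_Z$ and then passing to a full-measure set of $z$. The finite-dimensional step is, by contrast, routine convex duality once Slater's condition is in hand.
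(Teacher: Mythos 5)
Your proposal founders on a structural feature of \eqref{minimizereqpen} that it never confronts: in the penalized program the $P_0$-integral sits \emph{outside} the squared $L^2([0,1]^2)$-norm, whereas the constraint residual of \eqref{minimizereq} has it inside, since $P_W(Y_{X_z}\leq y, X_z\leq x)=\int \mathds{1}_{[0,\cdot]\times[0,\cdot]}\left(Y_{X_z(w)}(w),X_z(w)\right)\frac{dP_W}{dP_0}(w)\,dP_0(w)$. Your exterior-penalty argument rests on the identification ``$R_z(P_W)=0$ if and only if $P_W$ is feasible,'' which is true for the unswitched residual $\left\|F_{Y,X|Z=z}-\int\mathds{1}\,dP_W\right\|^2_{L^2}$ but false for the penalty actually appearing in \eqref{minimizereqpen}: for each fixed $w$ the integrand $\mathds{1}_{[0,\cdot]\times[0,\cdot]}\left(Y_{X_z(w)}(w),X_z(w)\right)\frac{dP_W}{dP_0}(w)$ is a scaled indicator of an upper-right quadrant and cannot coincide with the cdf $F_{Y,X|Z=z}$ for $P_0$-almost every $w$, so a feasible $P_W$ generically has strictly positive penalty. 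Consequently the first step of your squeeze (``penalized and original objectives agree on the feasible set'') fails, and the $\Gamma$-limit you would extract is the minimizer set of a functional whose zero set is not the feasible set of \eqref{minimizereq}. The same misreading infects your finite-dimensional half: the Slater/KKT exact-penalty argument is sound convex analysis, but as written it attaches the multiplier to the unswitched constraint of \eqref{minimizereqapprox}, which is not the penalty in \eqref{minimizereqapproxpen}.

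The missing idea is exactly the norm--integral interchange that the paper performs and explicitly flags (``Note how the $L^2$ norm and the integral with respect to $P_0$ have switched places''): Minkowski's inequality for integrals followed by Jensen's inequality shows that the switched penalty \emph{dominates} the constraint residual, so controlling it controls feasibility, and the penalized forms are then stated with the $P_0$-integral outside because this is what later allows \eqref{minimizereqpen} and \eqref{minimizereqapproxpen} to be treated as $M$-estimators---a per-path criterion $m_h(w)$ integrated against $P_0$---in the proof of Theorem \ref{maintheorem2}. Your compactness/semicontinuity apparatus and your duality argument could be salvaged, but only after inserting this one-sided domination step, which is the actual mathematical content of the lemma; without it the proposal proves an equivalence for a different penalized program than the one in the statement.
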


Lemma \ref{penalizedlemma} is important, as the alternative programs \eqref{minimizereqpen} and \eqref{minimizereqapproxpen} are written in such a way that we can apply the standard theory of $M$-estimators \citep*[chapter 3.2]{wellner2013weak} to the Radon-Nikodym derivatives $\frac{dP_W}{dP_0}$ in order to obtain the concentration bounds. Note how the $L^2$ norm and the integral with respect to $P_0$ have switched places. We view the programs as $M$-estimators with respect to the random sample $l$ of paths. All classical concentration results are applicable in this setting, the only difference is that we now apply them with respect to the artificially introduced randomness by sampling paths. From now on, we will focus exclusively on the programs \eqref{minimizereqpen} and \eqref{minimizereqapproxpen} when deriving the properties of our approach.

The following theorem is the main result of the article and gives probabilistic guarantees of for the approximation of the infinite programs \eqref{minimizereqpen} by the semi-infinite programs \eqref{minimizereqapproxpen}. For this bound, we need to require at Lipschitz-continuity of the objective function, which is not satisfied under part (ii) of Assumption \ref{objectivefunctionstronger}. In this case, we need to approximate the indicator function $\mathds{1}_{[0,y]}(Y_{x_0}(w))$ by a logistic function $\mathcal{S}(Y_{x_0}(w)),y,\eta)$ on $[0,1]$, which takes the form 
\[\mathcal{S}(Y_{x_0}(w),y,\eta)\coloneqq\frac{1}{1+e^{-\eta(y-Y_x(w))}}.\] 

\begin{theorem}[Finite probabilistic approximation via sampling paths]\label{maintheorem2}
Denote by $V^*$ and $V_*$ the value functions of \eqref{minimizereqpen} and by $\tilde{V}^*_{l,\kappa}$ and $\tilde{V}_{*,l,\kappa}$ the value functions of \eqref{minimizereqapproxpen} for some fixed distribution $F_{Y,X|Z}$, respectively.
Under Assumptions \ref{skorokhodpath} -- \ref{repsmoothnessass} and part (i) of \ref{objectivefunctionstronger}, it holds with probability $1-\rho$
\begin{multline}\label{mainapproxeq1}
\max\{|V^* - \tilde{V}^*_{l,\kappa}|,|V_*-\tilde{V}_{*,l,\kappa}|\}\\
\leq \left[\sup_{w\in[0,1]}\omega'_{Y_x(w)}(2^{-\kappa+1})\omega'_{X_z(w)}(2^{-\kappa+1})C_{RN}+\sup_{w\in[0,1]}K\left(\omega'_{Y_x(w)}(2^{-\kappa+1})\right)^\alpha \right]+\sqrt{\frac{\log\left(\frac{\bar{C}}{\rho}\right)}{\bar{D}l}}
\end{multline}
for every number in the series approximation $\kappa\in\mathbb{N}$ and all $z\in\mathcal{Z}$.
$0<\bar{C}<+\infty$ and $0<\bar{D}<2$ are constants depending on $\kappa$, the H\"older coefficient $\beta$, $F_{Y,X|Z=z}$, the bound on the Radon-Nikodym derivative $C_{RN}$, the penalty term $\lambda$, and the Lipschitz constant $K$ of the kernel $f$ of the objective.

Under part (ii) of Assumption \ref{objectivefunctionstronger}, it holds with probability $1-\rho$ 
\begin{multline}\label{mainapproxeq2}
\max\{|V^*-\tilde{V}^*_{l,\kappa}|,|V_*-\tilde{V}_{*,l,\kappa}|\}\\
\leq \left[\sup_{w\in[0,1]}[\omega'_{Y_x(w)}(2^{-\kappa+1})+1]\omega'_{X_z(w)}(2^{-\kappa+1})C_{RN}\right]+ \sqrt{\frac{\log\left(\frac{\bar{C}}{\rho}\right)}{\bar{D}l}}\\
+K\frac{\log(\eta+1)}{\eta+1}\left(1+O\left(\frac{\log\log(\eta+1)}{\log(\eta+1)}\right)\right),
\end{multline}
for all $z\in\mathcal{Z}$ and where $K=\frac{\eta\cdot e^{\eta(y+Y_{x_0}(w))}}{\left(e^{\eta Y_{x_0}(w)}+e^{\eta y}\right)^2}$.
\end{theorem}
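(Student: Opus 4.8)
The plan is to control each value-function gap by splitting it, through the triangle inequality, into a \emph{deterministic sieve error} coming from replacing the true paths $Y_x(w),X_z(w)$ by their wavelet expansions $\tilde Y^\kappa_x,\tilde X^\kappa_z$ of \eqref{waveleteq}, and a \emph{stochastic sampling error} coming from replacing the integral $\int(\cdot)\,dP_0$ by the empirical average $\tfrac1l\sum_{i=1}^l(\cdot)$ over the $l$ sampled paths. I would work throughout with the penalized $M$-estimator forms \eqref{minimizereqpen} and \eqref{minimizereqapproxpen} provided by Lemma \ref{penalizedlemma}, because there the objective and the squared-$L^2$ constraint combine into a single integrand against $P_0$, weighted by the Radon--Nikodym derivative $\tfrac{dP_W}{dP_0}$. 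Writing $V^*_\kappa$ for the value of the intermediate program that uses the wavelet-approximated paths but still integrates against the population law $P_0$, I would bound $|V^*-\tilde V^*_{l,\kappa}|\le|V^*-V^*_\kappa|+|V^*_\kappa-\tilde V^*_{l,\kappa}|$, and symmetrically for the minimizers. Since each value is an optimum of an integrand that is uniformly bounded through $C_{RN}$ (Assumption \ref{radonnikodymass}), I would repeatedly use $|\sup_g\Phi(g)-\sup_g\Psi(g)|\le\sup_g|\Phi(g)-\Psi(g)|$, so it suffices to bound the integrand perturbation uniformly over the admissible derivatives $g=\tfrac{dP_W}{dP_0}$.

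For the sieve term $|V^*-V^*_\kappa|$ I would exploit that the hat-function basis $\varphi_{\kappa j}$ makes $\tilde Y^\kappa_x$ and $\tilde X^\kappa_z$ piecewise-linear interpolants on a dyadic grid of width $2^{-\kappa}$, so that $\|Y_x-\tilde Y^\kappa_x\|_\infty$ and $\|X_z-\tilde X^\kappa_z\|_\infty$ are bounded by the extended moduli $\omega'_{Y_x(w)}(2^{-\kappa+1})$ and $\omega'_{X_z(w)}(2^{-\kappa+1})$, which vanish as $\kappa\to\infty$ by Assumption \ref{skorokhodpath}. Propagating this through the objective via the $\alpha$-H\"older bound of part (i) of Assumption \ref{objectivefunctionstronger} yields the term $K(\omega'_{Y_x(w)}(2^{-\kappa+1}))^\alpha$. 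For the constraint I must additionally handle the composition $\tilde Y^\kappa_{\tilde X^\kappa_z(i)}$, bounding $|Y_{X_z}-\tilde Y^\kappa_{\tilde X^\kappa_z}|$ by the $Y$-approximation error plus the change in $\tilde Y^\kappa$ under the argument shift $\|X_z-\tilde X^\kappa_z\|_\infty$; the perturbation of the lower-left indicator rectangle $[0,\cdot]\times[0,\cdot]$ in the squared $L^2([0,1]^2)$-norm is then controlled by the moduli of both coordinate processes at scale $2^{-\kappa+1}$, weighted by $C_{RN}$. Combining the objective and constraint contributions, and using that $\min/\max$ is nonexpansive under uniform perturbations, reproduces the bracketed deterministic term of \eqref{mainapproxeq1}.

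The sampling term $|V^*_\kappa-\tilde V^*_{l,\kappa}|$ is the heart of the proof and, I expect, the main obstacle. After the reductions above it is a uniform empirical-process deviation $\sup_{g\in\mathcal G}\bigl|\tfrac1l\sum_{i=1}^l\Phi_\kappa(i)\,g(i)-\int\Phi_\kappa\,g\,dP_0\bigr|$, where $\mathcal G=\{\tfrac{dP_W}{dP_0}:P_W\in\mathscr P^*(\mathcal W)\}$ and $\Phi_\kappa$ collects the now bounded, wavelet-discretized objective-plus-penalty integrand. By Assumptions \ref{radonnikodymass} and \ref{repsmoothnessass}, $\mathcal G$ lies in a $\beta$-H\"older ball on $[0,1]$ with $\beta>\tfrac12$; its bracketing entropy grows like $\varepsilon^{-1/\beta}$, so the bracketing integral $\int_0^1\varepsilon^{-1/(2\beta)}\,d\varepsilon$ converges \emph{precisely} because $\beta>\tfrac12$, which makes $\mathcal G$---and hence $\Phi_\kappa\mathcal G$---a Donsker class. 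I would then invoke a Talagrand/bounded-differences concentration bound of the type in \citet*[Chapter 2.14]{wellner2013weak}: because the summands are bounded (by $C_{RN}$ times the bounded integrand), a one-sided exponential tail $\bar C\exp(-\bar D\,l\,t^2)$ holds, and setting it equal to $\rho$ gives the stochastic term $\sqrt{\log(\bar C/\rho)/(\bar D\,l)}$. The delicate bookkeeping is tracking how $C_{RN}$, the penalty $\lambda$, the Lipschitz constant $K$, the fixed law $F_{Y,X|Z=z}$, and $\beta$ enter $\bar C$ and $\bar D$ (with $\bar D<2$ arising as the sub-Gaussian variance proxy), verifying that the entropy bound survives the composition $\tilde Y^\kappa_{\tilde X^\kappa_z(i)}$, and confirming that the estimate holds for each fixed $z\in\mathcal Z$ with $z$-dependent constants.

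Finally, under part (ii) of Assumption \ref{objectivefunctionstronger} the objective $\mathds 1_{[0,y]}(Y_{x_0}(w))$ is not Lipschitz, so I would first replace it by the logistic surrogate $\mathcal S(Y_{x_0}(w),y,\eta)$, which is Lipschitz with the stated constant $K$, run the decomposition above to obtain the first two terms of \eqref{mainapproxeq2}---here the objective contributes only a bounded rather than $\alpha$-H\"older sieve error, which is why the bracket collapses to $[\omega'_{Y_x(w)}(2^{-\kappa+1})+1]\omega'_{X_z(w)}(2^{-\kappa+1})C_{RN}$---and then add back the smoothing bias $\|\mathds 1_{[0,y]}-\mathcal S(\cdot,y,\eta)\|$. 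Estimating this surrogate error, an integral of the sigmoid's deviation from the step that decays at rate $\log(\eta+1)/(\eta+1)$ with the indicated higher-order correction, produces the third line of \eqref{mainapproxeq2} and closes the argument. The only genuinely new difficulty relative to part (i) is balancing this logistic bias against the concentration term, i.e.~letting $\eta\to\infty$ at a rate compatible with $l\to\infty$.
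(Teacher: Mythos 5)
Your proposal is correct and follows essentially the same route as the paper: the same triangle-inequality split through an intermediate program with wavelet paths but population law $P_0$, the nonexpansiveness of $\min/\max$ under uniform integrand perturbations, the H\"older-ball entropy bound $\varepsilon^{-1/\beta}$ with $\beta>\tfrac12$ feeding the tail bound of \citet*[Chapter 2.14]{wellner2013weak} followed by inversion, and the logistic surrogate with Hausdorff-distance bias $\log(\eta+1)/(\eta+1)$ for the indicator case. The only cosmetic deviation is that you justify the sieve estimate $\|Y_x-\tilde Y^\kappa_x\|_\infty\leq\omega'_{Y_x(w)}(2^{-\kappa+1})$ heuristically via piecewise-linear interpolation, whereas the paper proves exactly this bound for the $L^2$-projection coefficients $\langle f,\varphi_{\kappa j}\rangle$ on c\`adl\`ag functions in a separate lemma (via a Darboux-integral argument), and the paper's constraint computation shows the indicator-rectangle perturbation is exactly the \emph{product} $|Y_{X_z}-\tilde Y^\kappa_{\tilde X^\kappa_z}||X_z-\tilde X^\kappa_z|$, which your description subsumes.
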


Theorem \ref{maintheorem2} jointly provides a lower bound on the number of sampled paths $l$ as well as the number of terms in the wavelet decomposition $\kappa$ for the semi-infinite program to provide a good approximation to the infinite dimensional program for a requested probability $\rho$. In fact, it is a finitary analogue of a classical consistency result, as it also provides the rate of convergence as $l,\kappa\to\infty$. Theorem \ref{maintheorem2} holds for a fixed observable distribution $F_{Y,X|Z}$; this observable distribution can be an estimator $\hat{F}_{Y,X|Z;n}$, in which case we consider the number of data points $n$ as fixed. The result follows from concentration results for empirical processes with functions of finite entropy numbers \citep*[chapter 2.14]{wellner2013weak}. The bound on the right hand side is composed of two terms. The first term in brackets captures the nonprobabilistic approximation of the value functions when the paths $Y_x$ and $X_z$ are approximated by the wavelet bases $\tilde{Y}_x^\kappa$ and $\tilde{X}_z^\kappa$, respectively and depends on the extended modulus of continuity $\omega'$ of the paths of either process. The two terms here correspond to the constraint and the objective function, respectively. The second term is the probabilistic concentration bound based on the sample of $l$ paths. The constants here also depend on $\kappa$. In the case where the objective function is an indicator function, we need an additional approximation term which follows from approximating the indicator function by a logistic function.

\paragraph{Intuition for Theorem \ref{maintheorem2} and comparison to other approaches}\mbox{}\\
Theorem \ref{maintheorem2} is but one example of many different approximation guarantees of similar form, depending on what kind of assumptions one is willing to make on the paths (via $P_0$) and $\frac{dP_W}{dP_0}$. In fact, it is the introduction of the randomness via sampling that permits the use of statistical procedures for the purpose of function approximation. The idea is to perform statistical estimation \emph{with respect to the counterfactual elements via sampling}. In this respect, Assumption \ref{repsmoothnessass} is important for obtaining the rate of convergence in Theorem \ref{maintheorem2}: if all Radon-Nikodym derivatives $\frac{dP_W}{dP_0}(w)$ are smooth and have support in all of $[0,1]$, one does not need many draws $l$ to approximate the optimization problem. For more concentrated densities, the approximation can be much worse in the sense that one needs a substantially larger sample $l$ to approximate the optimal density. The approach proposed via Theorem \ref{maintheorem2} can hence be seen as a nonparametric estimation of a probability density with infinite dimensional support (i.e.~where each data-point is a path of a process) and where one has complete control over the data sample. Different assumptions on the smoothness of the densities will lead to different lower bounds on the number of sampled paths. 

Theorem \ref{maintheorem2} is a quantitative approximation result which is significantly harder, if not impossible, to achieve with an approach that obtains a complete set of inequalities describing the identified set (e.g.~\citeauthor*{chesher2017generalized} \citeyear{chesher2017generalized}). In fact, Assumption \ref{repsmoothnessass} has potential analogues in the inequality approach: smoothness of the Radon-Nikodym derivative is similar to the assumption that the identified set can be described by (uncountably) many inequalities which all only contribute minimally to the identified set. If one is not willing to make this assumption, it can happen that the Radon-Nikodym derivatives are much more concentrated, i.e.~putting significantly more weight on only a few paths. In this case, the bound in Theorem \ref{maintheorem2} implies that one needs to sample more paths depending on the new assumption on $P_0$ and $\frac{dP_W}{dP_0}$. The latter setting corresponds to the case where there are only a few inequalities which ``determine the form of the identified set''. In this case, one would also need to sample many more inequalities in order to ``stumble upon'' the few which determine the shape of the set (see the argument in \citeauthor*{de2004constraint} \citeyear{de2004constraint}). 

Theorem \ref{maintheorem2} hence complements the existing results on the probabilistic approximation of infinite dimensional (linear) programs by semi-infinite dimensional programs proposed in \citet*{girosi1995approximation} and \citet*{de2004constraint}. In particular, the method allows for standard statistical methods to obtain approximation results by introducing the sampling probability $P_0$, which leads to flexible and general quantitative results like Theorem \ref{maintheorem2}. In contrast, \citet*{de2004constraint} use a learning result (Theorem 8.1.4 in \citeauthor{anthony1997computational} \citeyear{anthony1997computational}) to obtain a probabilistic approximation result, which is not as flexible as using nonparametric estimation results. Intuitively, Theorem \ref{maintheorem2} puts a probability measure on the ``set of all inequalities'' in a classical linear program. Instead of working with inequalities like \citet*{de2004constraint} or \citet*{chesher2017generalized}, we work with stochastic processes and put a probability measure on the paths.

\subsection{Inference results}\label{inferencesec}
The idea of the sampling approach is to introduce additional randomness into the problem by sampling $l$ paths, i.e.~sampling the coefficients of the basis functions.
The classical statistical randomness in problems \eqref{minimizereqpen} and \eqref{minimizereqapproxpen} follows from approximating the population distribution $F_{Y,X|Z}$ by a finite-sample estimator $\hat{F}_{Y,X|Z;n}$, potentially smoothed via some bandwidth $h_n$, where $n$ denotes the size of this sample.\footnote{The theoretical results in this section are derived for the standard empirical cumulative distribution function $\hat{F}_{Y,X|Z;n}$. They extend straightforwardly to smoothed estimators $\hat{F}_{Y,X|Z;h_n}$. For this, all one has to do in the proofs is to replace the classical Glivenko-Cantelli and Donsker theorems by analogous versions for smoothed empirical processes. These results (and the corresponding weak assumptions) are contained in \citet*{gine2008uniform} for instance.} 

This subsection introduces large sample results which enable the researcher to perform inference on the solution of the programs \eqref{minimizereqpen}. These results are only derived for each bound separately, i.e.~for the lower bound and the upper bound. In order to derive inference bounds on the whole identified set, it might be possible to use well-established results from the literature, such as \citet*{imbens2004confidence}, \citet*{stoye2009more}, and especially \citet*{kaido2019confidence}.

Even though the theoretical programs \eqref{minimizereqapproxpen} have relaxed constraints, it could potentially still be the case that for very large $\lambda$ there exist data-generating processes $F_{Y,X|Z}$ that no $P_{W}\in\mathscr{P}^*(\mathcal{W})$ can replicate; this is especially true in the case where many additional form restrictions are imposed on the model via $P_0$. In this case, however, the data-generating process directly introduces testable assumptions on the model, as the model under the respective assumptions is not able to replicate the observed data. Since the focus of this article is on estimation of bounds, it is convenient to introduce an assumption on the data-generating process which guarantees that the constraint is non-empty. 

In the following, $\mathcal{F}_{Y,X|Z=z}$ denotes a set of all conditional cumulative distribution functions on $[0,1]^2$ satisfying certain assumptions the researcher is comfortable to assume for the given data-generating process. This set is equipped with the $L^\infty([0,1]^2)$-norm. 
\begin{assumption}[Non-emptiness of the constraint set] \label{nonemptinessass}
For given $F_{Y,X|Z=z}\in\mathcal{F}_{Y,X|Z=z}$ there exists a ball $\mathcal{B}_r\in\mathcal{F}_{Y,X|Z=z}$ of radius $r>0$ such that the constraint set $\mathcal{C}\coloneqq$ \[\left\{P_W\in \mathscr{P}^*(\mathcal{W}): \int\left\|F'_{Y,X|Z=z}(y,x) - \mathds{1}_{[0,\cdot]\times[0,\cdot]}(Y_{X_z(w)}(w),X_{z}(w))\frac{dP_W}{dP_0}(w)\right\|_{L^2([0,1]^2)}^2dP_0\leq \varepsilon\right\}\]
is non-empty for some small $\varepsilon>0$ and all $F'_{Y,X|Z=z}\in\mathcal{B}_r$, $z\in[0,1]$.
\end{assumption}

Assumption \ref{nonemptinessass} is deliberately high-level, because (i) specific assumptions on the data-generating process $\mathcal{F}_{Y,X|Z=z}$ and the model $\mathscr{P}^*(\mathcal{W})$ usually come from economic theory, (ii) an empty constraint for a given data-generating process corresponds to the existence of testable implications on the model, and (iii) Assumption \ref{nonemptinessass} is only required to obtain regular and well-behaved asymptotic results, but not for any other results in this article. It is also straightforward to derive a low-level sufficient condition on $\mathcal{F}_{Y,X|Z=z}$ implying Assumption \ref{nonemptinessass} in the case where only Assumption \ref{skorokhodpath} but no other shape restrictions are imposed on $\mathscr{P}^*(\mathcal{W})$. 

For instance, when the set $\mathcal{F}_{Y,X|Z=z}$ consists only of distribution functions $F_{Y,X|Z=z}$ which are laws to stochastic processes $[Y,X]_z$ whose paths have an extended modulus of continuity $\omega'_{[Y,X]_z}(\delta)$ satisfying $\limsup_{\delta\to0} \omega'_{[Y,X]_z}(\delta) = 0$, then Assumption \ref{nonemptinessass} is fulfilled. This follows directly from the fact that this condition on the extended modulus of continuity implies that almost all paths of the observable process $[Y,X]_z$ lie in the Skorokhod space defined by Assumption \ref{skorokhodpath}, which is the assumption made on the latent process $[Y,X]^*_z$.

$\hat{F}_{Y,X|Z;n}$ denotes the conditional empirical distribution function, $\hat{\tilde{V}}_{*,l,\kappa}(\hat{F}_{Y,X|Z;n})$ and $\hat{\tilde{V}}^*_{l,\kappa}(\hat{F}_{Y,X|Z;n})$ denote the value functions of \eqref{minimizereqapproxpen} when replacing $F_{Y,X|Z}$ by an estimator $\hat{F}_{Y,X|Z;n}$ and $V_{*}(F_{Y,X|Z})$ and $V^*(F_{Y,X|Z})$ denote their counterparts in the population. 

The first result concerns the consistency and is based on the Glivenko-Cantelli theorem \citep*[Theorem 19.1]{van2000asymptotic} which provides the convergence $\hat{F}_{Y,X|Z;n}$ to $F_{Y,X|Z}$ in $L^\infty([0,1]^2)$-norm. 
\begin{proposition}[Consistency]\label{consistencyprop}
Under Assumptions \ref{skorokhodpath} -- \ref{nonemptinessass}
\[P\left(|\hat{\tilde{V}}_{*,l,\kappa}(\hat{F}_{Y,X|Z=z;n}) - V_{*}(F_{Y,X|Z=z})|\right)\to0\quad\text{and}\quad P\left(|\hat{\tilde{V}}^*_{l,\kappa}(\hat{F}_{Y,X|Z=z;n}) - V^*(F_{Y,X|Z=z})|\right)\to0\]
as $l,\kappa,n\to\infty$ for all $\lambda\in\mathbb{R}$ and $z\in\mathcal{Z}$.
\end{proposition}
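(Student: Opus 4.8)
The plan is to bound the target difference by a triangle-inequality decomposition that isolates the two distinct sources of error: the statistical error from replacing the true conditional distribution $F_{Y,X|Z=z}$ by its empirical counterpart $\hat{F}_{Y,X|Z=z;n}$, and the error from the path-sampling and wavelet approximation of the infinite-dimensional program by its semi-infinite analogue. Concretely, I would write
\[|\hat{\tilde{V}}_{*,l,\kappa}(\hat{F}_{Y,X|Z=z;n}) - V_{*}(F_{Y,X|Z=z})| \leq \underbrace{|\hat{\tilde{V}}_{*,l,\kappa}(\hat{F}_{Y,X|Z=z;n}) - \tilde{V}_{*,l,\kappa}(F_{Y,X|Z=z})|}_{(A)} + \underbrace{|\tilde{V}_{*,l,\kappa}(F_{Y,X|Z=z}) - V_{*}(F_{Y,X|Z=z})|}_{(B)},\]
and treat the two terms separately, exploiting that the i.i.d.\ data randomness of size $n$ driving $(A)$ is independent of the artificial path-sampling randomness driving $(B)$. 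The argument for the upper bound is identical, with $V^*$ and $\tilde{V}^*_{l,\kappa}$ in place of $V_*$ and $\tilde{V}_{*,l,\kappa}$.

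Term $(B)$ is handled directly by Theorem \ref{maintheorem2}: for the fixed distribution $F_{Y,X|Z=z}$ it yields, with probability at least $1-\rho$ over the sampling, a bound whose leading bracket consists of extended-modulus-of-continuity terms $\omega'_{Y_x(w)}(2^{-\kappa+1})$ and $\omega'_{X_z(w)}(2^{-\kappa+1})$ and whose remaining term is the concentration piece $\sqrt{\log(\bar{C}/\rho)/(\bar{D}l)}$. By Assumption \ref{skorokhodpath} the supremum over $w$ of each modulus tends to $0$ as $\kappa\to\infty$, and the concentration term tends to $0$ as $l\to\infty$, so $(B)\to0$ in sampling probability. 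I would record this as: for any $\epsilon,\rho>0$ there exist $l_0,\kappa_0$ such that $P((B)>\epsilon)<\rho$ for all $l\geq l_0$, $\kappa\geq\kappa_0$.

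For term $(A)$ the key is a Lipschitz-stability estimate of the value function in its distributional argument that is \emph{uniform} in $l$ and $\kappa$. Since $F_{Y,X|Z=z}$ enters the penalized objective \eqref{minimizereqapproxpen} only through the quadratic penalty, for any feasible $\hat{P}_W$ the difference of objectives is controlled by the elementary identity $\bigl|\|a\|^2-\|b\|^2\bigr|\leq \|a-b\|\,(\|a\|+\|b\|)$ applied to the two deviation terms; because all the cumulative distribution functions and model indicators take values in $[0,1]$ and the Radon-Nikodym derivative is bounded by $C_{RN}$ (Assumption \ref{radonnikodymass}), the factor $\|a\|+\|b\|$ is bounded by a constant, giving
\[|\hat{\tilde{V}}_{*,l,\kappa}(\hat{F}_{Y,X|Z=z;n}) - \tilde{V}_{*,l,\kappa}(F_{Y,X|Z=z})| \leq C\lambda\,\|\hat{F}_{Y,X|Z=z;n}-F_{Y,X|Z=z}\|_{L^2([0,1]^2)}\]
with $C$ depending only on $C_{RN}$ and not on $l,\kappa$ (taking the supremum over $\hat{P}_W$ before optimizing preserves the bound for the optimal values). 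Assumption \ref{nonemptinessass} ensures that once $\hat{F}_{Y,X|Z=z;n}$ enters the ball $\mathcal{B}_r$ the constraint set is non-empty, so the value functions stay finite and the estimate is legitimate in a neighborhood of $F_{Y,X|Z=z}$. The Glivenko-Cantelli theorem \citep*[Theorem 19.1]{van2000asymptotic} gives $\|\hat{F}_{Y,X|Z=z;n}-F_{Y,X|Z=z}\|_{L^\infty([0,1]^2)}\to0$ almost surely, which on the bounded domain $[0,1]^2$ dominates the $L^2$-norm; hence $(A)\to0$ in data probability as $n\to\infty$, uniformly in $l,\kappa$.

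Combining the two pieces via a union bound over the independent randomness sources then yields the claim. I expect the main obstacle to be precisely this \emph{uniformity across the joint limit}: one must check both that the stability constant $C\lambda$ in $(A)$ does not degrade as $l,\kappa\to\infty$ (which the boundedness argument above secures) and that the $\kappa$-dependent constants $\bar{C},\bar{D}$ of Theorem \ref{maintheorem2} in $(B)$ do not grow fast enough to overwhelm the vanishing modulus terms. A clean route around the latter is to pass to the limits iteratively in the order $n\to\infty$, then $l\to\infty$, then $\kappa\to\infty$, each step using that the preceding bound is uniform in the parameters still tending to infinity; the simultaneous statement then follows provided the joint rates of $l,\kappa,n$ are chosen compatibly.
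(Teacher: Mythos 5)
Your proof is correct, but it runs the triangle inequality through a different pivot than the paper and consequently uses a different tool for the statistical leg. The paper decomposes around $V_*(\hat{F}_{Y,X|Z=z;n})$, i.e.\ the \emph{population} program evaluated at the empirical distribution: the sampling/approximation term $|\hat{\tilde{V}}_{*,l,\kappa}(\hat{F}_{n}) - V_*(\hat{F}_{n})|$ is killed by rerunning the argument of Theorem \ref{maintheorem2} with $\hat{F}_{n}$ held fixed, and the statistical term $|V_*(\hat{F}_{n})-V_*(F)|$ is handled only \emph{qualitatively}, via continuity of $V_*$ in its $L^\infty$-argument (invoking Assumption \ref{nonemptinessass}) together with Glivenko--Cantelli and the continuous mapping theorem. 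You instead pivot at $\tilde{V}_{*,l,\kappa}(F)$, applying Theorem \ref{maintheorem2} at the fixed population distribution and proving a \emph{quantitative} Lipschitz stability bound $C\lambda\|\hat{F}_{n}-F\|_{L^2}$ for the sampled value function, uniform in $l,\kappa$; your derivation of that bound is sound, since $F$ enters the penalized program only through the quadratic penalty, the feasible set does not depend on $F$, all ingredients are bounded by $1$ or $C_{RN}$ (Assumption \ref{radonnikodymass}), and $|\min f - \min g|\leq\sup|f-g|$ transfers the pointwise estimate to the optimal values. Each route buys something: the paper's avoids any stability analysis of the finite-sample program, needing only soft continuity of $V_*$; yours avoids applying Theorem \ref{maintheorem2} at the data-dependent $\hat{F}_{n}$ --- a point the paper glosses over, since the constants $\bar{C},\bar{D}$ in that theorem depend on the conditional distribution at which it is invoked --- and your explicit, $(l,\kappa)$-free Lipschitz constant makes the joint limit transparent (indeed, the same computation would also substantiate the paper's somewhat informal claim that $V_*$ is continuous in its argument). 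Your closing caution about the $\kappa$-dependence of $\bar{C},\bar{D}$ and the iterated-limit fix is more careful than the paper itself, which simply lets $l,\kappa,n\to\infty$ without discussing compatibility of rates.
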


Similar to the consistency result, the derivation of the large sample distribution follows form Donsker's theorem \citep*[Theorem 19.3]{van2000asymptotic} in combination with standard sensitivity arguments in optimization problems \citep*{bonnans2013perturbation} and the functional delta method \citep*[Theorem 2.1]{shapiro1991asymptotic}. This is captured in the following
\begin{proposition}[Asymptotic distribution]\label{largesampleprop}
If Assumptions \ref{skorokhodpath} -- \ref{nonemptinessass} hold and if 
\begin{equation}\label{weneedconv}
\begin{aligned}
&\sqrt{n}\left[\sup_{w\in[0,1]}\omega'_{Y_x(w)}(2^{-\kappa+1})\omega'_{X_z(w)}(2^{-\kappa+1})C_{RN}+\sup_{w\in[0,1]}K\left(\omega'_{Y_x(w)}(2^{-\kappa+1})\right)^\alpha\right]\to0\qquad\text{and}\\
&\frac{n}{l}\to 0 
\end{aligned}
\end{equation}
as $n\to\infty$, then
\begin{align*}
&\sqrt{n}(\hat{\tilde{V}}_{*,l,\kappa}(\hat{F}_{Y,X|Z=z;n})-V_{*}(F_{Y,X|Z=z}))\rightsquigarrow dV_{*,F_{Y,X|Z=z}}(\mathbb{G}_{F_{Y,X|Z=z}})\qquad\text{and}\\
&\sqrt{n}(\hat{\tilde{V}}^*_{l,\kappa}(\hat{F}_{Y,X|Z=z;n})-V^*(F_{Y,X|Z=z}))\rightsquigarrow dV^*_{F_{Y,X|Z=z}}(\mathbb{G}_{F_{Y,X|Z=z}}),
\end{align*}
for almost all $z\in\mathcal{Z}$ as $n\to\infty$. 
\end{proposition}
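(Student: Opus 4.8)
The plan is to combine a sensitivity (envelope) analysis of the value functional with the functional delta method, while treating the path--sampling error as an asymptotically negligible remainder. Throughout I work with the penalized programs \eqref{minimizereqpen} and \eqref{minimizereqapproxpen} at a \emph{fixed} penalty $\lambda$, which is convenient because the observable distribution then enters the objective smoothly---quadratically, through the penalty term---rather than through a hard constraint, making the value a well-behaved functional of $F_{Y,X|Z=z}$. I keep the two sources of randomness separate: the \emph{statistical} randomness from estimating $F_{Y,X|Z=z}$ by $\hat{F}_{Y,X|Z=z;n}$, which produces the Gaussian limit, and the \emph{artificial} randomness from sampling the $l$ paths, which the rate conditions \eqref{weneedconv} will render negligible after scaling by $\sqrt{n}$. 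For brevity write $F\coloneqq F_{Y,X|Z=z}$ and $\hat{F}_n\coloneqq\hat{F}_{Y,X|Z=z;n}$, and let $V_*(\cdot)$ denote the value of the \emph{infinite} program \eqref{minimizereqpen} regarded as a functional of the observable distribution plugged into its penalty.

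First I would split
\[
\sqrt{n}\big(\hat{\tilde{V}}_{*,l,\kappa}(\hat{F}_n)-V_*(F)\big)=R_n+S_n,
\]
where
\[
R_n\coloneqq\sqrt{n}\big(\hat{\tilde{V}}_{*,l,\kappa}(\hat{F}_n)-V_*(\hat{F}_n)\big),
\qquad
S_n\coloneqq\sqrt{n}\big(V_*(\hat{F}_n)-V_*(F)\big).
\]
The remainder $R_n$ is precisely the approximation error of Theorem \ref{maintheorem2}, evaluated at the conditionally fixed distribution $\hat{F}_n$: its deterministic bracket, multiplied by $\sqrt{n}$, vanishes by the first line of \eqref{weneedconv}, while its probabilistic part contributes $\sqrt{n}\,\sqrt{\log(\bar{C}/\rho)/(\bar{D}l)}=\sqrt{(n/l)\,\log(\bar{C}/\rho)/\bar{D}}$, which tends to zero because $n/l\to0$. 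The one subtlety is that the constants $\bar{C},\bar{D}$ depend on $F$, so Theorem \ref{maintheorem2} must hold \emph{uniformly} over a neighborhood of $F$ for the bound to remain valid at the random argument $\hat{F}_n$; this is exactly what Assumption \ref{nonemptinessass} buys, since it guarantees a ball $\mathcal{B}_r$ of admissible distributions on which the constraint set---and hence the relevant constants---are uniformly controlled, while $\hat{F}_n\in\mathcal{B}_r$ with probability tending to one by Glivenko--Cantelli. Hence $R_n\xrightarrow{P}0$.

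It remains to show $S_n\rightsquigarrow dV_{*,F}(\mathbb{G}_F)$, which I would obtain from the functional delta method in two steps. Step one establishes that the map $F\mapsto V_*(F)$ is Hadamard differentiable at $F$, tangentially to the support of $\mathbb{G}_F$, with derivative $dV_{*,F}$. Because of the penalized form, $F$ enters the objective as a smooth quadratic perturbation, so standard perturbation analysis of optimal values \citep*{bonnans2013perturbation} yields a directional (Danskin/envelope) derivative equal to the derivative of the objective in the perturbation direction, evaluated over the set of optimal $\frac{dP_W}{dP_0}$; Assumption \ref{nonemptinessass} guarantees this optimal set is non-empty and the perturbation stable. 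Step two invokes Donsker's theorem \citep*[Theorem 19.3]{van2000asymptotic} for the conditional empirical distribution, giving $\sqrt{n}(\hat{F}_n-F)\rightsquigarrow\mathbb{G}_F$ in $L^\infty([0,1]^2)$ for almost all $z$. Combining the two via the delta method \citep*[Theorem 2.1]{shapiro1991asymptotic} gives $S_n\rightsquigarrow dV_{*,F}(\mathbb{G}_F)$, and a final application of Slutsky's lemma to $R_n+S_n$ yields the claim; the argument for the upper bound $V^*$ is identical.

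The main obstacle is Step one: upgrading the directional differentiability of the value of an \emph{infinite-dimensional} penalized program to full Hadamard differentiability. This requires enough regularity at the optimum---uniqueness of the optimal law $P_W$ (or at least linearity of the relevant support functional along the tangent directions) together with a second-order growth/non-degeneracy condition---so that the directional derivative is continuous and linear in the perturbation and the delta method applies. Controlling the perturbation uniformly over the infinite-dimensional decision variable $\frac{dP_W}{dP_0}$, and ensuring the constants of Theorem \ref{maintheorem2} are genuinely uniform over $\mathcal{B}_r$ so that $R_n$ vanishes, are the two places where the bulk of the technical work will sit.
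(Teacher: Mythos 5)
Your decomposition into $R_n+S_n$, the treatment of $R_n$ via Theorem \ref{maintheorem2} with the $\sqrt{n}$-rescaled tail bound (deterministic bracket killed by the first line of \eqref{weneedconv}, concentration term killed by $n/l\to0$), and the plan of Donsker's theorem plus \citet*[Theorem 2.1]{shapiro1991asymptotic} for $S_n$ coincide with the paper's proof; your remark that the constants must be controlled uniformly in a neighborhood of $F_{Y,X|Z=z}$ is a sensible refinement the paper handles only implicitly, and it is harmless here since the entropy and Lipschitz constants in Part 2 of the proof of Theorem \ref{maintheorem2} depend on the distribution only through bounds ($\|F\|_\infty\leq1$, $C_{RN}$, $\beta$, $M$) that hold uniformly over all conditional CDFs.

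The genuine misstep is your ``Step one''. You assert that the delta method forces you to upgrade directional differentiability of $V_*$ to \emph{full} Hadamard differentiability, and that this is where the bulk of the work sits, requiring uniqueness of the optimal law $P_W$ or a second-order growth condition. Neither is needed, and neither is assumed in the proposition: \citet*[Theorem 2.1]{shapiro1991asymptotic} is precisely a delta method for maps that are only \emph{directionally} Hadamard differentiable, with a derivative that is continuous and positively homogeneous but possibly nonlinear in the direction. The paper accordingly applies the Danskin-type sensitivity result of \citet*[Theorem 4.13]{bonnans2013perturbation}---whose inf-compactness hypothesis is exactly what Assumption \ref{nonemptinessass} supplies---to conclude that the directional derivative is a \emph{minimum over the entire solution set} $\mathcal{S}(F_{Y,X|Z})$ of the linear perturbation term $\int 2\lambda\langle F, F_{Y,X|Z}-\mathds{1}_{[0,\cdot]\times[0,\cdot]}(Y_{X_z(w)}(w),X_z(w))\frac{dP_W}{dP_0}(w)\rangle dP_0(w)$; no singleton solution set is required, and the limit $dV_{*,F_{Y,X|Z=z}}(\mathbb{G}_{F_{Y,X|Z=z}})$ is in general a nonlinear functional of the Brownian bridge. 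This nonlinearity is not a defect but a feature the paper anticipates---it is why the text points to \citet*{dumbgen1993nondifferentiable}, \citet*{fang2018inference}, and \citet*{hong2018numerical} for inference rather than the standard bootstrap. Had you pursued linearity as written, you would either be importing assumptions the proposition does not make or getting stuck on a claim that is false in general. What actually remains after the decomposition is modest: compute the directional derivative of the quadratic penalty (dominated convergence, using boundedness of $\frac{dP_W}{dP_0}$ on $[0,1]$), verify continuity of $m_h(w;\cdot)$ and its derivative in the distribution, and invoke the two cited theorems.
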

Here, $dV_{*,F_{Y,X|Z=z}}(F'_{Y,X|Z=z})$ is the directional Hadamard derivative of $V_*$ defined by \eqref{minimizereqpen} at $F_{Y,X|Z=z}$, $\mathbb{G}_{F_{Y,X|Z=z}}$ is a Brownian bridge with covariance function 
\[\text{Cov}_{\mathbb{G}_{F_{Y,X|Z=z}}} = F_{Y,X|Z=z}(\min\{y,y'\},\min\{x,x'\})-F_{Y,X|Z=z}(y,x)F_{Y,X|Z=z}(y',x')\] for all $(y,x),(y',x')\in[0,1]^2$, $z\in\mathcal{Z}$, and ``$\rightsquigarrow$'' denotes weak convergence.

The directional Hadamard derivative takes the form 
\[\delta_{F_{Y,X|Z}}V_{*}(F) = \min_{h\in\mathcal{S}(F_{Y,X|Z})}\int2\lambda\left\langle F, F_{Y,X|Z}-\mathds{1}_{[0,\cdot]\times[0,\cdot]}(Y_{X_z(w)}(w),X_z(w))\frac{dP_W}{dP_0}(w)\right\rangle dP_0(w),\]
where $\mathcal{S}(F_{Y,X|Z})$ is the solution set of $V_{*}(F_{Y,X|Z})$, i.e.~the set of all $\frac{dP_W}{dP_0}(w)$ with $P_W,P_0\in\mathscr{P}^*(\mathcal{W})$ that solve \eqref{minimizereqpen} for $F_{Y,X|Z}$.\footnote{Proposition \ref{largesampleprop} works equally well with a smoothed estimator $\hat{F}_{Y,X|Z=z;h_n}$ of $F_{Y,X|Z=z}$ and bandwidth $h_n$. The only important requirement is that the respective empirical process converges to a Brownian bridge.} $\left\langle f,g\right\rangle$ is the inner product in $L^2([0,1]^2)$, i.e.
\[\left\langle f,g\right\rangle\coloneqq\int_{[0,1]^2}f(y,x)g(y,x)dydx.\]

\eqref{weneedconv} requires that the sample size of the paths grows faster than the data, and the wavelet approximation $\kappa$ is also sufficiently fast given the continuity properties of the paths. Even though the large sample distribution of the value functions is not a standard Brownian bridge process, it still has a relatively common form from a purely statistical perspective, as it takes the form of the first-order directional Hadamard derivative of the value function taken at $F_{Y,X|Z=z}$ in directions $F'_{Y,X|Z=z}\in\mathcal{F}_{Y,X|Z=z}$. 

In addition, there are several results in the literature (\citeauthor*{dumbgen1993nondifferentiable} \citeyear{dumbgen1993nondifferentiable}, \citeauthor*{fang2018inference} \citeyear{fang2018inference}, \citeauthor*{hong2018numerical} \citeyear{hong2018numerical}) which establish bootstrap methods for estimating this type of large sample distribution in practice. In particular, they deal with general directional Hadamard differentiability \citep*{shapiro1991asymptotic}, which conforms with Proposition \ref{largesampleprop}, so that these subsampling/bootstrap results are directly applicable to the problems \eqref{minimizereqapprox}. These bootstrap-type arguments are convenient in models with a light computational burden mostly. In more complex models one should use the analytically derived large sample theory.

\section{Practical implementation}\label{practicalimp}
The programs \eqref{minimizereqapprox} are semi-infinite programs \citep*{anderson1987linear}, which naturally reduce to finite dimensional problems in practice by approximating the space $[0,1]^3$ where $Y$, $X$, and $Z$ live. One can do this in two general ways. The first is to simply evaluate $\hat{F}_{Y,X|Z=z;n}$ on the values taken by the sample $(Y_i,X_i,Z_i)_{i=1,\ldots,n}$. The second is to evaluate $\hat{F}_{Y,X|Z=z;n}$ on a finite grid that spans $[0,1]^3$. This article focuses on the latter part as a grid approach gives more flexibility with respect to the computational requirements: one can make the grid coarser or finer, depending on the available memory.\footnote{This is also a difference to the computational approach in \citet*{kitamura2018nonparametric}, who set up their problem based on the observed realizations in the data and not a grid. This is a direct consequence of their finitary approach. In contrast, the infinite approach in this article allows arbitrary discretization of $Y,X$, and $Z$ and is therefore also a potential complementary approach to \citet*{kitamura2018nonparametric} in their setting.} Throughout this section, the index $\iota$ captures the degree of approximation of the grid. For instance, $\iota=11$ means that this approximation decomposes the unit interval into $11$ points $0,0.1,0.2,\ldots,0.9,1$, which will be taken to be equidistant without loss of generality. Also, and without loss of generality, all three intervals for $Y$, $X$ and $Z$ are decomposed in the same way, so that $\iota$ is the only necessary parameter controlling the approximation.

The practical implementation deviates from the theoretical approach in that it uses a smoothed variant $\hat{F}_{Y,X|Z=z;h_n}$ of the empirical conditional cumulative distribution function, where the bandwidth is determined via cross-validation. Heuristically, it seems as though the introduced smoothness gives more robust results compared to the standard empirical cumulative distribution function.\footnote{For the practical estimation of $\hat{F}_{Y,X|Z=z;h_n}$ the method uses the ``np''-package in $R$ \citep*{hayfield2008nonparametric} with a standard cross-validated bandwidth.}

Under a given finite approximation, the programs take the form
\begin{equation}\label{minimizereqapprox2pen}
\underset{\mu\geq0, \vec{1}'\mu\leq 1}{\text{minimize/maximize}} \qquad\Xi'\mu+\frac{\lambda}{2}\|\tilde{\Theta}\mu - \hat{F}_{Y,X|Z;h_n}\|_2^2
\end{equation}
where $\mu$ is a $l\times 1$ vector which corresponds to the Radon-Nikodym derivative $\frac{d\hat{P}_W}{d\hat{P}_0}(i)$ with row-dimension equal to the number of sampled paths $l$,\footnote{Note that all elements in $\mu$ must lie in $[0,1]$, as $\frac{d\hat{P}_W}{d\hat{P}_0}(i)$ is defined on the \emph{finite and discrete space} of $l$ paths which were sampled by some $\hat{P}_0$. This means that $\frac{d\hat{P}_W}{d\hat{P}_0}(i)$ can only put non-negative probabilities of at most one on the occurrence of each path. Intuitively, this follows from the fact that $\frac{d\hat{P}_W}{d\hat{P}_0}(i)$ is a probability mass function. These bounds on $\mu$ are included as additional constraints using $\vec{1}$.} $\vec{1}$ denotes the vector of the same dimension as $\mu$ containing all ones, $\Xi'$ is a $1\times l$ vector corresponding to $f(Y_x,x_0)$, and $\|\cdot\|_2$ denotes the Euclidean norm. $A'$ denotes the transpose of the matrix $A$. $\tilde{\Theta}$ is a $\iota^3\times l^2$-matrix which maps the realization of the stochastic processes to the distribution $\hat{F}_{Y,X|Z;h_n}$.
The $L^2([0,1]^2)$-norm from \eqref{minimizereqapprox} reduces to the Euclidean norm due to the approximation of $[0,1]^3$ by a finite grid. 

The choice of the Euclidean norm $\|\cdot\|_2$ for the constraint is convenient, as \eqref{minimizereqapprox2pen} can be rewritten as
\begin{equation}\label{minimizereqapprox2penfinal}
\begin{aligned}
&\underset{\mu\geq0, \vec{1}'\mu\leq 1}{\min} \quad\frac{\lambda}{2}\mu' \tilde{\Theta}'\tilde{\Theta}\mu - \left(\lambda\tilde{\Theta}'\hat{F}_{Y,X|Z;h_n}-\Xi\right)'\mu+ \frac{\lambda}{2}\left(\hat{F}_{Y,X|Z;h_n}\right)'\hat{F}_{Y,X|Z;h_n}\\
&\underset{\mu\geq0, \vec{1}'\mu\leq 1}{\min} \quad\frac{\lambda}{2}\mu' \tilde{\Theta}'\tilde{\Theta}\mu - \left(\lambda\tilde{\Theta}'\hat{F}_{Y,X|Z;h_n}+\Xi\right)'\mu+ \frac{\lambda}{2}\left(\hat{F}_{Y,X|Z;h_n}\right)'\hat{F}_{Y,X|Z;h_n}.
\end{aligned}
\end{equation}
The programs \eqref{minimizereqapprox2penfinal} are quadratic due to the Euclidean norm used and can easily be solved. This article uses the alternating direction method of multipliers (ADMM) (\citeauthor*{boyd2011distributed} \citeyear{boyd2011distributed} and \citeauthor*{parikh2014proximal} \citeyear{parikh2014proximal}) for optimization. This algorithm is known to converge rather quickly to reasonable approximations of the optimum, which makes it a perfect tool for this purpose. The algorithm requires two more parameters, the augmented Lagrangian parameter $\rho$ and an over-relaxation parameter $\zeta$, which control the convergence of the ADMM algorithm to the optimum. In practice, an over-relaxation parameter of $\zeta=1.7$ and an augmented Lagrangian parameter of $\rho$ between $100$ and $500$ leads to fast and robust convergence.

The computational bottleneck in a practical implementation is the construction of the matrix $\tilde{\Theta}$, whose dimension grows exponentially with the granulation of the grid $\iota$ and the number of paths sampled $l$. Fortunately, since the penalty terms in \eqref{minimizereqpen} and \eqref{minimizereqapproxpen} include indicator functions, the matrix $\tilde{\Theta}$ takes the form of a binary sparse matrix: for each point $(y_\iota,x_\iota,z_\iota)\in[0,1]^3$ in the grid a given combination of paths $Y_x(i)$ and $X_z(i)$ either gets assigned a $0$ if they jointly ``do not go through'' the intervals $[0,y_\iota]\times [0,x_\iota]$ for given values $z_i$ or a $1$ if they jointly do. This sparseness is helpful as sparse matrices can be stored efficiently. In addition, the process of setting up $\tilde{\Theta}$ can be parallelized, which abates the computational costs even further if the researcher has access to several cores. 

In many cases, however, a researcher might only have access to computational resources with very limited working memory. In such situations, it is still possible to apply the proposed method by a ``sampling trick'' which trades off memory requirements for time. In particular, the idea is to iteratively (i) sample with replacement a relatively small initial number $l_{0}$ of paths (depending on the available memory), (ii) optimize the programs \eqref{minimizereqapprox2penfinal} on this sample, (iii) obtain the value functions \emph{as well as} the optimizers $\mu$, (iv) \emph{drop} all paths which were assigned a probability of (close to) $0$ by the optimizer $\mu$, (v) sample another relatively small number $l_{s}$, add these paths to the already existing paths and go back to (ii). The idea of this ``sampling trick'' is that paths which were assigned a probability of (close to) $0$ by the optimal $\mu$ do not matter for the optimal value. By dropping these paths before sampling new ones, the memory requirements do not grow or only grow modestly in practice---at the additional cost of having to run this optimization for many iterations.\footnote{Discarding elements ex-post in optimization routines is not new. For instance  \citet*{wu2001solving} use discards in solving the general capacity problem on Euclidean state space. Currently, the implemented approach is rather na\"ive as it performs a ``random grid search'' over the infinite dimensional path space. A more efficient implementation uses ideas from sequential MCMC approaches (see \citeauthor{schweizer2012thesis} \citeyear{schweizer2012thesis} for an overview): in the first iteration, randomly sample paths. Then optimize and discard all paths which were assigned a probability of zero. When sampling new paths, do not just sample randomly, but sample a fraction of paths which are close to the paths that were assigned a positive probability and sample another fraction of paths randomly to find other, different binding constraints. An interesting question in this setting is to optimize this procedure, i.e.~to optimize the fraction of randomly sampled paths compared to paths which are close to others. Intuitively, there seems to be an ``exploration-exploitation-tradeoff'', which one could optimize.}

We can now present our algorithm for solving the infinite dimensional linear programs approximately in practice.

\begin{algorithm*}\mbox{}
\begin{itemize}
\item \emph{Initial step}: randomly sample some set 
\[\mathcal{R}_0\coloneqq \left\{\left(Y_{x_\iota}(i), X_{z_\iota}(i)\right), i = 1,\ldots, l_{init}, \iota= 1, \ldots, m_j\right\}\] of initial paths, where $l_{init}$ is the number of initial paths to sample, and where $m_j$ is the number of grid-points on the unit interval based on the dyadic decomposition of order $j$. Sample paths with or without replacement\footnote{In our application we sample with replacement.} with sampling measure $P_0$. Fix some $\delta>0$ and $n_\delta\in\mathbb{N}$, which will control the convergence criterion of the algorithm. Compute the matrix $\Theta_0$ and the corresponding vector $\Xi_0$, based on these paths, either using the indicator function or the logistic approximation $S(Y_x(w),y,\eta)$ for some large $\eta>0$. Set $k=1$ and set 
\[\Theta_0^{max} = \Theta_0^{min} = \Theta_0\qquad\text{and}\qquad \Xi_0^{max} = \Xi_0^{min} = \Xi_0.\]
\item \emph{loop over iterations $k$:} 
\begin{enumerate}
\item Randomly sample a set 
\[\mathcal{R}_k\coloneqq \left\{\left(Y_{x_\iota}(i), X_{z_\iota}(i)\right), i = 1,\ldots, l_{add}, \iota= 1, \ldots, m_j\right\}\] of stochastic paths to add to the program, where $k_{add}$ is the number of paths to add. Sample paths with or without replacement under $P_0$ and make sure the sampled paths are unique. Compute the preliminary matrices $\widetilde{\Theta}^{min}_k$ and $\widetilde{\Theta}^{max}_k$ as well as the vectors $\widetilde{\Xi}_k^{min}$ and $\widetilde{\Xi}_k^{max}$ based on these paths as in the initial step. Update the matrices $\Theta_{k-1}^{min}$ and $\Theta_{k-1}^{max}$ as 
\[\Theta_k^{min} = [\Theta_{k-1}^{min}\thickspace\medspace \widetilde{\Theta}_k^{min}],\qquad \Theta_k^{max} = [\Theta_{k-1}^{max}\thickspace\medspace \widetilde{\Theta}_k^{max}],\] i.e.~by appending the respective columns of $\widetilde{\Theta}_k^{\cdot}$ to $\Theta_{k-1}^{\cdot}$. In addition, update the vectors $\Xi_{k-1}^{min}$ and $\Xi_{k-1}^{max}$ as
\[\Xi_{k}^{min} = \left[\left(\Xi_{k-1}^{min}\right)'\thickspace\medspace \left(\widetilde{\Xi}_{k}^{min}\right)'\right]'\qquad\text{and}\qquad \Xi_{k}^{max} = \left[\left(\Xi_{k-1}^{max}\right)'\thickspace\medspace \left(\widetilde{\Xi}_{k}^{max}\right)'\right]',\] where $A'$ denotes the transpose of the matrix $A$.
\item Solve the programs \eqref{minimizereqapprox2penfinal} and store the optimal solutions to these problems as $V_{k,min}$ and $V_{k,max}$ and the optimizers as $\mu_{k,min}$ and $\mu_{k,max}$. If the moving standard deviations
\[\left(\frac{1}{n-1}\sum_{j=1}^{n_\delta} \left(V_{k-j,min} - \bar{V}_{k,min}\right)^2\right)^{1/2} \leq \delta \quad\text{and}\quad \left(\frac{1}{n-1}\sum_{j=1}^{n_\delta} \left(V_{k-j,max} - \bar{V}_{k,max}\right)^2\right)^{1/2}\leq \delta,\] for the window length $n_\delta$ and the $\delta>0$ chosen in stage 0, stop and output $V_{k,min}$ and $V_{k,max}$ as the solution. Here,
\[\bar{V}_{k,min} = \frac{1}{n}\sum_{j=1}^{n_\delta}V_{k-j,min}\qquad\text{and}\qquad \bar{V}_{k,max} = \frac{1}{n}\sum_{j=1}^{n_\delta}V_{k-j,max}\] are the moving averages with window length $n_\delta\leq k$.
\item Delete all columns from $\Theta_k^{min}$ and all rows from $\Xi_k^{min}$ for which the corresponding values of $\mu_{k,min}$ are zero. Analogously for $\Theta_k^{max}$. 
\item If convergence criterion is met, output solution. Else, increment $k\to k+1$ and go to step 1.
\end{enumerate}
\end{itemize}
\end{algorithm*}

When applying this sampling trick, the solution will be expressed as a \emph{solution path} over the sampling iterations.
\begin{figure}[h!]
\centering
\includegraphics[width=5.5cm,height=5.5cm]{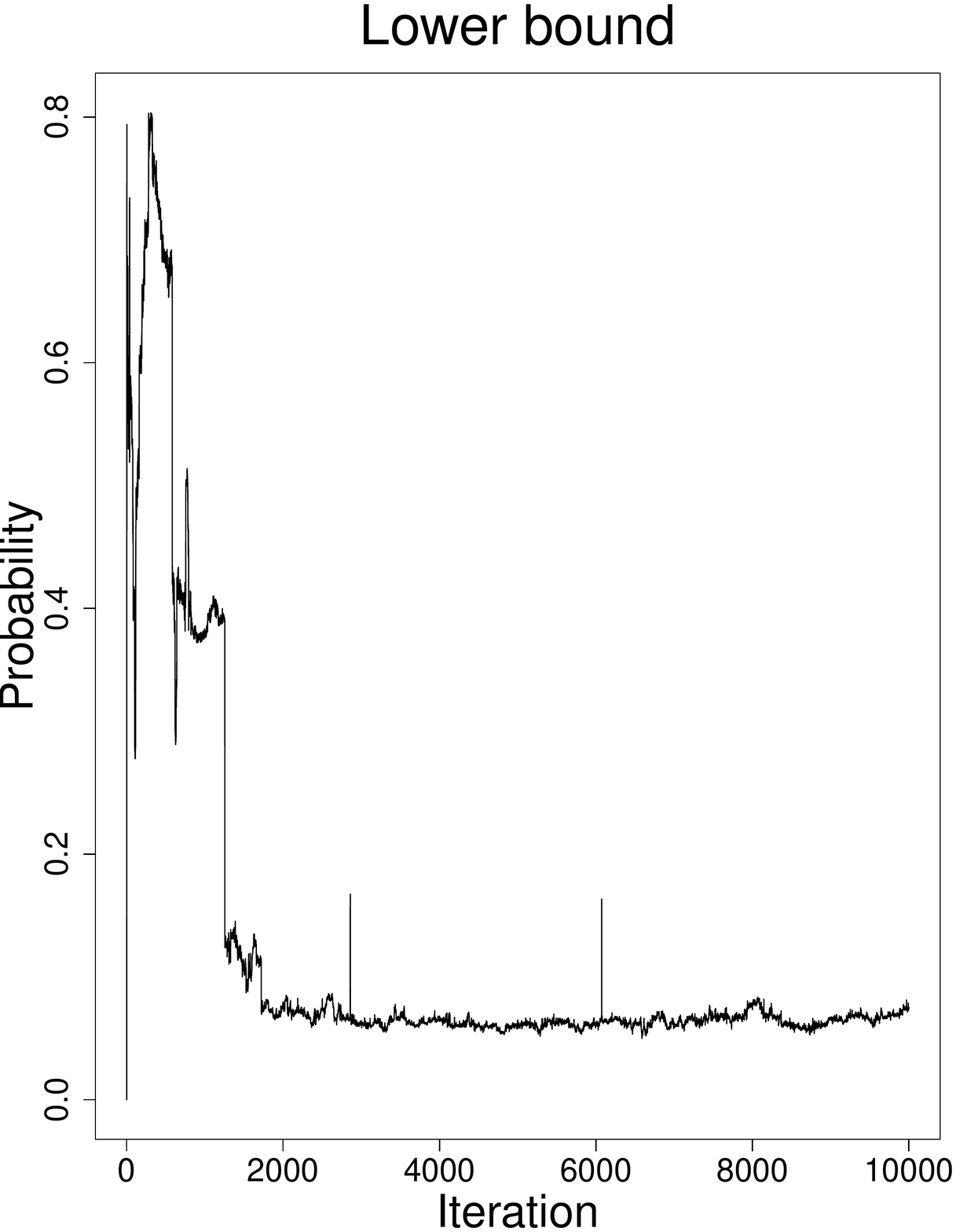}
\includegraphics[width=5.5cm,height=5.5cm]{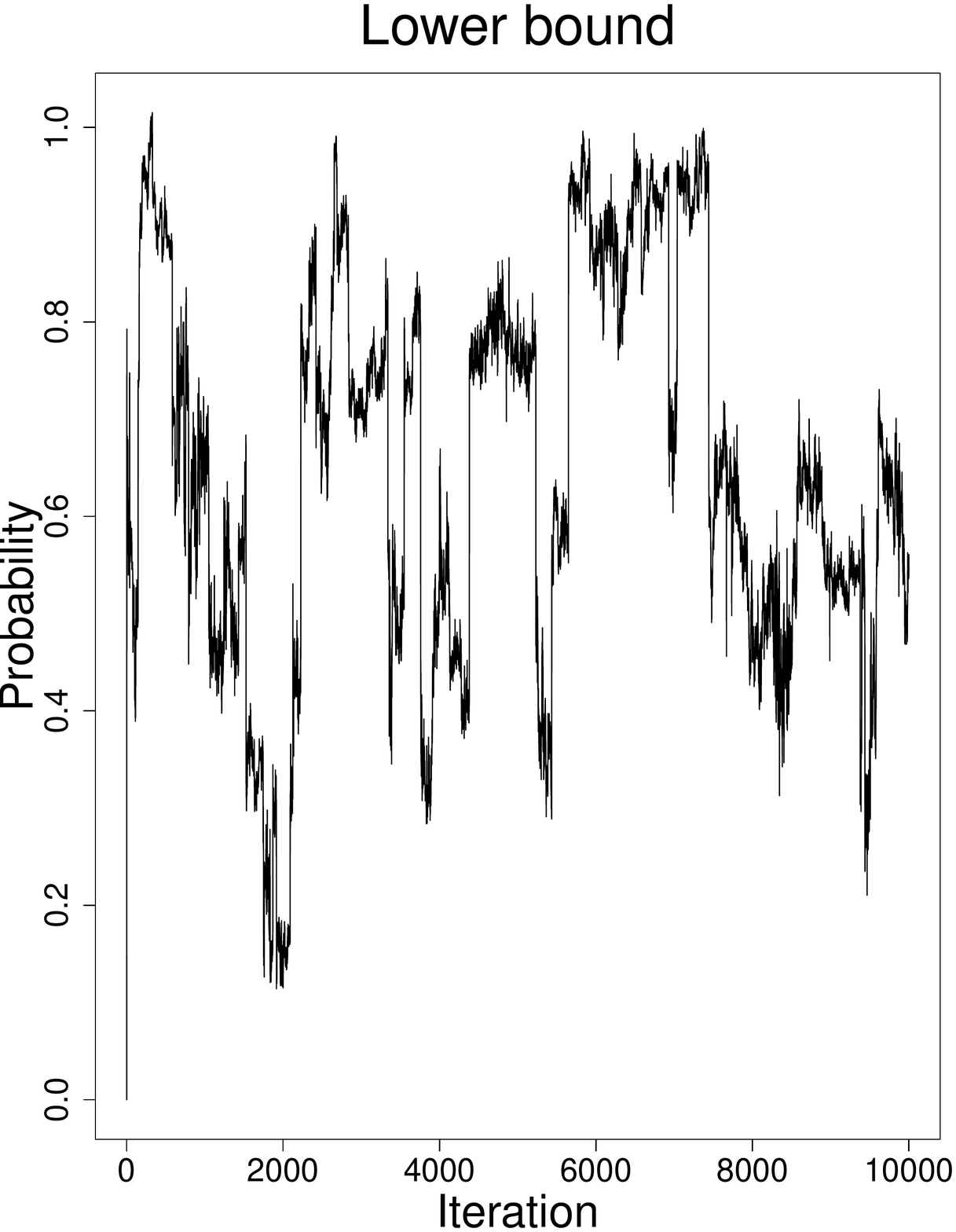}
\caption{Depiction of convergence of the solution path for estimating $F_{X=0.75}(0.75)$ for $\lambda = 100$ (left) and $\lambda=600$ (right) for a coarse approximation of length $5$ of the unit interval, i.e.~a decomposition $0,0.25, 0.5,0.75,1$. $16$ new paths are sampled at each iteration. The average of the values over the last 500 iterations in the left panel is nontrivial at $0.065$.}\label{convergenceplot}
\end{figure}
This solution path in general will be erratic due to the nature of the sampling approach, but has the added benefit over the ``static'' direct method, where all relevant paths are sampled immediately, that one can gauge if the solutions ``converge'' to some stable limit after a ``burn-in'' period. This convergence relies on the choice of the penalty parameter $\lambda$. The larger $\lambda$ is chosen, the more it forces the optimizer $\mu$ to replicate the observable $\hat{F}_{Y,X|Z;h_n}$. In fact, in the limit $\lambda\to+\infty$, the program forces to replicate the constraint perfectly. If $\lambda$ is too low, the program ignores the constraint, which will always result in trivial bounds. 

This implies that there exists a range of $\lambda$-values for which constraint and objective function are balanced. Figure \ref{convergenceplot} depicts the behavior of the solution paths of this ``sampling trick'' in a stylized setting of the household expenditure application in the next section, where the coarseness of the approximation of the unit interval makes it possible to sample \emph{all possible} paths; this is done in order to see whether convergence still occurs if one is able to sample the ``whole universe of paths''. The left panel depicts a solution path which converges, while the right panel depicts a case of non-convergence for estimating a lower bound on $F_{Y(X=0.75)}(0.75)$.

In this form, these solution paths are reminiscent of the solution paths of regularized linear programs such as LASSO. The difference, however, is that the paths induced by this program are for a \emph{fixed} $\lambda$, while the actual LASSO solution paths are traced out while varying $\lambda$. In order to have an analogue of the LASSO solution paths in the current method, one would have to solve the program for many different values of $\lambda$, which would generate a \emph{system of solution paths}. Then one could choose the largest lambda for which the corresponding solution path converges to a stable value.\footnote{It is an intriguing question how to determine an appropriate $\lambda$ by data-driven methods. Such a data-driven method might open up the way for solving other infinite dimensional programs on path spaces in statistics and mathematics via a ``sampling of paths''-approach.}

\section{Empirical demonstration of the method}\label{empiricalsec}
This section presents practical settings for testing the method. The first is a small Monte-Carlo simulation, where the focus lies on the sensitivity of the method to the choice of the number of basis functions $\kappa$ and the penalty term $\lambda$. The second is an application to real data, where the method manages to obtain informative bounds under minimal assumptions. 

\subsection{Simulation}\label{simulationsubsec}
The idea for generating the data in this simulation exercise is to (i) generate paths of stochastic processes $Y_x$ and $X_{z}$ for a given set $\{z_i\}_{i=1,\ldots,m}\in[0,1]$, (ii) combine the two processes to a joint process $[Y,X]^*_{z}$, (iii) randomly sample points $z$ and corresponding points $(y,x)$  induced by the paths of $[Y,X]^*_{z}$ to obtain the data $(Y,X)$ for the given set of $\{z_i\}$. The processes $Y_x$ and $X_{z}$ are Gaussian processes with mean $0.5$ and a squared exponential covariance kernel of the form $k_{SE}(z-z')\coloneqq \sigma^2\exp\left(-\frac{(z-z')^2}{\ell^2}\right)$, where the length parameter $\ell$ is $0.5$ for $Y_x$ and $0.2$ for $X_z$, and the variance parameter $\sigma^2$ is $0.2$ for $Y_x$ and $0.15$ for $X_z$. All paths of these processes are restricted to lie within $[0,1]$. $2500$ paths were sampled, generating $5000$ random data points from it on a relatively coarse grid based on a dyadic approximation of order $3$, i.e.~grid points at $\frac{k}{8}$ for $k=0,\ldots,8$. The values $z$ were drawn uniformly on the unit interval. This data-set has no additionally introduced randomness, to see if the method can actually obtain correct results in practice.

The goal is to estimate $E[Y(0.5)],$ which is equal to $0.5$ by construction. $\hat{F}_{Y,X|Z=z;h_n}$ is estimated by kernel density methods with a cross-validated bandwidth using the $np$-package in $R$. The only assumption on the paths made for estimation is continuity. The hat-functions defined in \eqref{waveleteq} form the basis for different levels of $\kappa$ and $\lambda$. Figure \ref{simul1} depicts the convergence of the bounds in the ``sampling trick'' approach.
\begin{figure}[htb!]
\centering
\includegraphics[width=7.5cm,height=7.5cm]{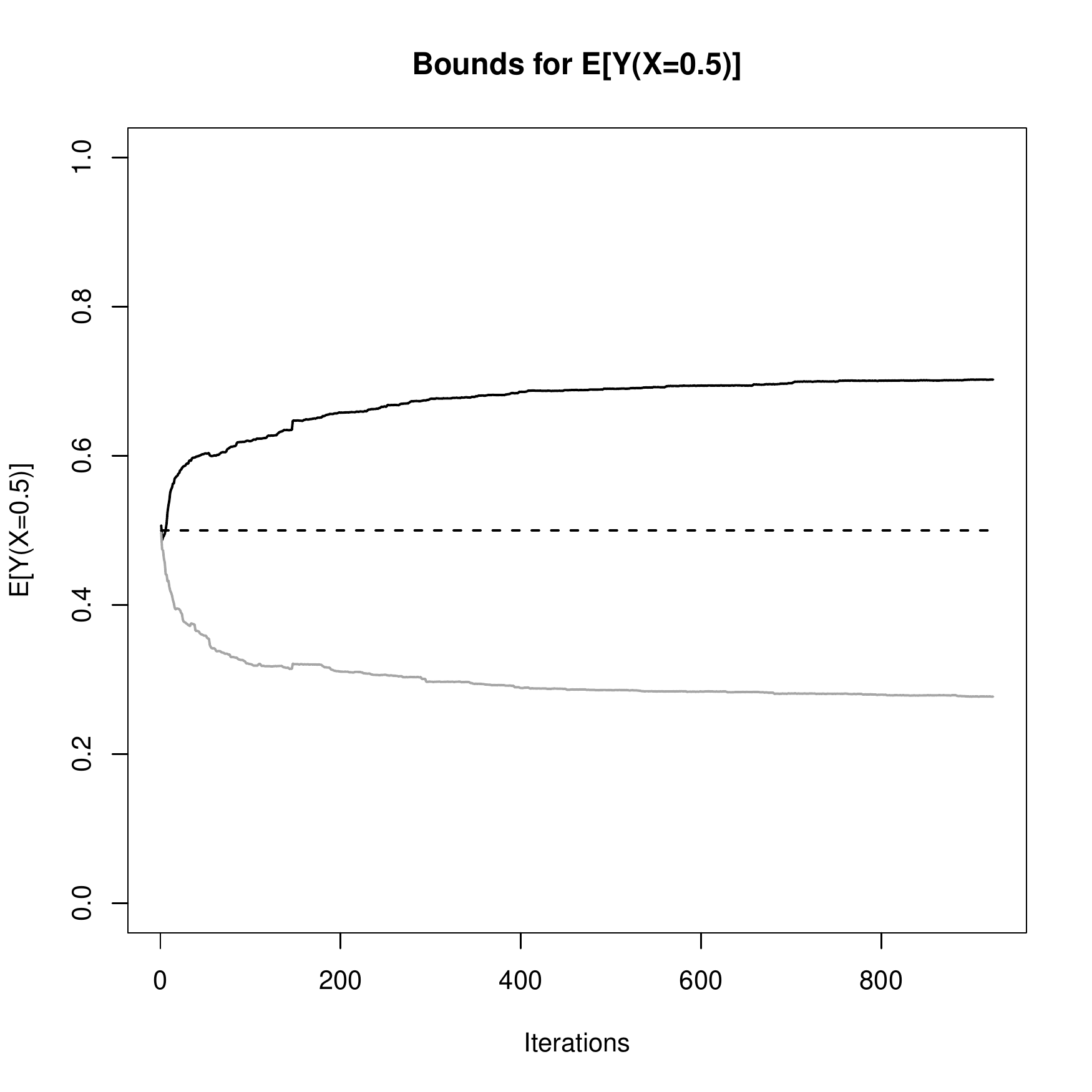}
\includegraphics[width=7.5cm,height=7.5cm]{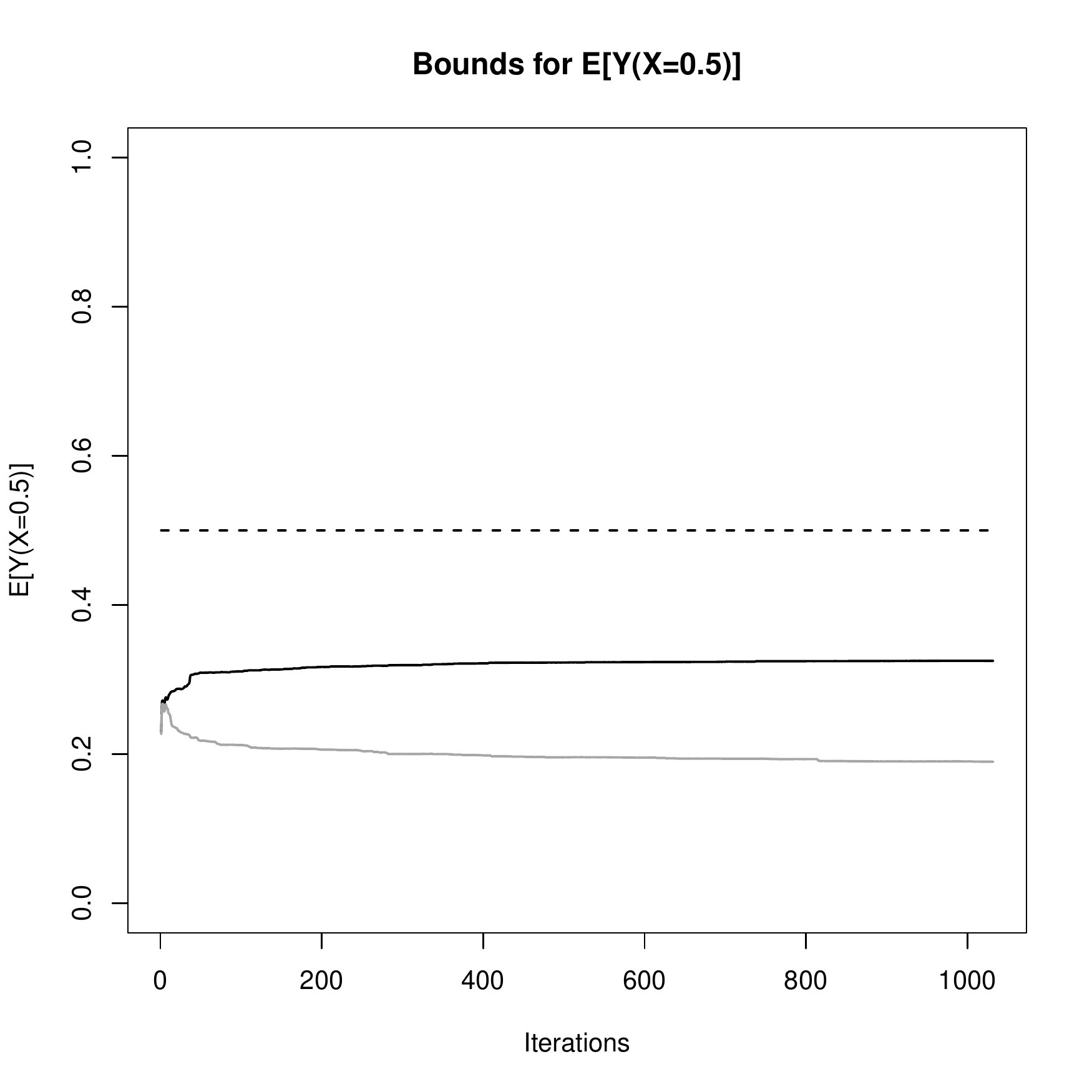}
\caption{Convergence of upper (black) and lower (gray) bounds on $E[Y(X=0.5)] = 0.5$. The penalty term is $\lambda = 5$ for an equidistant approximation of the unit interval by $9$ points. At each iteration the method samples $10$ new continuous paths. In the left panel the paths are constructed by summing over the wavelet basis with $\kappa=1,\ldots,5$. In the right panel the paths are constructed for fixed $\kappa=8$.}\label{simul1}
\end{figure}
The left panel depicts the convergence of the bounds on $E[Y(X=0.5)]$ for $\lambda=5$ and a sieve basis which consists of a sum of the basis functions for $\kappa=1,\ldots,5$. The method seems to converge nicely to bounds which contain the true value and are actually reasonably tight, especially for such a coarse grid and the fact that no functional form restrictions are imposed on the model except continuity. The right panel depicts the same thing, only there the paths $Y_x(w)$ and $X_z(w)$ are constructed for one form of basis functions corresponding to $\kappa=8$. Clearly, the method obtains biased results in this case, as the true value does not lie between the upper- and lower bound. Intuitively, this bias stems from the fact that the paths generated for fixed $\kappa=8$, despite being continuous, are not smooth and very erratic, while the paths generated in the simulation are exceptionally smooth due to the choice of the squared exponential covariance kernel and relatively large length parameters $\ell$. This bias can be understood as a ``nonparametric misspecification bias'' by sampling paths which are very different from the paths in the true data generating process. Fortunately, the left panel shows that already summing over only a few different basis functions removes this bias in this stylized example, which most likely also holds in more realistic settings \citep*{chen2007large}. 

\begin{figure}[htb!]
\centering
\includegraphics[width=7.5cm,height=7.5cm]{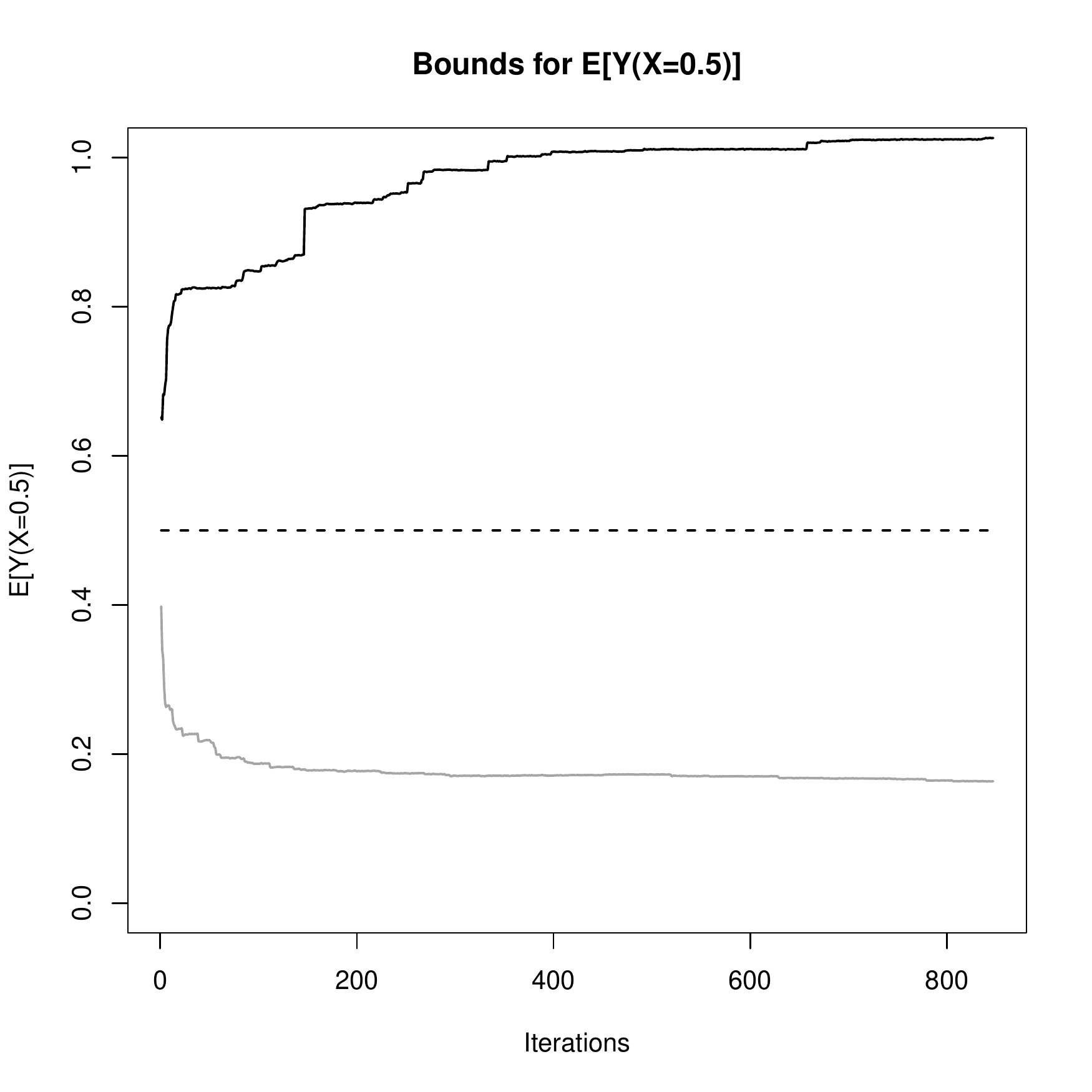}
\includegraphics[width=7.5cm,height=7.5cm]{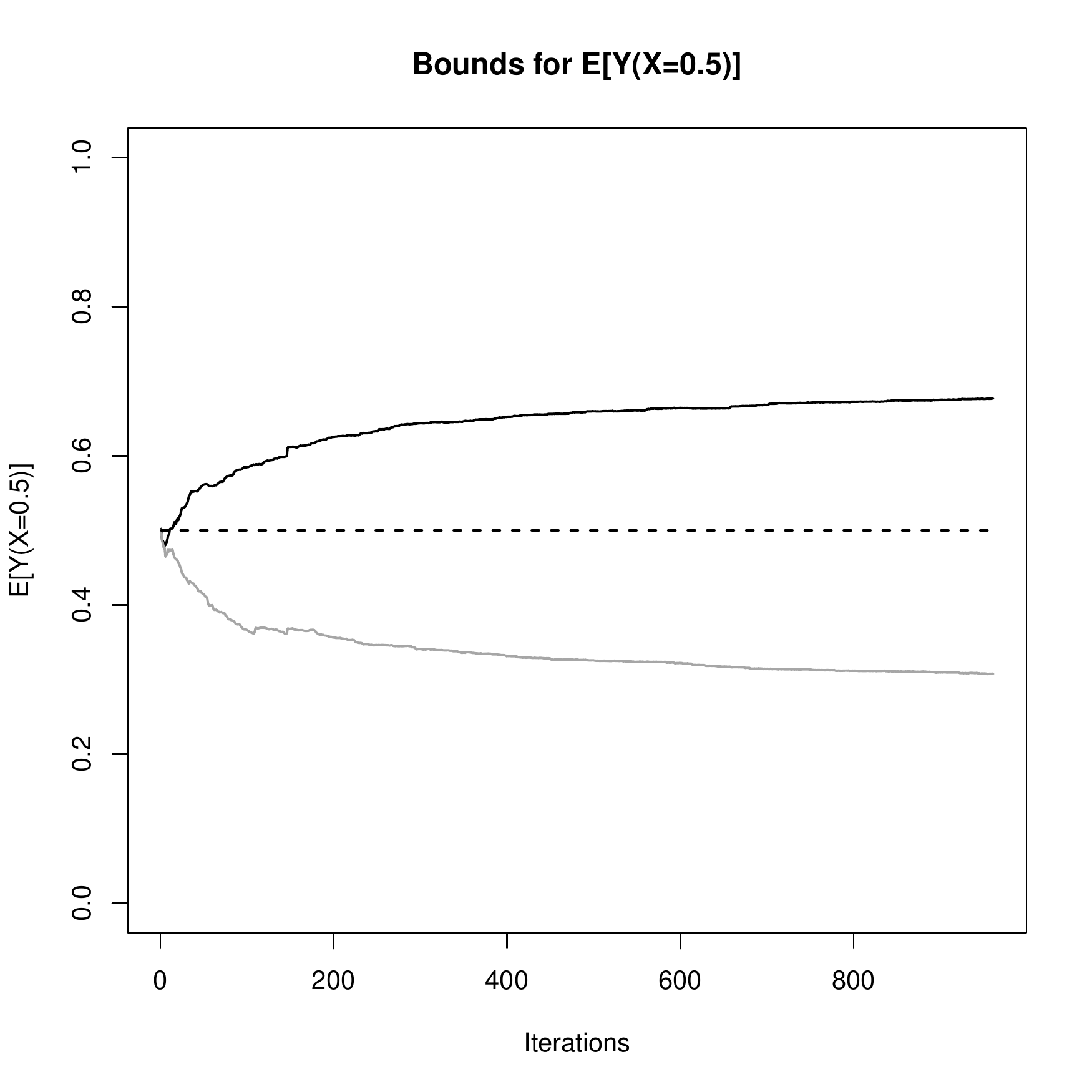}
\caption{Convergence of upper (black) and lower (gray) bounds on $E[Y(X=0.5)] = 0.5$. The penalty terms are $\lambda = 0.1$ (left panel) and $\lambda = 50$ (right panel) for an equidistant approximation of the unit interval by $9$ points. At each iteration the method samples $10$ new continuous paths. The paths are constructed by summing over the wavelet basis with $\kappa=1,\ldots,5$.}\label{simul2}
\end{figure}
Figure \ref{simul2} depicts the sensitivity of the method to choosing different penalty terms $\lambda$ in this setting. In the left panel $\lambda = 0.1$ while in the right $\lambda = 50$. The bounds in the left case are significantly wider than on the right, but both estimators are unbiased. Interestingly, and in contrast to Figure \ref{convergenceplot}, even larger $\lambda$ does not lead to erratic behavior of the paths. This is most likely a result of the smoothness of the simulated paths in combination with a very high signal-to-noise ratio of the simulated data, which implies that there always exists an optimal $\mu$ in the finite dimensional problems \eqref{minimizereqapprox2pen}. In contrast, when the signal-to-noise ratio is low, which is often the case in real-world applications, then the behavior of the solution paths will become too erratic as depicted in Figure \ref{convergenceplot}.

\subsection{Application}\label{applicationsubsec}
As a demonstration of its capabilities, the method estimates bounds on expenditure differences using the $1995/1996$ UK family expenditure survey. This problem is well-suited as it (i) is nonlinear with continuous variables (\citeauthor*{blundell2007semi} \citeyear{blundell2007semi}, \citeauthor*{imbens2009identification} \citeyear{imbens2009identification}), (ii) allows to gauge if the program actually obtains reasonable results, and (iii) provides a setting not directly related to causal inference, showing the scope of the proposed method. Therefore, in the following the focus will be on the outcomes food and leisure. 

Analogous to \citet*{blundell2007semi} and \citet*{imbens2009identification}, the outcome of interest $Y$ will be the share of expenditure on a commodity and $X$ will be the log of total expenditure, scaled to lie in the unit interval. The instrument used in this setting is gross earnings of the head of the household, which assumes that the way the head of the household earns the money is (sufficiently) independent of the household's expenditure allocation; this instrument is used in both \citet*{blundell2007semi} and \citet*{imbens2009identification}. All three variables are inherently continuous which makes this problem a nice setting for demonstrating the practical implementation of the method.

The sample is restricted to the subset of married and cohabiting couples where the head of the household is aged between 20 and 55, and couples with 3 or more children are excluded. Also excluded are households where the head of the household is unemployed in order to have the instrument available for each observation. The final sample comprises 1650 observations.\footnote{The data used in both \citet*{blundell2007semi} and \citet*{imbens2009identification} is based on the $1994/1995$ UK family expenditure survey. Under the same restriction they end up with $1655$ observations.}
\begin{table}[ht!]
\centering
\begin{tabular}{c|c|c}
&Leisure&Food\\\hline\hline
mean&$0.143$ & $0.182$\\
standard deviation& $0.110$&$0.0723$\\
skewness& $1.83$&$0.686$\\\hline
min&$0.00122$&$0.00946$\\
lower quartile&$0.0689$& $0.131$\\
median& $0.111$&$0.176$\\
upper quartile&$0.183$& $0.225$ \\
max&$0.831$& $0.657$\\\hline\hline
\end{tabular}
\caption{Summary statistics for the outcome distributions}\label{summarytable}
\end{table}
Table \ref{summarytable} gives a summary of the outcome distributions, showing that the relative expenditure on leisure is much more skewed towards zero but with a higher variance than the distribution for food.

The only shape restriction on the instrumental variable model is continuity, i.e.~$h$ and $g$ are continuous functions in $X$ and $Z$, respectively. This is a natural assumptions since Engel curves are usually believed to be continuous. It implies that almost all of the paths of $Y_x$ and $X_z$ lie in $C([0,1])$, the space of continuous functions. No other assumptions are upheld. The most general current approaches either require continuity and strict monotonicity of $g$ \citep*{imbens2009identification} or of $h$ \citep*{blundell2007semi} in the unobservable $W$. In contrast, the proposed method does not require any monotonicity assumptions and hence intuitively gives an indication of how much information is available in the data to solve this problem. Surprisingly, there seems to be a substantial amount of information, as the obtained bounds indicate that food is a necessity- and leisure is a luxury good without any assumptions on the model besides continuity. Furthermore, when introducing monotonicity assumptions \emph{in the observable variables}, the bounds become significantly tighter, showing the identificatory strength of these assumptions in this setting.

Figure \ref{leisure_low} depicts the solution paths for obtaining bounds on the counterfactual difference $F_{Y(X=0.75)}(0.15) - F_{Y(X=0.25)}(0.15)$ for a reasonably fine approximation of the unit interval into $17$ equidistant points (which corresponds to a dyadic approximation of order $4$). 
\begin{figure}[htb!]
\centering
\includegraphics[width=7.5cm,height=7.5cm]{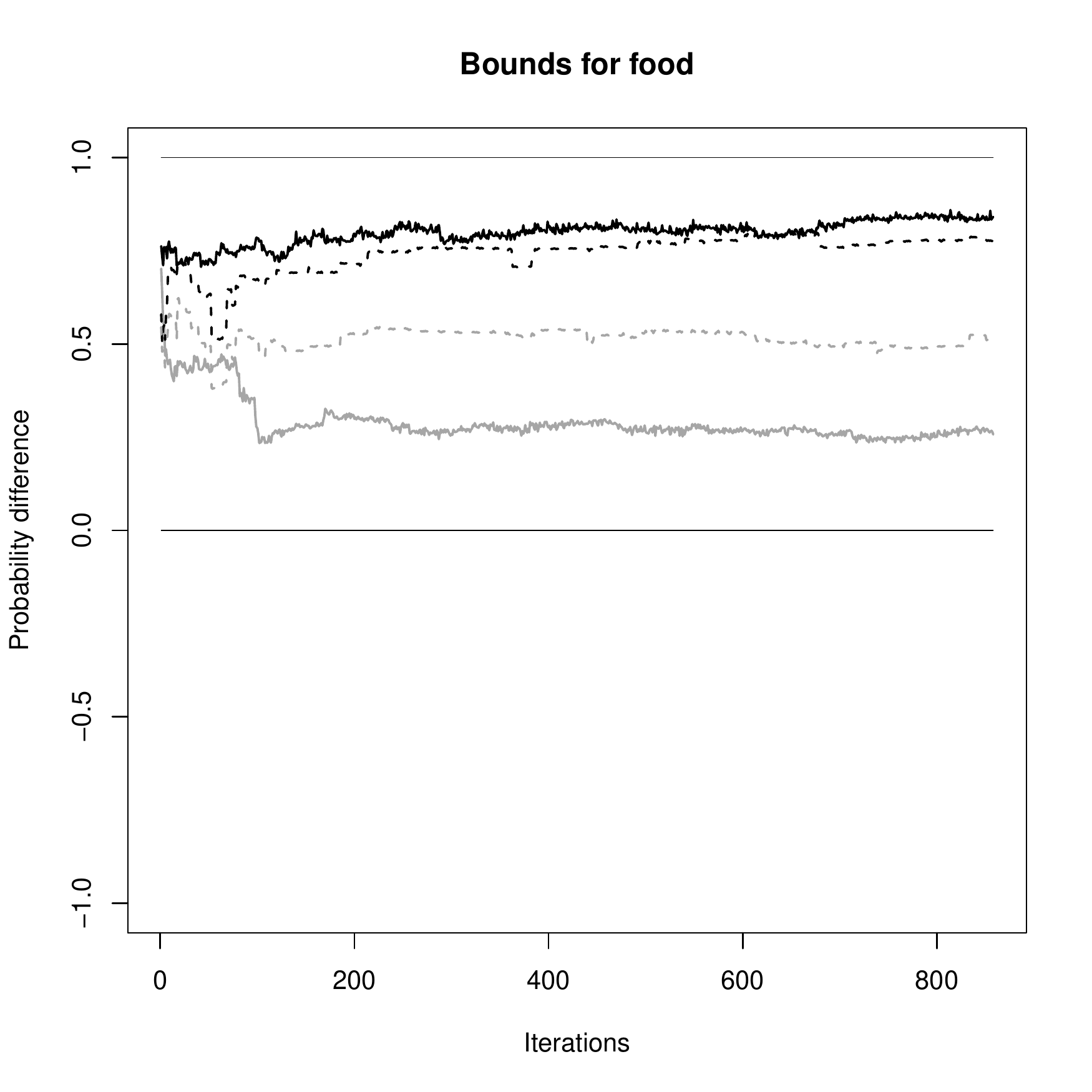}
\includegraphics[width=7.5cm,height=7.5cm]{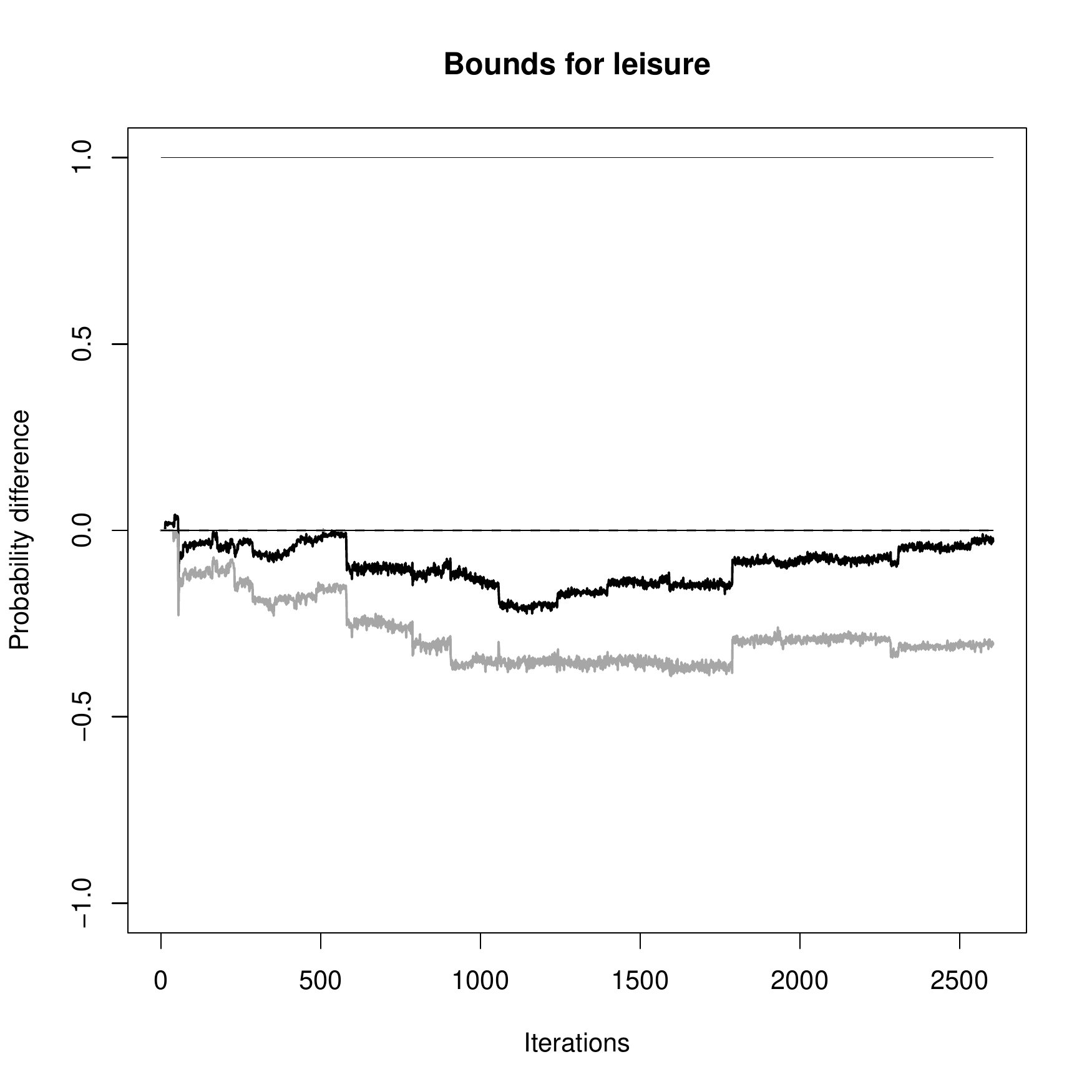}
\caption{Convergence of upper (black) and lower (gray) bounds on $F_{Y(X=0.75)}(0.15) - F_{Y(X=0.25)}(0.15)$ for $Y$ being the relative spending on food (left panel) and leisure (right panel). The solid lines indicate the bounds under no additional assumptions on the model, the dashed lines indicate the bounds under the additional assumption that $Y$ is decreasing (food) or increasing (leisure) in overall expenditure $X$ and that $X$ is increasing in income of the head of the household $Z$. The penalty term is $\lambda=1$ for an approximation of the unit interval by $17$ points. At each iteration the method samples $25$ new paths, of which $6$ satisfy the respective monotonicity requirement.}\label{leisure_low}
\end{figure}
  
Here the penalization parameter $\lambda = 1$, which seems to provide reasonable convergence to the solution, in particular for the food data. Remarkably, the estimated bounds in this setting are qualitatively informative for the problem. The left panel depicts the households' expenditures on food and the right depicts their expenditures on leisure. The solid lines are the upper- and the lower bound for a model without further assumptions, while the dashed lines are the upper- and lower bounds for a model with the additional assumption that $Y$ is increasing for leisure and decreasing for food in overall expenditure $X$ and that $X$ is increasing in income of the head of the household $Z$. 

Consider the left panel first, which depicts the expenditures on food. Here, the general bounds seem to converge and the average values of the bounds over the last $200$ iterations are $0.84$ and $0.26$ for the general upper- and lower bound, and $0.78$ and $0.50$ for the corresponding upper- and lower bounds for the monotone model. All four bounds are positive, which indicates that families that spend a lot in general ($X=0.75$) and spend up to $15\%$ on food ($Y\in[0,0.15]$) would spend much more on food relatively to overall expenditure if they spent much less overall ($X=0.25$). Put differently, families are much more likely to lie in higher quantiles for expenditure on food if they lie in the lower quartile in overall spending than families that lie in the upper quartile in overall spending, which is the defining characteristic of a necessity good at this given level $y^*=0.15$. 

It is rather striking that even the model without monotonicity assumptions produces bounds which reflect this fact via a positive lower bound. In this regard, note that the monotonicity assumptions do not only tighten the bounds, but also shift up the lower bound, indicating that monotonicity has a stronger identificatory content in this setting. In particular, they imply that more families (between $50\%$ and $78\%$) in the upper quartile of overall spending ($X=0.75$) spend only up to $15\%$ on food compared to families in the lower quartile of overall spending ($X=0.25$). As mentioned, without monotonicity assumptions, these differences can be as high as $84\%$ and as low as $26\%$, but all positive.

The results for expenditure on leisure for this given scenario are similar, but more erratic. For a clear indication of a luxury good at the given levels, one would expect both bounds to be negative. In fact, these would imply that families in the upper quartile on overall spending (i.e.~$X=0.75$) who spend up to $15\%$ of their overall expenditure on leisure ($Y\in[0,0.15]$) are very likely to spend even less on leisure, relatively, if they had a negative shock to overall spending ($X=0.25$). Put differently, families should be more likely to spend only up to $15\%$ of overall expenditure on leisure ($Y\in[0,0.15]$) if they lie in the lower quartile in overall spending ($X=0.25$) than families that lie in the upper quartile in overall spending ($X=0.75$). The obtained results do reflect this circumstance at this level. In particular, the averages of the last $500$ iterations of the general bounds are $-0.032$ and $-0.31$, implying that typically more families (up to $31\%$) in the lower quartile of overall spending ($X=0.25$) spend only up to $15\%$ of their overall on leisure compared to families in the upper quartile of overall spending. 

The overall convergence for the leisure data is more erratic compared to food, which is illuminating for the purpose of this article. In particular, the main culprit for the poor performance is the fact that the data is highly skewed towards zero in the leisure case, while the grid placed uniformly over the unit interval is too coarse to ``measure'' the behavior of the data around zero. Note that the method did not manage to sample enough monotone paths which correspond to the respective events of interest for this reason, so that no bounds for monotonicity exist in this setting. In order to circumvent this, one can simply put a non-uniform grid on $[0,1]$ which has more points close to zero and fewer points further away.

One can perform this exercise for different levels of $y^*$ in order to gauge the behavior of the bounds at different quantiles. Figure \ref{leisure_35} depicts the same exercise for $y=0.25$, i.e.~looking at families that spend up to a quarter of the overall expenditures on food and leisure. Here, the method managed to sample enough relevant monotonic paths in the leisure setting despite the coarser grid around zero. These bounds converge to some value in the more general bounds. One important fact to point out is the behavior of the solution path in the leisure case. At a certain point around the $1000$ iterations mark a path is sampled which forces the solution path to jump up dramatically, i.e.~a path which contains a lot of information for the given problem. However, the good news here is that over time the paths approach the previous level as more and more paths are drawn and add to the information from the one path. This is an important property to note, as it shows that the solution paths need not monotonically increase in the number of samples, and that even if paths are sampled which contain a lot of information for the respective program, will the method still converge in the long run. In particular, it is the choice of the penalty term which forces to solution path to approach zero again after the jump.
\begin{figure}[htb!]
\centering
\includegraphics[width=7.5cm,height=7.5cm]{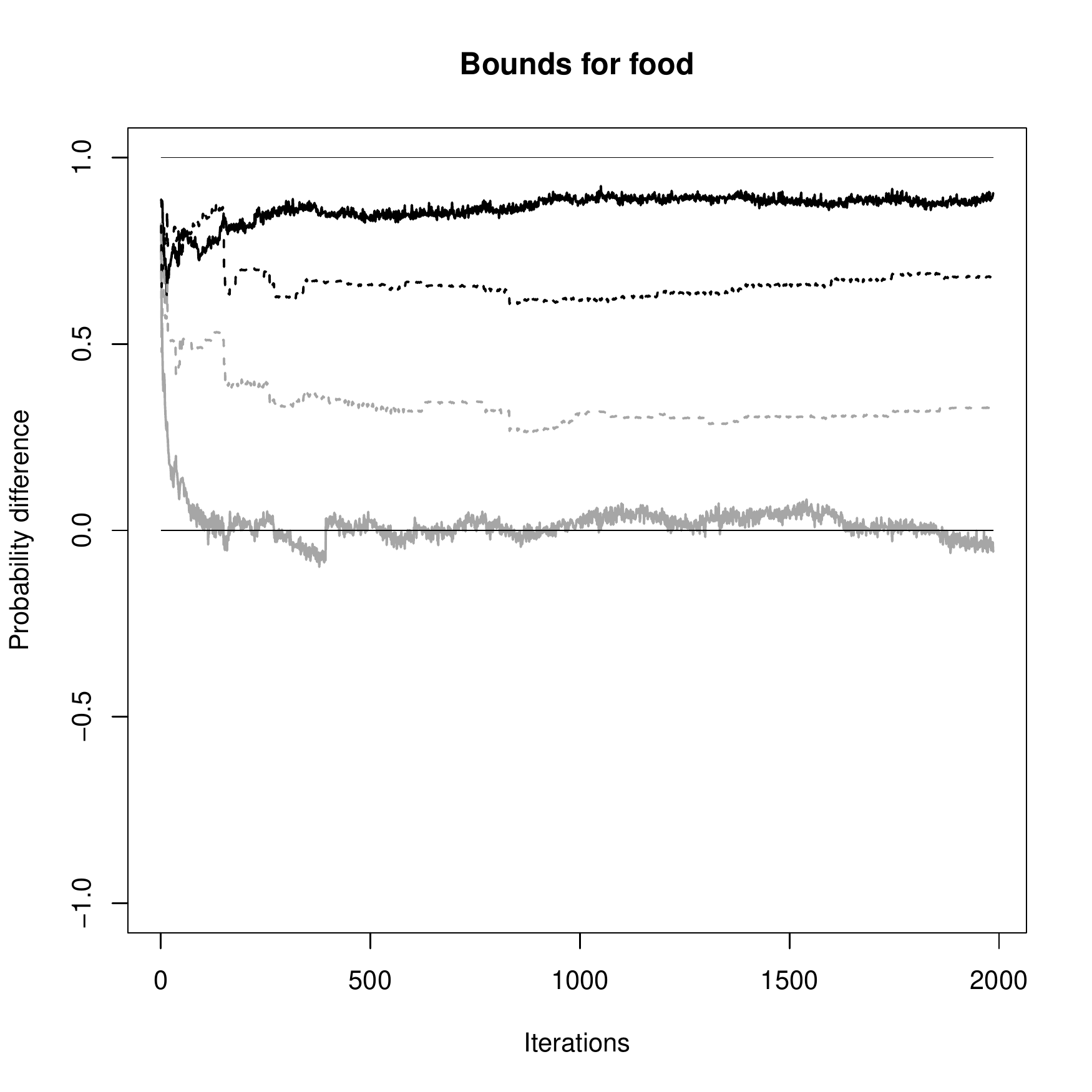}
\includegraphics[width=7.5cm,height=7.5cm]{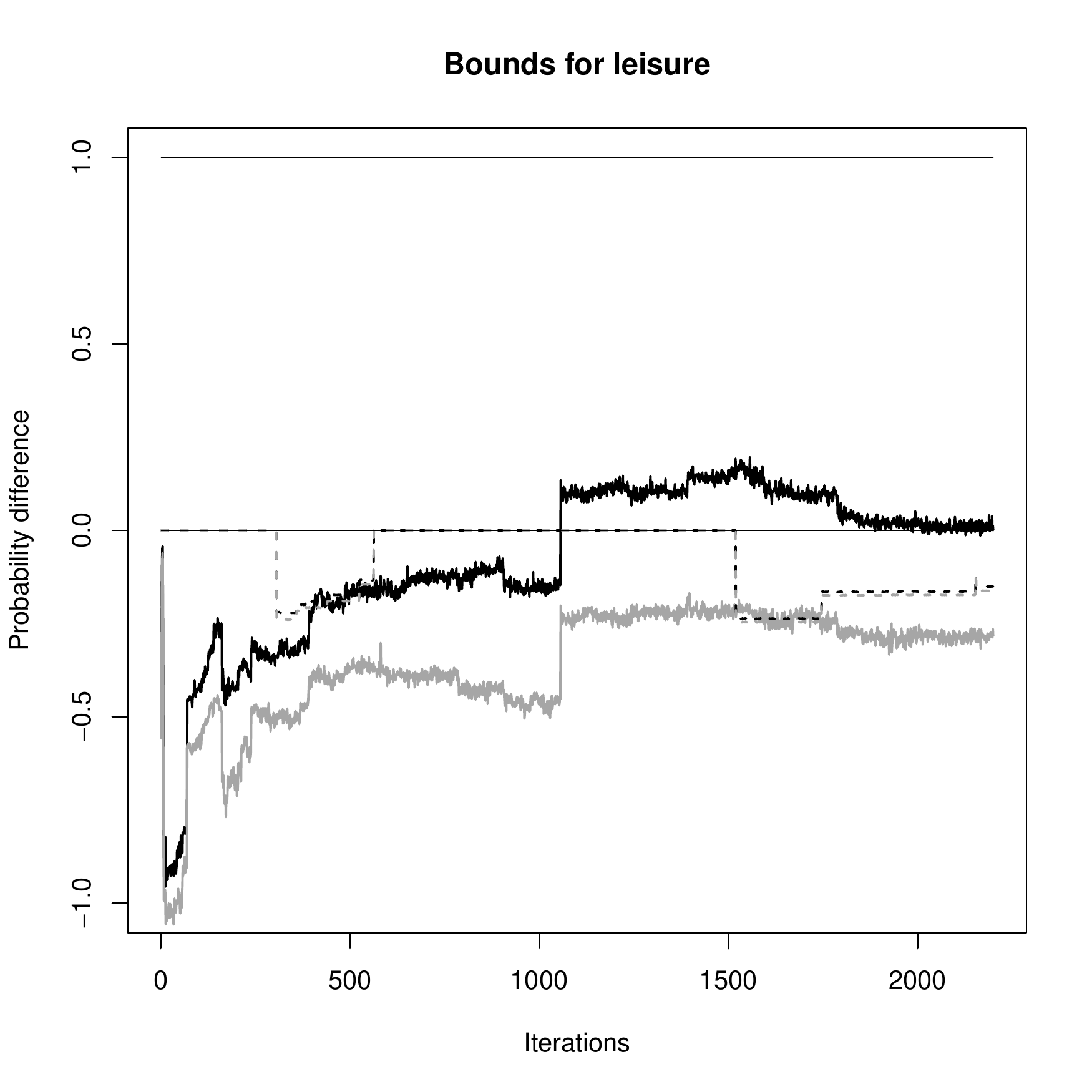}
\caption{Convergence of upper (black) and lower (gray) bounds on $F_{Y(X=0.75)}(0.25) - F_{Y(X=0.25)}(0.25)$ for $Y$ being the relative spending on food (left panel) and leisure (right panel). The solid lines indicate the bounds under no additional assumptions on the model, the dashed lines indicate the bounds under the additional assumption that $Y$ is decreasing (food) or increasing (leisure) in overall expenditure $X$ and that overall expenditure $X$ is increasing in income of the head of the household $Z$. The penalty term is $\lambda = 1$ for an approximation of the unit interval by $17$ points. At each iteration the method samples $25$ new paths, of which $6$ satisfy the respective monotonicity requirement.}\label{leisure_35}
\end{figure}

The results are qualitatively similar to the ones for $y^*=0.15$, which is not surprising as one is now comparing more families, i.e.~all families that spend up to a quarter on food/leisure. The more families one compares, the less pronounced the effects become. The following table provides an overview of the results at the levels $y^* = 0.15, 0.25, 0.5, 0.75$, which shows exactly this. At $y=0.75$ one is basically comparing all existing families, so that one does not obtain any effects, another sanity check for the method.
\begin{table}[ht!]
\centering
\begin{tabular}{c|c|c|c|c||c|c|c|c}
\multicolumn{5}{c}{Food}&\multicolumn{4}{c}{Leisure}\\\hline
$y^*$&Lower &Lower& Upper& Upper & Lower & Lower& Upper& Upper\\
&&monotone&monotone&&&monotone&monotone&\\\hline
$0.15$&$0.26$& $0.50$ & $0.78$ & $0.84$ & $-0.31$ & $0$& $0$&$-0.032$\\
$0.25$& $0.008$&$0.31$&$0.67$&$0.88$&$-0.28$&$-0.18$& $-0.17$ & $0.028$\\
$0.5$&$-0.33$& $0.0063$ &$0.072$&$0.34$&$-0.34$&$-0.22$&$-0.19$&$0.09$\\
$0.75$&$-0.021$&$0$&$0$&$0.0075$&$-0.072$&$-0.046$&$0$&$0.0029$\\\hline\hline
\end{tabular}
\caption{Upper- and lower bounds for $F_{Y(X=0.75)}(y^*)-F_{Y(X=0.25)}(y^*)$ for different values $y^*$.}\label{quantiletable}
\end{table}

Overall, these estimation results are remarkably informative from a qualitative perspective. Recall that the instrumental variable model allows for general unobserved heterogeneity, in particular measurement error in the treatment variable $X$, which indicates that the ratio of information to noise in the data for answering these questions is rather high. These qualitative results not only corroborate the theoretical predictions for expenditure, but also the previous results obtained in \citet*{blundell2007semi}, \citet*{imbens2009identification}, and \citet*{song2018nonseparable}. During their estimation process \citet*{imbens2009identification} and \citet*{song2018nonseparable} assume a univariate and strictly monotonic production function $g(z,W)$ between $X$ and $W$ for all $z$ and use a control variable approach to estimate the production function $h$; \citet*{blundell2007semi} estimate Engel curves semi-nonparametrically, imposing monotonicity in the second stage, and obtaining similar results. \citet*{de2016nonparametric} work with an additively separable first stage and allow for the outcome $Y$ to be measured with error. This is more general than what the proposed method can handle, which can only encompass measurement error in the dependent variable, but not the outcome. Nonetheless, their results are similar to the ones obtained here. In this sense these qualitative results are a ``robustness check'' for other non- or semiparametric approaches. 

Moreover, this method makes it possible to gauge the identificatory content of monotonicity assumptions in the current model. In all cases is this content rather high. Imposing monotonicity \emph{between the observables} makes the results much more clear-cut and in turn leads to rather strong implications in the cases considered. In this setting, monotonicity is a plausible assumption based on economic theory, but it is important to be aware of the strength of this assumption in other settings.

\section{Conclusion}\label{conclusion}
This article introduces a novel method for partially identifying and causal effects in instrumental variable models with general heterogeneity. It is the first practically applicable method for the most general models with continuous endogenous variables. The idea is to write the respective instrumental variable model as a system of counterfactual stochastic processes and to solve for an optimal probability measure on the paths of these processes subject to the constraint that the law of the joint processes induced by this probability measure replicates the observable distribution. The resulting optimization problem takes the form of an infinite dimensional linear program on path spaces. 

The main contribution of this article is to introduce the ``sampling-of-paths'' approach to solve these types of infinite dimensional programs. The underlying idea is to reduce the infinite dimensional program to a semi-infinite program \citep*{anderson1987linear} by sampling a subset of the paths over which the program optimizes. Then an approximation of the (finite dimensional) state-space of the random variables leads to a finite dimensional program which can be solved efficiently. 
The main idea for reducing the infinite dimensional program to a semi-infinite dimensional one is to explicitly introduce randomness by sampling paths. This, in conjunction with large deviation results (\citet*{vapnik1998statistical}, \citet*{wellner2013weak}) allows to obtain probabilistic approximation guarantees of the semi-infinite program to the infinite dimensional program. In particular, these guarantees imply a lower bound on the number of paths required for achieving a good approximation with high probability. 

A remaining challenge is to obtain an efficient data-driven method for choosing an appropriate penalty term $\lambda$ for the practical optimization routine. This is a similar challenge to finding good penalty terms in high-dimensional regularized regression estimators, but more general, as the setting here is infinite dimensional in a counterfactual path space. Some heuristic guidelines can be given: one should choose the largest $\lambda$ such that the ``solution paths'' of the sampling method converge to a fixed value after a ``burn-in'' period. If the solution path is ``too erratic'', then one should lower the value of $\lambda$. Formally establishing what ``convergence'', ``burn-in period'', and ``too erratic'' mean would not only solve this issue, but would open up potentially novel approaches for data-driven validation approaches in counterfactual settings. In particular, an analogue to the data-driven method for $\ell1$-regularization in high-dimensional regression models as put forward in \citet*{belloni2011l1} could be valuable.

The current practical implementation of the program works for univariate variables. Moreover, the only currently implemented additional nonparametric restriction which can be placed on the model is monotonicity. The program can straightforwardly be extended to higher dimensional settings, but runs into the curse of dimensionality as the stochastic processes become high-dimensional random fields. One standard way to circumvent the curse of dimensionality is to introduce sparsity- and factor assumptions on the stochastic processes in a higher-dimensional setting. Furthermore, it is also imperative to allow for a wide variety of additional (non-) parametric assumptions in the model, like convexity, bounds, reflection processes, Slutsky-type conditions, first-passage times, martingale properties, etc. Furthermore, the current ``sampling-and discarding'' approaches are not efficient. From a computational perspective, it would be interesting to obtain more efficient resampling methods. One promising approach is to use ideas from sequential Monte-Carlo approaches (see e.g.~\citeauthor*{schweizer2012thesis} \citeyear{schweizer2012thesis}), which need to be extended to the infinite dimensional path spaces considered here. Another idea is to directly use diffusion- or Levy-processes in the generation of the paths and use the Karhunen-Lo\`eve transform to generate a basis for the paths.

\bibliography{main_bib}
\appendix
\section{Proofs}

\subsection{Proof of Proposition \ref{myprop}}
\begin{proof}
The construction of the counterfactual processes $Y_x(w)$ and $X_z(w)$ on $\mathbb{R}^{\mathbb{R}}$ with laws $P_{Y(x)}$ and $P_{X(z)}$ follows immediately from Kolmogorov's extension theorem \citep*[Theorem 2.2.2]{karatzas1998brownian}. Note that the measurable space $([0,1],\mathscr{B}_{[0,1]})$ is large enough to accommodate all paths $Y_x(w)$, $X_z(w)$ satisfying Assumption \ref{skorokhodpath}, as the Skorokhod space equipped with the Skorokhod metric is Polish \citep*[chapter 12]{billingsley1999convergence}, and there aways exist measure-preserving isomorphisms between the unit interval and a Polish space \citep[chapter 9]{bogachev2007measure2}. Hence, one can define $\mathcal{W}=[0,1]$.

Together, the two laws $P_{Y(x)}$ and $P_{X(z)}$ generate the joint law $P_{[Y,X]^*(z)}$ as
\begin{equation}\label{jointprocess}
P_{[Y,X]^*(z)}(A_y,A_x) = \int_{A_x}P_{Y(x)}(A_y)dP_{X(z)}(x),
\end{equation} 
which follows from the exclusion restriction: $P_{Y(x)}$ does not depend on $Z$. The fact that this joint law corresponds to a stochastic process $[Y,X]_z^*$ on $(\mathbb{R}^2)^{\mathbb{R}}$ again follows from Kolmogorov's extension theorem, as $[Y,X]_z^*=(Y_{X_z},X_z)$ is the Cartesian product of the composed process $Y_{X_z}$ and the process of the first stage $X_z$. 

The next thing to show is that under Assumption \ref{skorokhodpath}, the processes $Y_x(w)$, $X_z(w)$, and $[Y,X]_z^*(w)$ are measurable as stochastic processes on the smaller spaces $D(\mathcal{X})$ and $D(\mathcal{Z})$. Consider $X_z(w)$. Since $\mathcal{Z}$ is fixed, under Assumption \ref{skorokhodpath} this stochastic process is progressively measurable with respect to its natural filtration $\{\mathscr{F}_z^X\}$, which is the smallest $\sigma$-algebra with respect to which $X_z$ for non-random $z$ is measurable. Indeed, by definition of the natural filtration, $X_z$ is adapted to it. Moreover, under Assumption \ref{skorokhodpath} every sample path is right-continuous. Then under Assumption \ref{skorokhodpath} by Proposition 1.1.13 in \citet*{karatzas1998brownian}, it holds that $X_z$ is also progressively measurable with respect to its natural filtration $\{\mathscr{F}_z^X\}$. The same argument shows measurability of $Y_x(w)$ with respect to the $\sigma$-field $\{\mathscr{F}_x^Y\}$ under Assumption \ref{skorokhodpath}. 

Now focus on the joint process $[Y,X]_{z}^*$. It is right-continuous under Assumption \ref{skorokhodpath} on its codomain $\mathbb{R}^2$, as the Cartesian product of two continuous functions on the real line is continuous by a projection argument, so that its paths lie in the respective Skorokhod space $D(\mathcal{Z})$ with codomain $\mathbb{R}^2$. Therefore, and since $z$ is nonrandom, Proposition 1.1.13 in \citet*{karatzas1998brownian} implies that $[Y,X]^*_z$ is also progressively measurable with respect to its natural filtration $\{\mathscr{F}_z^{[Y,X]}\}$. Now the key for showing measurability of $[Y,X]_{z}^*$ is to show that the random process $X_z(w)$ is a stopping time for the process $Y_{X_z}(w)$, as this is how part of $[Y,X]_z^*$ is constructed under the exclusion restriction: $[Y,X]^*_z \equiv (Y_{X_z}, X_z)$. But this fact follows from the exclusion restriction: the path $Y_{X_z}(w)$ only depends on the current position $x$ of $X_z(w)$ for each fixed $z$ and not on the actual path $X_z(w)$, which means that $Y_x$ does not depend on future values of $x$ and $x$ does not depend on future values of $Y_x$, or more formally, it holds that $\{X_z\leq x\}\in\mathscr{F}_x^Y$. Therefore, it follows again from Proposition 1.2.18 in \citet*{karatzas1998brownian} that $Y_{X_z}(w)$ is measurable with respect to the $\sigma$-field $\mathscr{F}_{z}^{[Y,X]}$ of all events prior to $z$. 

Finally, the independence restriction $Z\independent W$ implies that one can compare the properties of the stochastic process $[Y,X]_z$ induced by the observable distribution $P_{Y,X|Z=z}$ to the stochastic process $[Y,X]_z^*(w)$ corresponding to $P_{[Y,X]^*(z)}$.
\end{proof}

\subsection{Proof of Lemma \ref{penalizedlemma}}
\begin{proof}
Focus on \eqref{minimizereqpen} first. Under Assumption \ref{radonnikodymass}, we can write $dP_W = \frac{dP_W}{dP_0}dP_0$, so that \eqref{minimizereq} coincides with
\begin{equation}\label{minimizereq2}
\begin{aligned}
& \underset{\substack{\frac{dP_w}{dP_0}\\P_0,P_W\in\mathscr{P}^*(\mathcal{W})}}{\text{minimize/maximize}}\quad  \int f(Y_{x}(w),x)\frac{dP_W}{dP_0}(w)P_0(dw)\\
&\text{s.t.}\thickspace\thickspace \left\| F_{Y,X|Z=z} - \int \mathds{1}_{[0,\cdot]\times[0,\cdot]}\left(Y_{X_z(w)}(w),X_z(w)\right)\frac{dP_W}{dP_0}(w)P_0(dw)\right\|^2_{L^2([0,1]^2)}\leq \varepsilon^*
\end{aligned}
\end{equation}
for all $\varepsilon^*\geq0$. In fact, since the equality constraints hold perfectly in the population, $\varepsilon^*=0$ is allowed. 

Now bound the constraint. 
\begin{align*}
0\leq\varepsilon^*=&\left\| F_{Y,X|Z=z} - \int \mathds{1}_{[0,\cdot]\times[0,\cdot]}\left(Y_{X_z(w)}(w),X_z(w)\right)\frac{dP_W}{dP_0}(w)P_0(dw)\right\|_{L^2([0,1]^2)}\\
=& \left(\int_0^1\int_0^1\left[F_{Y,X|Z=z}(y,x)-\int\mathds{1}_{[0,\cdot]\times[0,\cdot]}\left(Y_{X_z(w)}(w),X_z(w)\right)\frac{dP_W}{dP_0}(w)P_0(dw)\right]^2dydx\right)^{1/2}\\
=&  \left(\int_0^1\int_0^1\left[\int \left\{F_{Y,X|Z=z}(y,x)-\mathds{1}_{[0,\cdot]\times[0,\cdot]}\left(Y_{X_z(w)}(w),X_z(w)\right)\frac{dP_W}{dP_0}(w)\right\}P_0(dw)\right]^2dydx\right)^{1/2}\\
\leq& \int\left(\int_0^1\int_0^1\left[F_{Y,X|Z=z}(y,x) - \mathds{1}_{[0,\cdot]\times[0,\cdot]}\left(Y_{X_z(w)}(w),X_z(w)\right)\frac{dP_W}{dP_0}(w)\right]^2dydx\right)^{1/2}P_0(dw)\\
=&\int\left\|F_{Y,X|Z=z} - \mathds{1}_{[0,\cdot]\times[0,\cdot]}\left(Y_{X_z(w)}(w),X_z(w)\right)\frac{dP_W}{dP_0}(w)\right\|_{L^2([0,1]^2)}P_0(dw),
\end{align*}
where the second-to-last line follows by Minkowski's inequality for integrals \citep*[Theorem 6.19]{folland2013real}. 
Taking squares on both sides and applying Jensen's inequality using the fact that $P_0$ is a probability measure gives
\begin{align*}
0\leq\varepsilon^*=&\left\| F_{Y,X|Z=z} - \int \mathds{1}_{[0,\cdot]\times[0,\cdot]}\left(Y_{X_z(w)}(w),X_z(w)\right)\frac{dP_W}{dP_0}(w)P_0(dw)\right\|^2_{L^2([0,1]^2)}\\
\leq&\int\left\|F_{Y,X|Z=z} - \mathds{1}_{[0,\cdot]\times[0,\cdot]}\left(Y_{X_z(w)}(w),X_z(w)\right)\frac{dP_W}{dP_0}(w)\right\|^2_{L^2([0,1]^2)}P_0(dw)
\end{align*}

Therefore, there must exist some $\varepsilon_0>0$ such that if
\[\int\left\|F_{Y,X|Z=z} - \mathds{1}_{[0,\cdot]\times[0,\cdot]}\left(Y_{X_z(w)}(w),X_z(w)\right)\frac{dP_W}{dP_0}(w)\right\|^2_{L^2([0,1]^2)}P_0(dw)= \varepsilon_0,\] then the original constraint is equal to $\varepsilon^*$.

We can now rewrite the programs in penalized form as
\begin{multline*}
\underset{\substack{\frac{dP_w}{dP_0}\\P_0,P_W\in\mathscr{P}^*(\mathcal{W})}}{\min/\max} \int f(Y_{x}(w),x)\frac{dP_W}{dP_0}(w)P_0(dw)\\
+\lambda_0\int\left\|F_{Y,X|Z=z} - \mathds{1}_{[0,\cdot]\times[0,\cdot]}\left(Y_{X_z(w)}(w),X_z(w)\right)\frac{dP_W}{dP_0}(w)\right\|^2_{L^2([0,1]^2)}P_0(dw),
\end{multline*}
where $\lambda_0$ is the penalty term corresponding to $\varepsilon_0$. Putting the new constraint and the objective together gives
\begin{multline*}
\underset{\substack{\frac{dP_W}{dP_0}\\P_0,P_W\in\mathscr{P}^*(\mathcal{W})}}{\min/\max}\int\left[f(Y_{x}(w),x)\frac{dP_W}{dP_0}(w) \vphantom{\lambda_0\left\|F_{Y,X|Z=z} - \mathds{1}_{[0,\cdot]\times[0,\cdot]}\left(Y_{X_z(w)}(w),X_z(w)\right)\frac{dP_W}{dP_0}(w)\right\|^2_{L^2([0,1]^2)}}\right.\\
\left.+\lambda_0\left\|F_{Y,X|Z=z} - \mathds{1}_{[0,\cdot]\times[0,\cdot]}\left(Y_{X_z(w)}(w),X_z(w)\right)\frac{dP_W}{dP_0}(w)\right\|^2_{L^2([0,1]^2)}\right]P_0(dw),
\end{multline*}
which is \eqref{minimizereqpen}. Since the constraint of \eqref{minimizereq} holds for $\varepsilon=0$, the solution to \eqref{minimizereqpen} coincides with the solution to \eqref{minimizereq} as $\lambda\to\infty$.

The same reasoning holds for the programs \eqref{minimizereqapprox}. One can rewrite them in the same way as above in order to obtain the empirical counterpart of \eqref{minimizereqpen} as
\begin{multline*}
\underset{\substack{\frac{d\hat{P}_W}{d\hat{P}_0}\\\hat{P}_0,\hat{P}_W\in\hat{\mathscr{P}}^*(\mathcal{W})}}{\min/\max}\frac{1}{l}\sum_{i=1}^l\left[f(\tilde{Y}^\kappa_{x}(i),x)\frac{d\hat{P}_W}{d\hat{P}_0}(i) \vphantom{\lambda_0\left\|F_{Y,X|Z=z} - \mathds{1}_{[0,\cdot]\times[0,\cdot]}(\tilde{Y}_x^\kappa(i),\tilde{X}_z^\kappa(i))\frac{d\hat{P}_W}{d\hat{P}_0}(i)\right\|_{L^2([0,1]^2)}}\right.\\
\left.+\lambda_0\left\|F_{Y,X|Z=z} - \mathds{1}_{[0,\cdot]\times[0,\cdot]}(\tilde{Y}_{\tilde{X}^\kappa_z(i)}^\kappa(i),\tilde{X}_z^\kappa(i))\frac{d\hat{P}_W}{d\hat{P}_0}(i)\right\|^2_{L^2([0,1]^2)}\right],
\end{multline*}
which is \eqref{minimizereqapproxpen}.
\end{proof}

\subsection{Proof of Theorem \ref{maintheorem2}}
The proof of the theorem requires the following lemma, which bounds the approximation of the paths $Y_x$ and $X_z$ by the wavelet basis. 
\begin{lemma}\label{waveletlemma}
The wavelet operators $W_\kappa$ acting on a c\`adl\`ag function $f\in L^2(\mathbb{R})$ through
\[W_\kappa(f)(x)\coloneqq \sum_{j=-\infty}^\infty \langle f,\varphi_{\kappa j}\rangle\varphi_{\kappa j}(x),\]
where 
\[\langle f,\varphi_{\kappa j}\rangle\coloneqq \int_{-\infty}^\infty f(t)\varphi_{\kappa j}(t)dt\] and $\varphi$ is the hat-function wavelet basis defined in the main text, satisfy 
\[|W_\kappa(f)(x) - f(x)|\leq \omega'_{f}(2^{-\kappa+1})\qquad\text{for all $x\in\mathbb{R}$ and $k\in\mathbb{Z}$},\] where $\omega'_f$ is the extended modulus of continuity. 
\end{lemma}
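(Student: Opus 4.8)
The plan is to exploit the two structural features of the hat (linear $B$-spline) basis: that the translates $\varphi(\cdot-j)$ form a \emph{partition of unity}, and that the generating function $\varphi$ integrates to one. First I would rewrite the operator in a form that exposes these features. Substituting $u=2^\kappa t-j$ in $\langle f,\varphi_{\kappa j}\rangle$ and collecting the powers of $2^{\kappa/2}$, one finds
\[
W_\kappa(f)(x)=\sum_{j=-\infty}^\infty m_{\kappa j}\,\varphi(2^\kappa x-j),\qquad m_{\kappa j}\coloneqq\int_{-\infty}^\infty f\!\left(\tfrac{u+j}{2^\kappa}\right)\varphi(u)\,du .
\]
Since $\int\varphi(u)\,du=1$ and $\varphi\ge 0$, each coefficient $m_{\kappa j}$ is a probability-weighted average of the values of $f$ over the interval $I_{\kappa j}\coloneqq[(j-1)2^{-\kappa},(j+1)2^{-\kappa}]$, whose length is exactly $2^{-\kappa+1}$. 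The crucial elementary fact is the partition-of-unity identity $\sum_{j}\varphi(y-j)=1$ for every $y\in\mathbb{R}$, so that $W_\kappa$ reproduces constants.

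The reproduction of constants lets me express the error as a genuine average of local increments. Using $\sum_j\varphi(2^\kappa x-j)=1$,
\[
W_\kappa(f)(x)-f(x)=\sum_{j}\bigl(m_{\kappa j}-f(x)\bigr)\varphi(2^\kappa x-j).
\]
For a fixed $x$ the support $[-1,1]$ of $\varphi$ forces all but (at most) the two consecutive indices $j=n,n+1$ with $n=\lfloor 2^\kappa x\rfloor$ to vanish, and for these the weights $\varphi(2^\kappa x-j)$ are nonnegative and sum to $1$. Hence the right-hand side is a convex combination, and since $m_{\kappa j}-f(x)=\int\bigl(f(\tfrac{u+j}{2^\kappa})-f(x)\bigr)\varphi(u)\,du$ is itself an average of increments, I can bound
\[
|W_\kappa(f)(x)-f(x)|\le\max_{j\in\{n,n+1\}}\ \sup_{t\in I_{\kappa j}}|f(t)-f(x)| .
\]
Both windows $I_{\kappa n}$ and $I_{\kappa,n+1}$ contain $x$ and have length $2^{-\kappa+1}$, so the problem reduces to controlling the oscillation of $f$ on a length-$2^{-\kappa+1}$ interval around $x$.

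The remaining and genuinely delicate step is to dominate this local oscillation by the \emph{extended} modulus $\omega'_f(2^{-\kappa+1})$ rather than by the ordinary modulus of continuity. The plan is to invoke the definition of $\omega'_f$: fix a $\delta$-partition with $\delta=2^{-\kappa+1}$ that (nearly) attains the infimum, and argue that the relevant averaging window lies inside a single cell $[x_{i-1},x_i)$ of this partition, on which the oscillation is by definition at most $\omega'_f(2^{-\kappa+1})$. This is the main obstacle, because a window of length $2^{-\kappa+1}$ can straddle a breakpoint of the partition, and the whole point of the extended modulus is that such breakpoints are placed at the jumps of the c\`adl\`ag path; across a jump the ordinary oscillation need not be small. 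Making the reduction rigorous therefore requires using the Skorokhod/c\`adl\`ag structure of Assumption~\ref{skorokhodpath} to align the window with a cell of a near-optimal $\delta$-partition---equivalently, to handle separately the finitely many jumps exceeding any threshold---and it is precisely here that the choice of scale $2^{-\kappa+1}$, matched to the support length of the basis functions, does the work. The monotonicity of $\omega'_f$ in $\delta$ then closes the argument for every $\kappa\in\mathbb{Z}$.
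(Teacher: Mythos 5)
Your reduction---change of variables to get $W_\kappa(f)(x)=\sum_j m_{\kappa j}\varphi(2^\kappa x-j)$, partition of unity to reproduce constants, the support of $\varphi$ confining the sum to two indices, and the convex-combination bound---is exactly the skeleton of the paper's proof, and everything up to the display $|W_\kappa(f)(x)-f(x)|\le\max_{j\in\{n,n+1\}}\sup_{t\in I_{\kappa j}}|f(t)-f(x)|$ is correct. But the final step, dominating this local oscillation by $\omega'_f(2^{-\kappa+1})$, is the entire content of the lemma for c\`adl\`ag $f$, and you do not prove it: you state a plan (``align the window with a cell of a near-optimal $\delta$-partition'') and defer to the Skorokhod structure. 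That plan cannot be executed from the inequality you derived, because the windows $I_{\kappa n}$ and $I_{\kappa,n+1}$ are pinned by $x$ and the dyadic grid, while the cells of a near-optimal partition are pinned by the jumps of $f$; neither can be moved. Concretely, take $f=\mathds{1}_{[c,\infty)}$ with $c$ not a dyadic rational and $x$ slightly to the right of $c$: then $\omega'_f(\delta)=0$ for all small $\delta$ (place a partition point at $c$), yet every window of length $2^{-\kappa+1}$ containing $x$ straddles $c$ once $\kappa$ is large, so $\sup_{t\in I_{\kappa j}}|f(t)-f(x)|=1$. Your intermediate bound is therefore strictly weaker than the claimed conclusion, and no appeal to ``finitely many large jumps'' repairs the passage from a sup over a fixed window to the extended modulus.

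The paper handles this step differently, and the difference is precisely what your proposal is missing. Rather than bounding the weighted average of increments by a supremum over the window, the paper keeps the integral $\int|f(2^{-\kappa}u)-f(x)|\varphi(u-j)\,du$ intact, rewrites it as an upper Darboux integral---an infimum over $\delta$-partitions of weighted oscillation sums---and argues that minimizing partitions place their breakpoints at the finitely many large jumps of the c\`adl\`ag path, after which the oscillation factor is identified with $\omega'_f(2^{-\kappa+1})$ and the weight factor integrates to one. Whatever one thinks of the delicacy of that manipulation (your counterexample shows exactly where the tension sits), it is the device by which the paper makes the extended modulus, rather than the ordinary one, appear; your move to $\sup_{t\in I_{\kappa j}}$ discards the integral structure that this device exploits. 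So the gap is not cosmetic: completing your argument would require replacing your oscillation bound with the paper's Darboux-sum step or an equivalent mechanism, not merely making your alignment heuristic rigorous.
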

\begin{proof}
The proof is analogous to the proof of Theorem 1 in \citet*{anastassiou1992monotone}, but for c\`adl\`ag functions instead of continuous functions.
The hat-function wavelet basis satisfies $\sum_{j=-\infty}^\infty \varphi(x-j) = 1$ on $\mathbb{R}$ \citep*{anastassiou1992monotone}, so that for a square integrable function $f(x)$ we have 
\[W_\kappa(f)(x) - f(x) = 2^{\kappa/2}\sum_{j=-\infty}^\infty \left[\langle f,\varphi_{\kappa j}\rangle - 2^{-\kappa/2}f(x)\right] \varphi(2^\kappa x-j).\]
The hat-function wavelet basis also satisfies $\int_{-\infty}^\infty\varphi(u-j)du=1$ , $j\in\mathbb{Z}$ \citep*{anastassiou1992monotone}. Based on this, and by a change of variables, we have
\begin{align*}
\langle f,\varphi_{\kappa j}\rangle - 2^{-\kappa/2}f(x) &= 2^{\kappa/2}\int_{-\infty}^\infty f(t) \varphi(2^\kappa t-j)dt - 2^{-\kappa/2}f(x)\\
&=2^{-\kappa/2}\int_{-\infty}^\infty f(2^{-\kappa}u) \varphi(u-j)du - 2^{-\kappa/2}f(x)\\
&=2^{-\kappa/2}\int_{-\infty}^\infty [f(2^{-\kappa}u)-f(x)] \varphi(u-j)du.
\end{align*}
Since the support of $\varphi$ is $[-1,1]$ and $\varphi\geq0$, we have for $2^{-\kappa}(-1+j)\leq x \leq 2^{-\kappa}(1+j)$
\[|\langle f,\varphi_{\kappa j}\rangle - 2^{-\kappa/2}f(x)| = 2^{-\kappa/2}\left\lvert\int_{-1+j}^{1+j}[f(2^{-\kappa}u)-f(x)] \varphi(u-j)du\right\rvert \leq 2^{-\kappa/2}\int_{-1+j}^{1+j}|f(2^{-\kappa}u)-f(x)| \varphi(u-j)du.\]

The integral on the right is a Riemann integral, so that we can write it in terms of the upper Darboux integral of the function as
\[2^{-\kappa/2}\int_{-1+j}^{1+j}[f(2^{-\kappa}u)-f(x)] \varphi(u-j)du = 2^{-\kappa/2}\inf_{\Pi[2^{-\kappa+1}]}\sum_{i=1}^\eta(u_i-u_{i-1})\sup_{s\in[u_{i-1},u_i)}|f(2^{-\kappa}s)-f(x)|\varphi(s-j),\]
where $\Pi[2^{k-1}]$ denotes a partition $2^{-\kappa}(-1+j) = u_0< u_1 <\ldots <u_{\eta-1}<u_\eta = 2^{-\kappa}(1+j)$ of the interval $[2^{-\kappa}(-1+j),2^{-\kappa}(1+j)]$ of length $2^{\kappa-1}$, and where the infimum is taken over all partitions $\Pi[2^{k-1}]$ of arbitrary $\eta$.

Then we can bound the upper Darboux integral by
\begin{align*}
&2^{-\kappa/2}\inf_{\Pi[2^{-\kappa+1}]}\sum_{i=1}^\eta(u_i-u_{i-1})\sup_{s\in[u_{i-1},u_i)}|f(2^{-\kappa}s)-f(x)|\varphi(s-j)\\
\leq & 2^{-\kappa/2}\inf_{\Pi[2^{-\kappa+1}]}\sum_{i=1}^\eta(u_i-u_{i-1})\sup_{s\in[u_{i-1},u_i)}|f(2^{-\kappa}s)-f(x)|\sup_{s\in[u_{i-1},u_i)}\varphi(s-j)\\
\leq &  2^{-\kappa/2}\inf_{\Pi[2^{-\kappa+1}]}\sum_{i=1}^\eta(u_i-u_{i-1})\left\{\max_{1\leq i\leq\eta}\sup_{s\in[u_{i-1},u_i)}|f(2^{-\kappa}s)-f(x)|\right\}\sup_{s\in[u_{i-1},u_i)}\varphi(s-j)\\
\leq & 2^{-\kappa/2}\inf_{\Pi[2^{-\kappa+1}]}\max_{1\leq i\leq\eta}\sup_{s\in[u_{i-1},u_i)}|f(2^{-\kappa}s)-f(x)|\sum_{i=1}^\eta(u_i-u_{i-1})\sup_{s\in[u_{i-1},u_i)}\varphi(s-j).
\end{align*}
The first inequality follows from the fact that $\varphi\geq 0$, the second inequality follows by choosing the largest difference over all intervals for a given partition, and the third by the fact that $\max_{1\leq i\leq\eta}\sup_{s\in[u_{i-1},u_i)}|f(2^{-\kappa}s)-f(x)|$ does not depend on $i$ anymore. 

The following is the crucial argument for bounding the Darboux sum. By the fact $f$ is c\`adl\`ag, it only has finitely many jumps that exceed any $\sigma>0$, so that in order to minimize the expression, any partition $\Pi[2^{-\kappa+1}]$ has to partition the interval in such a way that the intervals $[u_{i-1},u_i)$ line up with finitely many points $x$ where $f(x)$ jumps by more than some fixed $\sigma>0$, i.e.~is only continuous on the right with limit on the left. That is, if $x$ is a point of discontinuity of $f$ in the considered interval, then any partition that minimizes $\max_{1\leq i\leq\eta}\sup_{s\in[u_{i-1},u_i)}|f(2^{-\kappa}s)-f(x)|$, needs to have $x = u_i$ for some $i$. Now since $\varphi(\cdot-j)$ is continuous on the given interval, the upper Darboux integral converges to the Riemann integral as the partition becomes finer. This still holds if we only consider partitions which have $x=u_i$ for points of discontinuities of $f$, since the points of discontinuity are countable and hence of (Lebesgue-) measure zero---note that Lebesgue-and Riemann integral coincide. Therefore, the infimum over all partitions $\Pi[2^{\kappa-1}]$ for the upper Darboux integral 
\[\sum_{i=1}^\eta(u_i-u_{i-1})\sup_{s\in[u_{i-1},u_i)}\varphi(s-j)\] coincides with the infimum over all partitions $\Pi[2^{-\kappa}]$ which minimize
\[\max_{1\leq i\leq\eta}\sup_{s\in[u_{i-1},u_i)}|f(2^{-\kappa}s)-f(x)|,\] so that we can write
\begin{align*}
& 2^{-\kappa/2}\inf_{\Pi[2^{-\kappa+1}]}\max_{1\leq i\leq\eta}\sup_{s\in[u_{i-1},u_i)}|f(2^{-\kappa}s)-f(x)|\sum_{i=1}^\eta(u_i-u_{i-1})\sup_{s\in[u_{i-1},u_i)}\varphi(s-j)\\
=& 2^{-\kappa/2}\inf_{\Pi[2^{-\kappa+1}]}\max_{1\leq i\leq\eta}\sup_{s\in[u_{i-1},u_i)}|f(2^{-\kappa}s)-f(x)|\inf_{\Pi[2^{-\kappa+1}]}\sum_{i=1}^\eta(u_i-u_{i-1})\sup_{s\in[u_{i-1},u_i)}\varphi(s-j)\\
=& 2^{-\kappa/2}\omega'_f(2^{-\kappa+1})\int_{-1+j}^{1+j}(u_i-u_{i-1})\varphi(u-j)\\
=& 2^{-\kappa/2}\omega'_f(2^{-\kappa+1})\int_{-\infty}^{\infty}(u_i-u_{i-1})\varphi(u-j)\\
=& 2^{-\kappa/2}\omega'_f(2^{-\kappa+1})
\end{align*}
by the definition of $\omega'_f(2^{-\kappa+1})$, the fact that $\varphi(\cdot-j)$ is Riemann and hence Darboux integrable, and the fact that $\varphi$ integrates to $1$ as argued above.

By the fact that the support of $\varphi$ is $[-1,1]$, it holds that
\begin{align*}
|W_\kappa(f)(x) - f(x)|&\leq 2^{\kappa/2}\sum_{j:2^{\kappa}x-j\in[-1,1]} \left\lvert \langle f,\varphi_{\kappa j}\rangle-2^{-\kappa/2}\right\rvert \varphi(2^\kappa x-j)\\
&=\omega'_f(2^{-\kappa+1})\sum_{j=-\infty}^\infty\varphi(2^\kappa x-j) \\
&= \omega'_f(2^{-\kappa+1}).
\end{align*}
\end{proof}

\begin{proof}[Proof of Theorem \ref{maintheorem2}]
The proof is split into two parts. The first part uses Lemma \ref{waveletlemma} to reduce the complexity of the paths $Y_x$ and $X_z$ via their wavelet approximations $\tilde{Y}_x$ and $\tilde{X}_z$, which will make the second part easier to handle. The second part consists of considering an approximated version of the programs \eqref{minimizereq} as an $M$-estimator in the sense of \citet*[chapter 2.14]{wellner2013weak} and derives a lower bound on the number of paths $l$ to be sampled by using the associated concentration results from empirical process theory.\\

\noindent \emph{Part 1:}
We want to bound the approximation of \eqref{minimizereqpen} by
\begin{multline*}
\underset{\substack{\frac{dP_W}{dP_0}\\P_0,P_W\in\mathscr{P}^*(\mathcal{W})}}{\min/\max}\int\left[f(\tilde{Y}^\kappa_{x}(w),x)\frac{dP_W}{dP_0}(w) \vphantom{+\lambda_0\left\|F_{Y,X|Z=z} - \mathds{1}_{[0,\cdot]\times[0,\cdot]}(\tilde{Y}^\kappa_x(w),\tilde{X}^\kappa_z(w))\frac{dP_W}{dP_0}(w)\right\|_{L^2([0,1]^2)}}\right.\\
\left.+\lambda_0\left\|F_{Y,X|Z=z} - \mathds{1}_{[0,\cdot]\times[0,\cdot]}(\tilde{Y}^\kappa_{\tilde{X}^\kappa_z(w)})(w),\tilde{X}^\kappa_z(w))\frac{dP_W}{dP_0}(w)\right\|^2_{L^2([0,1]^2)}\right]P_0(dw),
\end{multline*}
where the notation $\tilde{Y}_x(w)$ means the application of the wavelet operator from Lemma \ref{waveletlemma} to the path $Y_x(w)$, and analogously for $\tilde{X}_z(w)$. That is, we want to bound
\begin{equation}\label{needtobound}
\begin{aligned}
&\left\lvert\underset{\substack{\frac{dP_W}{dP_0}\\P_0,P_W\in\mathscr{P}^*(\mathcal{W})}}{\min/\max}\int\left[f(Y_{x}(w),x_0)\frac{dP_W}{dP_0}(w) \vphantom{+\lambda_0\left\|F_{Y,X|Z=z} - \mathds{1}_{[0,\cdot]\times[0,\cdot]}(\tilde{Y}^\kappa_x(w),\tilde{X}^\kappa_z(w))\frac{dP_W}{dP_0}(w)\right\|_{L^2([0,1]^2)}}\right.\right.\\
&\qquad\left.+\lambda_0\left\|F_{Y,X|Z=z} - \mathds{1}_{[0,\cdot]\times[0,\cdot]}(Y_{X_z(w)})(w),X_z(w))\frac{dP_W}{dP_0}(w)\right\|^2_{L^2([0,1]^2)}\right]P_0(dw)\\
&-\underset{\substack{\frac{dP_W}{dP_0}\\P_0,P_W\in\mathscr{P}^*(\mathcal{W})}}{\min/\max}\int\left[f(\tilde{Y}^\kappa_{x}(w),x)\frac{dP_W}{dP_0}(w)\vphantom{+\lambda_0\left\|F_{Y,X|Z=z} - \mathds{1}_{[0,\cdot]\times[0,\cdot]}(\tilde{Y}^\kappa_x(w),\tilde{X}^\kappa_z(w))\frac{dP_W}{dP_0}(w)\right\|^2_{L^2([0,1]^2)}}\right.\\
&\qquad\left.\vphantom{\left\lvert\underset{\substack{\frac{dP_W}{dP_0}\\P_0,P_W\in\mathscr{P}^*(\mathcal{W})}}{\min/\max}\int\left[f(Y_{x}(w),x_0)\frac{dP_W}{dP_0}(w) \vphantom{+\lambda_0\left\|F_{Y,X|Z=z} - \mathds{1}_{[0,\cdot]\times[0,\cdot]}(\tilde{Y}^\kappa_x(w),\tilde{X}^\kappa_z(w))\frac{dP_W}{dP_0}(w)\right\|^2_{L^2([0,1]^2)}}\right.\right.}
\left.+\lambda_0\left\|F_{Y,X|Z=z} - \mathds{1}_{[0,\cdot]\times[0,\cdot]}(\tilde{Y}^\kappa_{\tilde{X}^\kappa_z(w)})(w),\tilde{X}^\kappa_z(w))\frac{dP_W}{dP_0}(w)\right\|^2_{L^2([0,1]^2)}\right]P_0(dw)\right\rvert,
\end{aligned}
\end{equation}
By the rules of the maximum and the minimum, both the maximum and the minimum variant can be bounded above by
\begin{equation}\label{wantobound}
\begin{aligned}
\underset{\substack{\frac{dP_W}{dP_0}\\P_0,P_W\in\mathscr{P}^*(\mathcal{W})}}{\max}&\left\lvert\int\left[f(Y_{x}(w),x_0)\frac{dP_W}{dP_0}(w) \vphantom{+\lambda_0\left\|F_{Y,X|Z=z} - \mathds{1}_{[0,\cdot]\times[0,\cdot]}(\tilde{Y}^\kappa_x(w),\tilde{X}^\kappa_z(w))\frac{dP_W}{dP_0}(w)\right\|_{L^2([0,1]^2)}}\right.\right.\\
&\qquad\left.+\lambda_0\left\|F_{Y,X|Z=z} - \mathds{1}_{[0,\cdot]\times[0,\cdot]}(Y_{X_z(w)})(w),X_z(w))\frac{dP_W}{dP_0}(w)\right\|^2_{L^2([0,1]^2)}\right]P_0(dw)\\
&-\int\left[f(\tilde{Y}^\kappa_{x}(w),x)\frac{dP_W}{dP_0}(w)\vphantom{+\lambda_0\left\|F_{Y,X|Z=z} - \mathds{1}_{[0,\cdot]\times[0,\cdot]}(\tilde{Y}^\kappa_x(w),\tilde{X}^\kappa_z(w))\frac{dP_W}{dP_0}(w)\right\|^2_{L^2([0,1]^2)}}\right.\\
&\qquad\left.
\left.+\lambda_0\left\|F_{Y,X|Z=z} - \mathds{1}_{[0,\cdot]\times[0,\cdot]}(\tilde{Y}^\kappa_{\tilde{X}^\kappa_z(w)})(w),\tilde{X}^\kappa_z(w))\frac{dP_W}{dP_0}(w)\right\|^2_{L^2([0,1]^2)}\right]P_0(dw)\right\rvert,
\end{aligned}
\end{equation}

Focus on the terms in brackets first, and in particular the second term; that is, we first want to bound
\begin{equation}\label{firstbound}
\begin{aligned}
&\int\left\lvert \left\|F_{Y,X|Z=z}-\mathds{1}_{[0,\cdot]\times[0,\cdot]}(Y_{X_z(w)}(w),X_z(w))\frac{dP_W}{dP_0}(w) \right\|^2_{L^2([0,1]^2)} \right.\\
&- \left.\left\|F_{Y,X|Z=z}-\mathds{1}_{[0,\cdot]\times[0,\cdot]}(\tilde{Y}^\kappa_{\tilde{X}^\kappa_z(w)}(w),\tilde{X}^\kappa_z(w))\frac{dP_W}{dP_0}(w) \right\|^2_{L^2([0,1]^2)}\right\rvert dP_0(w).
\end{aligned} 
\end{equation}

To do so, we first focus on 
\begin{equation}\label{bothparts}
\begin{aligned}
&\left\lvert \left\|F_{Y,X|Z=z}-\mathds{1}_{[0,\cdot]\times[0,\cdot]}(Y_{X_z(w)}(w),X_z(w))\frac{dP_W}{dP_0}(w) \right\|^2_{L^2([0,1]^2)} \right.\\
&- \left.\left\|F_{Y,X|Z=z}-\mathds{1}_{[0,\cdot]\times[0,\cdot]}(\tilde{Y}^\kappa_{\tilde{X}^\kappa_z(w)}(w),\tilde{X}^\kappa_z(w))\frac{dP_W}{dP_0}(w) \right\|^2_{L^2([0,1]^2)}\right\rvert.
\end{aligned} 
\end{equation}
for all $w\in[0,1]$. The Parallelogram identity of Hilbert spaces \citep*[p.~8]{conway1990course} and the fact that the zero element is orthogonal to all elements in a Hilbert space gives 
\begin{align*}
&\left\lvert \left\|F_{Y,X|Z=z}-\mathds{1}_{[0,\cdot]\times[0,\cdot]}(Y_{X_z(w)}(w),X_z(w))\frac{dP_W}{dP_0}(w) \right\|^2_{L^2([0,1]^2)} \right.\\
&- \left.\left\|F_{Y,X|Z=z}-\mathds{1}_{[0,\cdot]\times[0,\cdot]}(\tilde{Y}^\kappa_{\tilde{X}^\kappa_z(w)}(w),\tilde{X}^\kappa_z(w))\frac{dP_W}{dP_0}(w) \right\|^2_{L^2([0,1]^2)}\right\rvert\\
=&\left\lvert\left\|\mathds{1}_{[0,\cdot]\times[0,\cdot]}(Y_{X_z(w)}(w),X_z(w))\frac{dP_W}{dP_0}(w)\right\|^2_{L^2([0,1]^2)}-\left\|\mathds{1}_{[0,\cdot]\times[0,\cdot]}(\tilde{Y}^\kappa_{\tilde{X}^\kappa_z(w)}(w),\tilde{X}^\kappa_z(w))\frac{dP_W}{dP_0}(w) \right\|^2_{L^2([0,1]^2)}\right\rvert.
\end{align*} 
Writing out and simplifying the integrals, while noting that indicator functions are idempotent, we can bound this once more to obtain
\begin{align*}
&\left\lvert \left\|F_{Y,X|Z=z}-\mathds{1}_{[0,\cdot]\times[0,\cdot]}(Y_{X_z(w)}(w),X_z(w))\frac{dP_W}{dP_0}(w) \right\|^2_{L^2([0,1]^2)} \right.\\
&\left.- \left\|F_{Y,X|Z=z}-\mathds{1}_{[0,\cdot]\times[0,\cdot]}(\tilde{Y}^\kappa_{\tilde{X}^\kappa_z(w)}(w),\tilde{X}^\kappa_z(w))\frac{dP_W}{dP_0}(w) \right\|^2_{L^2([0,1]^2)}\right\rvert\\
=&\left\lvert\int_{[0,1]^2}\left[\mathds{1}_{[0,\cdot]\times[0,\cdot]}(Y_x(w),X_z(w))-\mathds{1}_{[0,\cdot]\times[0,\cdot]}(\tilde{Y}^\kappa_{\tilde{X}^\kappa_z(w)}(w),\tilde{X}^\kappa_z(w))\right]dydx\right\rvert\left(\frac{dP_W}{dP_0}\right)^2(w).
\end{align*}

We can write the last term as
\begin{align*}
&\left\lvert\int_{[0,1]^2}\left[\mathds{1}_{[0,\cdot]\times[0,\cdot]}(Y_x(w),X_z(w))-\mathds{1}_{[0,\cdot]\times[0,\cdot]}(\tilde{Y}^\kappa_{\tilde{X}^\kappa_z(w)}(w),\tilde{X}^\kappa_z(w))\right]dydx\right\rvert\left(\frac{dP_W}{dP_0}\right)^2(w)\\
=&\left\lvert\int_{[0,1]^2}\left[\mathds{1}_{[Y_{X_z(w)}(w),1]\times[X_z(w),1]}(\cdot,\cdot)-\mathds{1}_{[\tilde{Y}^\kappa_{\tilde{X}^\kappa_z(w)}(w),1]\times[\tilde{X}^\kappa_z(w),1]}(\cdot,\cdot)dydx\right]\right\rvert\left(\frac{dP_W}{dP_0}\right)^2(w)\\
=&\left\lvert\int_{Y_{X_z(w)}(w)\wedge \tilde{Y}^\kappa_{\tilde{X}^\kappa_z(w)}(w)}^{Y_{X_z(w)}(w)\vee \tilde{Y}^\kappa_{\tilde{X}^\kappa_z(w)}(w)}\int_{X_z(w)\wedge \tilde{X}^\kappa_z(w)}^{X_z(w)\vee \tilde{X}^\kappa_z(w)}dydx\right\rvert\left(\frac{dP_W}{dP_0}\right)^2(w)\\
=&\left(\frac{dP_W}{dP_0}\right)^2(w)\left\lvert Y_{X_z(w)}(w)-\tilde{Y}^\kappa_{\tilde{X}^\kappa_z(w)}(w)\right\rvert\left\lvert X_z(w)-\tilde{X}^\kappa_z(w)\right\rvert.
\end{align*}

By Lemma \ref{waveletlemma} it holds
\[\left\lvert Y_x(w) - \tilde{Y}_x^\kappa(w)\right\rvert\leq \omega'_{Y_x(w)}(2^{-\kappa+1}),\qquad \left\lvert X_z(w) - \tilde{X}_z^\kappa(w)\right\rvert\leq \omega'_{X_z(w)}(2^{-\kappa+1})\qquad\text{for all $w\in[0,1]$,}\]
so that
\begin{align*}
&\left\lvert \left\|F_{Y,X|Z=z}-\mathds{1}_{[0,\cdot]\times[0,\cdot]}(Y_{X_z(w)}(w),X_z(w))\frac{dP_W}{dP_0}(w) \right\|^2_{L^2([0,1]^2)} \right.\\
&\left.- \left\|F_{Y,X|Z=z}-\mathds{1}_{[0,\cdot]\times[0,\cdot]}(\tilde{Y}^\kappa_{\tilde{X}^\kappa_z(w)}(w),\tilde{X}^\kappa_z(w))\frac{dP_W}{dP_0}(w) \right\|^2_{L^2([0,1]^2)}\right\rvert\\
\leq&\omega'_{Y_x(w)}(2^{-\kappa+1})\omega'_{X_z(w)}(2^{-\kappa+1})\left(\frac{dP_W}{dP_0}\right)^2(w),
\end{align*}
and therefore
\begin{equation}\label{firstbound}
\begin{aligned}
&\int\left\lvert \left\|F_{Y,X|Z=z}-\mathds{1}_{[0,\cdot]\times[0,\cdot]}(Y_{X_z(w)}(w),X_z(w))\frac{dP_W}{dP_0}(w) \right\|^2_{L^2([0,1]^2)} \right.\\
&- \left.\left\|F_{Y,X|Z=z}-\mathds{1}_{[0,\cdot]\times[0,\cdot]}(\tilde{Y}^\kappa_{\tilde{X}^\kappa_z(w)}(w),\tilde{X}^\kappa_z(w))\frac{dP_W}{dP_0}(w) \right\|^2_{L^2([0,1]^2)}\right\rvert dP_0(w)\\
\leq& \int\omega'_{Y_x(w)}(2^{-\kappa+1})\omega'_{X_z(w)}(2^{-\kappa+1}) \left(\frac{dP_W}{dP_0}\right)^2(w)dP_0(w)\\
\leq& \sup_{w\in[0,1]}\omega'_{Y_x(w)}(2^{-\kappa+1})\omega'_{X_z(w)}(2^{-\kappa+1}) C_{RN},
\end{aligned} 
\end{equation}
where the last line follows from H\"older's inequality in combination with the fact that the integral with respect to the Radon-Nikodym density is a probability integral and that $\sup_{w\in[0,1]}\frac{dP_W}{dP_0}(w)\leq C_{RN}$ by Assumption \ref{radonnikodymass}.

The same decomposition can be achieved for the objective function, i.e.~the first term in \eqref{wantobound}. Indeed, if $f(Y_x(w),x_0)$ for some $x_0\in[0,1]$ is H\"older-continuous with coefficient $\alpha$ and constant $K<+\infty$, one obtains 
\begin{align*}
&\left\lvert \int \left[f(Y_x(w),x_0) - f(\tilde{Y}^\kappa_{x}(w), x_0)\right]\frac{dP_W}{dP_0}(w)dP_0(w)\right\rvert\\
\leq& \int \left\vert f(Y_x(w),x_0) - f(\tilde{Y}^\kappa_{x}(w), x_0)\right\rvert\frac{dP_W}{dP_0}(w)dP_0(w)\\
=& \int K\left\vert Y_x(w) - \tilde{Y}^\kappa_{x}(w)\right\rvert^\alpha\frac{dP_W}{dP_0}(w)dP_0(w)\\
\leq&\sup_{w\in[0,1]}\left(\omega'_{Y_x(w)}(2^{-\kappa+1})\right)^{\alpha}
\end{align*} by the fact that the integration is with respect to a probability measure.

If $f(Y_x,x_0)$ takes the form $f(Y_x,A_x)=\mathds{1}_{A_y}(Y_{A_x})$, we need to approximate them by a logistic function $\mathcal{S}(Y_x(w)),y,\eta)$, and use the Lipschitz continuity of logistic functions. For this, we use Theorem 7 in \citet*{anguelov2015hausdorff}, which bounds the Hausdorff distance\footnote{The Hausdorff distance between two non-empty subsets $\mathcal{X}_0$ and $\mathcal{X}_1$ of $\mathbb{R}^d$ is defined as $d_H(\mathcal{X}_0,\mathcal{X}_1) \coloneqq \max\left\{\sup_{x\in\mathcal{X}_0}\inf_{x'\in\mathcal{X}_1} \|x-x'\|, \sup_{x'\in\mathcal{X}_1}\inf_{x\in\mathcal{X}_0} \|x-x'\|\right\}$.} between the graphs of the logit function and the indicator function. This approximation in our notation reads
\[d_H\left(\mathds{1}_{[0,y]}(Y_{x_0}),\mathcal{S}(Y_x(w)),y,\eta)\right)\leq \frac{\log(\eta+1)}{\eta+1}\left(1+O\left(\frac{\log\log(\eta+1)}{\log(\eta+1)}\right)\right).\]
The actual approximation then follows exactly as in the Lipschitz case, as the logistic function is Lipschitz continuous 

Putting both terms together bounds and noting that the bound hold for the supremum over all $w\in[0,1]$, so that it also holds uniformly for all probability measures $\frac{dP_W}{dP_0}(w)dP_0(w)$ on $[0,1]$ by H\"older's inequality, we can bound \eqref{wantobound} and therefore \eqref{needtobound} by
\begin{equation}
\eqref{needtobound}\leq \sup_{w\in[0,1]}\omega'_{Y_x(w)}(2^{-\kappa+1})\omega'_{X_z(w)}(2^{-\kappa+1})C_{RN}+\sup_{w\in[0,1]}K\left(\omega'_{Y_x(w)}(2^{-\kappa+1})\right)^\alpha.
\end{equation} 
When $f(Y_x,x_0)\coloneqq \mathds{1}_{A_y}(Y_{A_x}(w))$ in Assumption \ref{objectivefunctionstronger}, the bound is
\begin{multline}
\eqref{needtobound}\leq \sup_{w\in[0,1]}\omega'_{Y_x(w)}(2^{-\kappa+1})\omega'_{X_z(w)}(2^{-\kappa+1})C_{RN}+\sup_{w\in[0,1]}K\omega'_{Y_x(w)}(2^{-\kappa+1})\\
+\frac{\log(\eta+1)}{\eta+1}\left(1+O\left(\frac{\log\log(\eta+1)}{\log(\eta+1)}\right)\right),
\end{multline} 
where $K=\frac{\eta\cdot e^{\eta(y+Y_x(w))}}{\left(e^{\eta Y_x(w)}+e^{\eta y}\right)^2}$.

This gives the first part of the bound of Theorem \ref{maintheorem2}.\\

\noindent\emph{Part 2:}
The idea is to consider $\frac{dP_W}{dP_0}$ as an M-estimator of the programs \eqref{minimizereqapproxpen}. In the following we write $h(w)\coloneqq \frac{dP_W}{dP_0}(w)$ to save on notation. We also write
\begin{align*}
&\mathbb{P}_{0,l}m_{h}\coloneqq \frac{1}{l}\sum_{i=1}^l m_{h}(i) \\
\equiv&\frac{1}{l}\sum_{i=1}^l\left[f(\tilde{Y}^\kappa_{x}(i),x)\hat{h}(i)+\lambda_0\left\|F_{Y,X|Z=z} - \mathds{1}_{[0,\cdot]\times[0,\cdot]}(\tilde{Y}_x^\kappa(i),\tilde{X}_z^\kappa(i))\hat{h}(i)\right\|^2_{L^2([0,1]^2)}\right],
\end{align*}
where $\mathbb{P}_{0,l}$ denotes the empirical law of the representative law $P_0$ for $l$ samples of paths.
Analogously, we define
\begin{align*}
&P_{0}m_{h}\coloneqq \int m_{h}(w)dP_0(w) \\
\equiv&\int\left[f(\tilde{Y}^\kappa_{x}(w),x)h(w)+\lambda\left\|F_{Y,X|Z=z} - \mathds{1}_{[0,\cdot]\times[0,\cdot]}(\tilde{Y}_x^\kappa(w),\tilde{X}_z^\kappa(w))h(w)\right\|^2_{L^2([0,1]^2)}\right]dP_0(w).
\end{align*}

The idea now is to bound the uniform entropy numbers\footnote{The uniform entropy number relative to the $L^2$-norm of a class $\mathcal{F}$ of functions is defined as $\sup_{Q}\log N(\varepsilon\|F\|_{L^2(Q)},\mathcal{F},L^2(Q))$, where the supremum runs over all probability measures with finite support, and where $N(\varepsilon,\mathcal{F},L^2(Q))$ denotes the $\epsilon$-covering number of the class $\mathcal{F}$ with respect to the $L^2$-norm of $Q$, see \citet*[p.~84]{wellner2013weak}.} of the set of functions $\mathcal{M}_h$, i.e.~the set of all admissible functions $m_h$. We can bound their uniform entropy in terms of the uniform entropy of the Radon-Nikodym derivative $h(w)$. To do so, we show that $m_h$ is Lipschitz-continuous $h_n$. Note in this respect that we do not need to include the complexity of the paths $\tilde{Y}_x(w)$ and $\tilde{X}_z(w)$, because they are fixed for each $w$ by the theoretical construction in Proposition \ref{myprop}. 

So consider 
\[m_h(w)\coloneqq f(\tilde{Y}^\kappa_x(w),x)h(w)+\lambda\left\|F_{Y,X|Z=z} - \mathds{1}_{[0,\cdot]\times[0,\cdot]}(\tilde{Y}_x^\kappa(w),\tilde{X}_z^\kappa(w))h(w)\right\|^2_{L^2([0,1]^2)}.\] We can bound
\begin{align*}
&\left\lvert f(\tilde{Y}^\kappa_x(w),x)h(w)-f(\tilde{Y}^\kappa_x(w),x)g(w)+\left\|F_{Y,X|Z=z}-\mathds{1}_{[0,\cdot]\times[0,\cdot]}(\tilde{Y}^\kappa_{\tilde{X}^\kappa_z(w)}(w),\tilde{X}^\kappa_z(w))h(w) \right\|^2_{L^2([0,1]^2)} \right.\\
&\left.- \left\|F_{Y,X|Z=z}-\mathds{1}_{[0,\cdot]\times[0,\cdot]}(\tilde{Y}^\kappa_{\tilde{X}^\kappa_z(w)}(w),\tilde{X}^\kappa_z(w))g(w) \right\|^2_{L^2([0,1]^2)}\right\rvert\\
\leq&M\left\lvert h(w)-g(w)\right\rvert\\
&+\left\lvert\left\|\mathds{1}_{[0,\cdot]\times[0,\cdot]}(\tilde{Y}^\kappa_{\tilde{X}^\kappa_z(w)}(w),\tilde{X}^\kappa_z(w))h(w)\right\|^2_{L^2([0,1]^2)}-\left\|\mathds{1}_{[0,\cdot]\times[0,\cdot]}(\tilde{Y}^\kappa_{\tilde{X}^\kappa_z(w)}(w),\tilde{X}^\kappa_z(w))g(w) \right\|^2_{L^2([0,1]^2)}\right\rvert\\
\leq &M\left\lvert h(w)-g(w)\right\rvert\\
&+2C_{RN}\left\|\mathds{1}_{[0,\cdot]\times[0,\cdot]}(\tilde{Y}^\kappa_{\tilde{X}^\kappa_z(w)}(w),\tilde{X}^\kappa_z(w))h(w)-\mathds{1}_{[0,\cdot]\times[0,\cdot]}(\tilde{Y}^\kappa_{\tilde{X}^\kappa_z(w)}(w),\tilde{X}^\kappa_z(w))g(w)) \right\|_{L^2([0,1]^2)}\\
=&M\left\lvert h(w)-g(w)\right\rvert+2C_{RN}\left(\int_{[0,1]^2}\mathds{1}_{[0,\cdot]\times[0,\cdot]}(\tilde{Y}^\kappa_{\tilde{X}^\kappa_z(w)}(w),\tilde{X}^\kappa_z(w))\left(h(w)-g(w)\right)^2dydx\right)^{1/2}\\
=&M\left\lvert h(w)-g(w)\right\rvert+2C_{RN}\left(\int_{[0,1]^2}\mathds{1}_{[\tilde{Y}_{\tilde{X}_z(w)}(w),1]\times[\tilde{X}_z(w),1]}(\cdot,\cdot)\left(h(w)-g(w)\right)^2dydx\right)^{1/2}\\
=&M\left\lvert h(w)-g(w)\right\rvert+2C_{RN}\left(\int_{\tilde{Y}^\kappa_{\tilde{X}^\kappa_z(w)}(w)}^{1}\int_{\tilde{X}_z(w)}^{1}dydx\right)^{1/2}\left\lvert h(w)-g(w)\right\rvert\\
=&M\left\lvert h(w)-g(w)\right\rvert+2C_{RN}\left\lvert 1-\tilde{Y}^\kappa_{\tilde{X}^\kappa_z(w)}(w)\right\rvert^{1/2}\left\lvert 1-\tilde{X}^\kappa_z(w)\right\rvert^{1/2}\left\lvert h(w)-g(w)\right\rvert\\
\leq&2C_{RN}M\left\lvert h(w)-g(w)\right\rvert,
\end{align*}
where $h(w)$ and $g(w)$ are two different Radon-Nikodym derivatives satisfying Assumption \ref{radonnikodymass} with bound $C_{RN}$, and where $M<+\infty$ is the bound on the objective function under Assumption \ref{objectivefunctionstronger}. The first inequality follows from the Parallelogram identity in Hilbert spaces \citep*[p.~8]{conway1990course} in connection with the fact that the zero element is orthogonal to every other element. The second inequality follows from 
\[|\|x\|^2-\|y\|^2|=|\|x\|-\|y\|||\|x\|+\|y\||,\] the triangle inequality, and the fact that the Radon Nikodym derivatives are bounded above by $C_{RN}$. The first equality follows from the idempotency of the indicator functions.

Due to the Lipschitz property of $m_h$, every $\varepsilon$-covering of the set of Radon-Nikodym derivatives is an $(C_{RN}\cdot M\cdot\varepsilon)$-covering of $\mathcal{M}_h$ (e.g.~section 2.7.4 in \citet*{wellner2013weak} and the fact that bracketing numbers bound covering numbers with respect to half the radius), so that the uniform entropy number of $\mathcal{M}_h$ can be bounded by the uniform entropy number of the set of all Radon-Nikodym derivatives. By Assumption \ref{radonnikodymass}, every $h$ is $\beta$-H\"older continuous, and the entropy number of this set can be bounded by \citep*[Theorem 2.7.1]{wellner2013weak}
\[\sup_Q \log N\left(\varepsilon C_{RN},C^\beta_{C_{RN}}([0,1]),L^2(Q)\right)\leq \sup_Q \log N\left(\varepsilon C_{RN},C^\beta_{C_{RN}}([0,1]),\|\cdot\|_\infty\right)\leq C\left(\frac{1}{\varepsilon}\right)^{1/\beta},\] as $\mathds{1}_{[0,1]}C_{RN}$ is an envelope function. Here the constant $C<+\infty$ depends on $\beta$ and $C_{RN}$. Therefore, we can bound the entropy number of $\mathcal{M}_h$ by
\[\sup_Q \log N\left(\varepsilon MC_{RN},\mathcal{M}_h,L^2(Q)\right)\leq \sup_Q \log N\left(\varepsilon C_{RN},C^\beta_{C_{RN}}([0,1]),L^2(Q)\right)\leq C\left(\frac{1}{\varepsilon}\right)^{1/\beta}.\]

Now based on the bound \eqref{wantobound} the goal is to bound its analogue in the form of 
\[P^*\left(\sup_{m_h\in\mathcal{M}_h}\left\lvert \mathbb{P}_{l,0}m_h - P_0m_h\right\rvert>t\right)\] for small $t>0$, where $P^*$ denotes outer probability to avoid measurability issues.
It holds
\begin{align*}
&P^*\left(\sup_{m_h\in\mathcal{M}_h}\left\lvert \mathbb{P}_{l,0}m_h - P_0m_h\right\rvert>t\right)\\
=&P^*\left(\sqrt{l}\sup_{m_h\in\mathcal{M}_h}\left\lvert \mathbb{P}_{l,0}m_h - P_0m_h\right\rvert>\sqrt{l}t\right).
\end{align*}
The tail bound derived in Theorem 2.14.10 of \citet*{wellner2013weak} implies that for every $\delta>0$ and $t>0$ 
\begin{equation}\label{tailbound}
P^*\left(\sqrt{l}\sup_{m_h\in\mathcal{M}_h}\left\lvert \mathbb{P}_{l,0}m_h - P_0m_h\right\rvert>\sqrt{l}t\right)\leq L \exp\left(D(\sqrt{l}t)^{U+\delta}\right)\exp(-2lt^2),
\end{equation}
where $U\coloneqq \frac{6\beta-1}{\beta(2\beta+1)}$ and the constants $L$ and $D$ depends on the constant $C$, $\beta$, and $\delta$. 
Since $\beta>\frac{1}{2}$, there exists a small enough $\delta>0$ such that $U+\delta <2$. Therefore, one can bound the term further by 
\[L \exp\left(D(\sqrt{l}t)^{U+\delta}\right)\exp(-2lt^2)\leq \bar{C}\exp(-\bar{D}lt^2),\]
for constants $0<\bar{C}<+\infty$ and (potentially small) $0<\bar{D}<2$. 

Now apply inversion. It follows that 
\[P^*\left(\sup_{m_h\in\mathcal{M}_h}\left\lvert \mathbb{P}_{l,0}m_h - P_0m_h\right\rvert>\sqrt{\frac{\log\left(\frac{\bar{C}}{\rho}\right)}{\bar{D}l}}\right)\leq \rho\qquad\text{for $\rho\in(0,1)$,}\] and so
\[P^*\left(\sup_{m_h\in\mathcal{M}_h}\left\lvert \mathbb{P}_{l,0}m_h - P_0m_h\right\rvert\leq\sqrt{\frac{\log\left(\frac{\bar{C}}{\rho}\right)}{\bar{D}l}}\right)\geq 1-\rho\qquad\text{for $\rho\in(0,1)$.}\] Therefore, with probability at least $1-\rho$ it holds that 
\[\sup_{m_h\in\mathcal{M}_h}\left\lvert \mathbb{P}_{l,0}m_h - P_0m_h\right\rvert\leq\sqrt{\frac{\log\left(\frac{\bar{C}}{\rho}\right)}{\bar{D}l}}.\] 

Now we put everything together.
Denote by $V^*$ and $V_*$ the value function of the maximization and minimization of \eqref{minimizereqpen}, respectively, and by $\tilde{V}^*_{l,\kappa}$ and $\tilde{V}_{*,l,\kappa}$ the value functions of \eqref{minimizereqapproxpen}. Then by part 1 and part 2 it holds in combination with the triangle inequality that in the case where $f(Y_x,x_0)$ is Lipschitz continuous
\begin{multline*}
\max\{|V^* - \tilde{V}^*_{l,\kappa}|,|V_*-\tilde{V}_{*,l,\kappa}|\}\\
\leq \left[\sup_{w\in[0,1]}\omega'_{Y_x(w)}(2^{-\kappa+1})\omega'_{X_z(w)}(2^{-\kappa+1})C_{RN}+\sup_{w\in[0,1]}K\left(\omega'_{Y_x(w)}(2^{-\kappa+1})\right)^\alpha \right]+\sqrt{\frac{\log\left(\frac{\bar{C}}{\rho}\right)}{\bar{D}l}}
\end{multline*}
with probability of at least $1-\rho$, and analogous when $f(Y_x,x_0)=\mathds{1}_{[0,y]}(Y_{x_0}(w))$.
\end{proof}

\subsection{Proof of Proposition \ref{consistencyprop}}
\begin{proof}
From the triangle inequality it follows that
\begin{multline}\label{toshowbound}
P^*\left(\left\lvert\hat{\tilde{V}}_{*,l,\kappa}(\hat{F}_{Y,X|Z}) - V_{*}(F_{Y,X|Z})\right\rvert\right)\leq \\
P^*\left(\left\lvert\hat{\tilde{V}}_{*,l,\kappa}(\hat{F}_{Y,X|Z;n}) - V_{*}(\hat{F}_{Y,X|Z;n})\right\rvert\right)+P^*\left(\left\lvert V_{*}(\hat{F}_{Y,X|Z;n}) - V_{*}(F_{Y,X|Z})\right\rvert\right),
\end{multline}
where the probability measure on the left hand side is taken with respect to the joint distribution of iid data $(Y_k,X_k)_{k=1}^n$ and the sampled paths $(Y_x(i),X_z(i))_{i=1}^l$. The same decomposition is valid for the maximization $V^*$. The randomness of the first term on the right hand sidewith respect to both data and paths, while the randomness of the second term on the right hand side is only with respect to the data. 
We now bound each term separately. 

The proof that the first term goes to zero is exactly the same as the proof of Theorem \ref{maintheorem2} up to the inversion argument. Consider the case where $f(Y_x,x_0)$ is Lipschitz continuous, as the other case is perfectly analogous up to the logistic term. Based on Theorem \ref{maintheorem2}, we can bound
\begin{align*}
&P^*\left(\left\lvert\hat{\tilde{V}}_{*,l,\kappa}(\hat{F}_{Y,X|Z;n}) - V_{*}(\hat{F}_{Y,X|Z;n})\right\rvert\right) \\
\leq&\sup_{w\in[0,1]}\omega'_{Y_x(w)}(2^{-\kappa+1})\omega'_{X_z(w)}(2^{-\kappa+1})C_{RN}+\sup_{w\in[0,1]}K\left(\omega'_{Y_x(w)}(2^{-\kappa+1})\right)^\alpha+\bar{C}\exp(-\bar{D}lt^2),
\end{align*}
for constants $0<\bar{C}<+\infty$ and (potentially small) $0<\bar{D}<2$. Letting $l,\kappa\to\infty$ shows that the first term of \eqref{toshowbound} goes to zero. 

For the second term of \eqref{toshowbound}, note that $\hat{F}_{Y,X|Z;n}$ is in $L^2([0,1]^2)$ for all $n$ and as $n\to\infty$. Moreover, $V_{*}(\hat{F}_{Y,X|Z;n})$ is continuous in its argument under Assumption \ref{nonemptinessass} the $L^2$-norm is continuous in its argument, and $V_{*}$ is a composition of the $L^2$-norm and other continuous functions. Now by the Glivenko-Cantelli Theorem \citep*[Theorem 19.1]{van2000asymptotic}, it holds that 
\[P^*\left(\|\hat{F}_{Y,X|Z;n}-F_{Y,X|Z}\|_{L^\infty([0,1]^2)}\right)\to0,\] so that by the Continuous Mapping Theorem \citep*[Theorem 1.11.1]{wellner2013weak}
\[P^*\left(\left\lvert(V_{*}(\hat{F}_{Y,X|Z;n})-V_{*}(F_{Y,X|Z})\right\rvert\right)\to0.\]
This shows that the second term of the right-hand side of \eqref{toshowbound} goes to zero as $n\to\infty$. Finally, this holds for all $\lambda\in\mathbb{R}$. 
\end{proof}

\subsection{Proof of Proposition \ref{largesampleprop}}
\begin{proof}
Focus on the minimization, as the maximization is completely analogous. We want to obtain the large-sample distribution 
\[\sqrt{n}(\hat{\tilde{V}}_{*,l,\kappa}(\hat{F}_{Y,X|Z=z;n})-V_{*}(F_{Y,X|Z=z})),\]
which analogously to the proofs before we split into two parts:
\begin{align*}
&\sqrt{n}(\hat{\tilde{V}}_{*,l,\kappa}(\hat{F}_{Y,X|Z=z;n})-V_{*}(F_{Y,X|Z=z}))\\
=&\sqrt{n}(\hat{\tilde{V}}_{*,l,\kappa}(\hat{F}_{Y,X|Z=z;n})-V_{*}(\hat{F}_{Y,X|Z=z;n}))+\sqrt{n}(V_{*}(\hat{F}_{Y,X|Z=z;n})-V_{*}(F_{Y,X|Z=z})).
\end{align*}
We show that the first part goes to zero in probability and derive the large sample distribution of the second part using the functional delta method \citep*[Theorem 2.1]{shapiro1991asymptotic}.

The bound on the first part is perfectly analogous to the bound in the proof Proposition \ref{consistencyprop}, using Theorem \ref{maintheorem2}. The only difference is that we have to account for the term $\sqrt{n}$, which affects the probability term as follows:
\begin{align*}
&P^*\left(\sqrt{n}\sup_{m_{h,n}\in\mathcal{M}_{h,n}}\left\lvert \mathbb{P}_{l,0}m_{h,n} - P_0m_{h,n}\right\rvert>t\right)\\
=&P^*\left(\sqrt{l}\sup_{m_{h,n}\in\mathcal{M}_{h,n}}\left\lvert \mathbb{P}_{l,0}m_{h,n} - P_0m_{h,n}\right\rvert>t\sqrt{\frac{l}{n}}\right).
\end{align*}
The tail bound derived in Theorem 2.14.10 of \citet*{wellner2013weak} implies that for every $\delta>0$ and $t>0$ 
\begin{equation}\label{tailbound}
P^*\left(\sqrt{l}\sup_{m_{h,n}\in\mathcal{M}_{h,n}}\left\lvert \mathbb{P}_{l,0}m_{h,n} - P_0m_{h,n}\right\rvert>\sqrt{\frac{l}{n}}t\right)\leq L \exp\left(D\left(\sqrt{\frac{l}{n}}t\right)^{U+\delta}\right)\exp\left(-\frac{2lt^2}{n}\right),
\end{equation}
where $U\coloneqq \frac{6\beta-1}{\beta(2\beta+1)}$ and the constants $L$ and $D$ depends on the constant $C$, $\beta$, and $\delta$. 
Since $\beta>\frac{1}{2}$, there exists a small enough $\delta>0$ such that $U+\delta <2$. Therefore, one can bound the term further by 
\[L \exp\left(D\left(\sqrt{\frac{l}{n}}t\right)^{U+\delta}\right)\exp\left(-\frac{2lt^2}{n}\right)\leq \bar{C}\exp\left(-\bar{D}\frac{lt^2}{n}\right),\]
for constants $0<\bar{C}<+\infty$ and (potentially small) $0<\bar{D}<2$.

Therefore, just as in the proof of Proposition \ref{consistencyprop}, we can bound 
\begin{multline*}
P^*\left(\sqrt{n}\left(\hat{\tilde{V}}_{*,l,\kappa}(\hat{F}_{Y,X|Z;n}) - V_{*}(\hat{F}_{Y,X|Z;n})\right)\right) \\
\leq\sqrt{n}\left[\sup_{w\in[0,1]}\omega'_{Y_x(w)}(2^{-\kappa+1})\omega'_{X_z(w)}(2^{-\kappa+1})C_{RN}+\sup_{w\in[0,1]}K\left(\omega'_{Y_x(w)}(2^{-\kappa+1})\right)^\alpha\right]+\bar{C}\exp(-\bar{D}\frac{lt^2}{n}),
\end{multline*}
for constants $0<\bar{C}<+\infty$ and (potentially small) $0<\bar{D}<2$.

Thus, the requirements for the first term to go to zero are
\begin{align*}
&\sqrt{n}\left[\sup_{w\in[0,1]}\omega'_{Y_x(w)}(2^{-\kappa+1})\omega'_{X_z(w)}(2^{-\kappa+1})C_{RN}+\sup_{w\in[0,1]}K\left(\omega'_{Y_x(w)}(2^{-\kappa+1})\right)^\alpha\right]\to0\qquad\text{and}\\
&\frac{n}{l}\to 0 
\end{align*}
as $n\to\infty$.

Now we derive the large sample distribution of the second term 
\[\sqrt{n}(V_{*}(\hat{F}_{Y,X|Z=z;n})-V_{*}(F_{Y,X|Z=z})).\]
We want to apply Theorem 4.13 in \citet*{bonnans2013perturbation}. For this, we need to compute the directional derivative $\delta_{F_{Y,X|Z}} m_h(w;F)$ of
\[m_h(w;F_{Y,X|Z})\coloneqq f(\tilde{Y}^\kappa_x(w),x)h(w)+\lambda\left\|F_{Y,X|Z=z} - \mathds{1}_{[0,\cdot]\times[0,\cdot]}(\tilde{Y}_x^\kappa(w),\tilde{X}_z^\kappa(w))h(w)\right\|^2_{L^2([0,1]^2)}\] at $F_{Y,X|Z}(y,x)$ in direction $F(y,x)\in\mathcal{F}_{Y,X|Z}$:
\begin{align*}
\delta_{F_{Y,X|Z}} m_h(w;F) \coloneqq& \left.\frac{d}{dt}\right\rvert_{t=0} m_h(w;F_{Y,X|Z}-F)\\
=&\int2\lambda\left\langle F, F_{Y,X|Z}-\mathds{1}_{[0,\cdot]\times[0,\cdot]}(Y_{X_z(w)}(w),X_z(w))h(w)\right\rangle dP_0(w),
\end{align*}
where we used the dominated convergence theorem in combination with the fact that $h$ is bounded and the integrals are over $[0,1]$, and where $\left\langle f,g\right\rangle$ is the inner product in $L^2([0,1])$, i.e.
\[\left\langle f,g\right\rangle\coloneqq\int_{[0,1]^2}f(y,x)g(y,x)dydx.\]

Both $\delta_{F_{Y,X|Z}}m_h(w;F)$ and $m_h(w;\cdot)$ are clearly continuous in $F$ and $F_{Y,X|Z}$. Therefore, under Assumption \ref{nonemptinessass}, which implies the inf-compactness assumption of \citet*[p.~272]{bonnans2013perturbation} in our case, it follows from Theorem 4.13 in \citet*{bonnans2013perturbation} that the directional derivative $\delta_{F_{Y,X|Z}}V_{*}(F)$ satisfies
\[\delta_{F_{Y,X|Z}}V_{*}(F) = \min_{h\in\mathcal{S}(F_{Y,X|Z})}\int2\lambda\left\langle F, F_{Y,X|Z}-\mathds{1}_{[0,\cdot]\times[0,\cdot]}(Y_{X_z(w)}(w),X_z(w))h(w)\right\rangle dP_0(w),\]
where $\mathcal{S}(F_{Y,X|Z})$ is the solution set of $V_{*}(F_{Y,X|Z})$.

Now by Donsker's theorem \citep*[Theorem 19.3]{van2000asymptotic}, it holds that $\sqrt{n}(\hat{F}_{Y,X|Z=z;n}-F_{Y,X|Z=z})\rightsquigarrow \mathbb{G}_{F_{Y,X|Z=z}}$, where $\mathbb{G}_{F_{Y,X|Z=z}}$ is a Brownian bridge with covariance function 
\[\text{Cov}_{\mathbb{G}_{F_{Y,X|Z=z}}} = F_{Y,X|Z=z}(\min\{y,y'\},\min\{x,x'\})-F_{Y,X|Z=z}(y,x)F_{Y,X|Z=z'}(y',x')\] for all $(y,x),(y',x')\in[0,1]^2$ and $z\in[0,1]$.
Therefore, applying the functional delta method \citep*[Theorem 2.1]{shapiro1991asymptotic} directly yields that
\[\sqrt{n}(V_*(\hat{F}_{Y,X|Z=z;n})-V_{*}(F_{Y,X|Z=z}))\rightsquigarrow \delta_{F_{Y,X|Z}}V_{*}(\mathbb{G}_{F_{Y,X|Z=z}}).\] 
Putting both terms together gives
\[\sqrt{n}(\hat{\tilde{V}}_{*,l,\kappa}(\hat{F}_{Y,X|Z=z;n})-V_{*}(F_{Y,X|Z=z}))\rightsquigarrow \delta_{F_{Y,X|Z}}V_{*}(\mathbb{G}_{F_{Y,X|Z=z}}).\] The analogous result holds for $V^*$.
\end{proof}

\end{document}